\documentclass[reqno]{amsart}
\usepackage[lite]{amsrefs}
\usepackage{amssymb}
\usepackage{bm}
\usepackage[all,cmtip]{xy}

\usepackage{hyperref}

\usepackage[hmargin= 1.3 in,vmargin=1 in]{geometry}

\newcommand{\T}{\mathbb{T}}

\newcommand{\R}{\mathbb{R}}
\newcommand{\C}{\mathbb{C}}
\newcommand{\Z}{\mathbb{Z}}
\newcommand{\N}{\mathbb{N}}

\newcommand{\eps}{\varepsilon}
\newcommand{\ii}{\mathrm{i}}
\newcommand{\mcV}{\mathcal{V}}
\newcommand{\vphe}{\vph_\eps}
\newcommand{\vphes}{\vph^*_\eps}

\newcommand{\mcN}{\mathcal{N}}

\newcommand{\mcA}{\mathcal{A}}
\newcommand{\mcF}{\mathcal{F}}

\newcommand{\mfh}{\mathfrak{h}}

\newcommand{\mcS}{\mathcal{S}}
\newcommand{\mbt}{\mathbf{t}}
\newcommand{\mbx}{\mathbf{x}}
\newcommand{\mby}{\mathbf{y}}

\newcommand{\vph}{\varphi}
\newcommand{\WW}{\mathrm{W}}

\newcommand{\dd}{\mathrm{d}}

\newcommand{\e}{\mathrm{e}}

\newcommand{\mfBp}{\mathfrak{B}_p}
\newcommand{\mfhp}{\mathfrak{h}^{(p)}}
\newcommand{\mcCp}{\mathcal{C}_p}
\newcommand{\mc}{\mathcal}
\newcommand{\mf}{\mathfrak}
\newcommand{\om}{\omega}
\newcommand{\Th}{\Theta}

\newcommand{\mbs}{\mathbf{s}}

\newcommand{\mcL}{\mathcal{L}}

\numberwithin{equation}{section}

\theoremstyle{plain}

\theoremstyle{definition}

\theoremstyle{remark}

\usepackage{graphicx, color}

\definecolor{darkred}{rgb}{0.9,0,0.3}
\definecolor{darkblue}{rgb}{0,0.3,0.9}

\newcommand{\vertiii}[1]{{\left\vert\kern-0.25ex\left\vert\kern-0.25ex\left\vert #1 
		\right\vert\kern-0.25ex\right\vert\kern-0.25ex\right\vert}}

\theoremstyle{plain} 
\newtheorem{theorem}{Theorem}[section]
\newtheorem*{theorem*}{Theorem}
\newtheorem{lemma}[theorem]{Lemma}
\newtheorem*{lemma*}{Lemma}
\newtheorem{corollary}[theorem]{Corollary}
\newtheorem*{corollary*}{Corollary}
\newtheorem{proposition}[theorem]{Proposition}
\newtheorem*{proposition*}{Proposition}

\newtheorem*{conjecture*}{Conjecture}

\theoremstyle{definition} 
\newtheorem{definition}[theorem]{Definition}
\newtheorem*{definition*}{Definition}

\newtheorem*{example*}{Example}
\newtheorem{remark}[theorem]{Remark}
\newtheorem*{remark*}{Remark}
\newtheorem{assumption}[theorem]{Assumption}
\newtheorem*{assumption*}{Assumption}

\begin{document}
\title[Derivation of focusing $\Phi_1^6$ with rough cut-off]{A perturbative microscopic derivation of the focusing $\Phi^6_1$ measure with rough cut-off}
\author{Shahnaz Farhat}
\address{Department of Mathematics, University of T\"ubingen, 72076 T\"ubingen, Germany.}
\email{shahnaz.farhat@uni-tuebingen.de}
\author{Andrew Rout}
\address{Dipartimento di Matematica, Politecnico di Milano, P.zza Leonardo da Vinci 32, 20133, Milano, Italy.}
\email{andrewjames.rout@polimi.it}

\begin{abstract}
We give a derivation of the Gibbs measure for the focusing nonlinear Schr\"odinger equation (NLS) on the circle with rough cut-off. This extends earlier work by Sohinger and the second author, which proved analogous results for smooth cut-offs. Our proof is based on the perturbative expansion developed by Fr\"ohlich, Knowles, Schlein, and Sohinger in \cite{FKSS17}, and provides an alternative proof of the recent derivation given in \cite{LNZ26} by L\"u, Nam, and Zhu. To prove convergence of the explicit terms, we employ a Wigner measure approach and an inductive argument to overcome the lack of smoothness for the cut-off. In particular, we give a derivation of the Gibbs measure for the focusing quintic NLS with the optimal cut-off from \cite{OST22}.
\end{abstract}
\maketitle
\section{Introduction}
Suppose that $\mfh$ is a Hilbert space and $H $ is an associated Hamiltonian function. The {\it Gibbs measure} is a generalisation of the canonical ensemble that is a probability measure on $\mfh$ formally defined by
\begin{equation}
\label{formal_defocusing_Gibbs}
\dd \mu^{\mathrm{Gibbs}}(\vph) := \frac{1}{z_{\mathrm{Gibbs}}} \e^{-H(\vph)} \, \dd \varphi.
\end{equation}
Here $z_{\mathrm{Gibbs}}$ is the normalisation constant, or {\it partition function}, that  makes $\dd \mu^{\mathrm{Gibbs}}$ a probability measure, and $\dd \varphi$ is the formally defined Lebesgue measure on $\mfh$ (which is ill-defined when $\mfh$ is infinite dimensional). The construction of such measures was rigorously studied in constructive quantum field theory, see for example \cite{GJ81,Nel73,Nel73b,Sim74}. These measures were further studied in \cite{BG20,BTT13,BS96,CFL16,GH21,LRS,OST22} and the references within. The study of such measures as invariant measures for Hamiltonian PDEs was initiated by Bourgain in \cite{Bou94}, with some preliminary results known by Zhidkov \cite{Zhi91}, and further studied in \cite{Bou96,Bou97}. The invariance of such measures can be used to construct global solutions to PDEs with rough initial conditions. For expository works on this subject, we direct the reader to \cite{BTT18,NS19,OT18}. We also direct the reader to \cite{AFS24,AS23,Bri22,Bri24,BDNY24,DNY22,DNY24,DR23,OTT24,RSTW25} and the references within for more recent developments.

If $H$ is not positive definite, the Gibbs measure may not always be normalisable (since $z_\mathrm{Gibbs}$ may be infinite). When $H$ is not positive definite, we will say that we are in the {\it focusing} case. In this case one considers the modification given by
\begin{equation}
\label{focusing_Gibbs_formal}
\dd \mu_{\mathrm{Gibbs}}^\chi(\vph) := \frac{1}{z^\chi_{\mathrm{Gibbs}}} \e^{-H(\vph)} \chi(\|\vph\|_{L^2(\mfh)}^2)\, \dd \varphi,
\end{equation}
where $\chi$ is a suitable cut-off function and $z^\chi_{\mathrm{Gibbs}}$ is the normalisation constant which makes \eqref{focusing_Gibbs_formal} a probability measure. Measures of this form were studied in \cite{ARS24,Bou94,Bou96,Bou97,LRS,LW22,Nik25,OST22,OTT24,RSTW25} and the references therein. We also mention \cite{DR23,DRTW23,OQ13}, which studied the canonical ensemble, where one conditions on the mass of the system.

Of particular interest in this paper is the (local) focusing quintic nonlinear Schr\"odinger equation in one dimension, which exhibits a phase-transition. In particular, the measure \eqref{focusing_Gibbs_formal} can only be constructed for cut-offs with sufficiently small support, with blow-up for larger ones, see \cite{DRTW23,CFL16,LRS,OST22,RSTW25}. Therefore it is of particular interest in this setting to study \eqref{focusing_Gibbs_formal} with an indicator function $\chi \equiv \chi_{[0,K_{\mathrm{max}}]}$, where $K_\mathrm{max}$ is the maximal support for which there is not blow-up.

In this paper, we will study how the Gibbs measure given by \eqref{focusing_Gibbs_formal} associated to the one-dimensional NLS arises from many-body quantum mechanics. In particular, we consider the case where the cut-off in \eqref{focusing_Gibbs_formal} is an indicator function, and we can consider cut-offs up to the critical cut-off $K_{\mathrm{max}}$. This problem has been studied for measures of the form of \eqref{formal_defocusing_Gibbs} in \cite{FKSS17,FKSS18,FKSS20,FKSS22,FKSS23,FKSS22b,JR25,LNR15,LNR18,LNR21,NZZ25,Soh19}, and for measures with truncation as in \eqref{focusing_Gibbs_formal} in \cite{DR24,RS22,RS23}. Our result extends earlier work by the second author and Sohinger in \cite{RS22,RS23}, which considered the same problem with a smooth cut-off function, and gives an alternative derivation to the recent work by L\"u, Nam, and Zhu \cite{LNZ26}; see Remark \ref{LNZ_remark}.

\subsection{Setting and assumptions}
\label{section_setup_quintic}
In this section we clarify our setting. Set $\Lambda := \T \equiv \R/\Z = [-\frac{1}{2},\frac{1}{2})$. We define $\mfh := L^2(\Lambda) \equiv L^2$. We define the inner product on $\mfh$ as
\begin{equation*}
\langle f,g \rangle := \int_\mfh \dd x \, \overline{f(x)}g(x).
\end{equation*}
We fix the following assumption.
\begin{assumption}
\label{free_Hamiltonian_assumption}
Let $-\Delta$ denote the Laplacian on $\mfh$. Then we consider $h := -\Delta + 1$.
\end{assumption}
We note that in Assumption \ref{free_Hamiltonian_assumption}, one has that $h$ is a strictly positive self-adjoint densely-defined operator on $\mfh$. Moreover, one can write $h$ spectrally as
\begin{equation*}
\sum_{k \in \N} \lambda_k u_k u_k^*,
\end{equation*}
where $\lambda_k$ is the sequence of eigenvalues of $h$, and $u_k$ the corresponding eigenfunctions. We note that under Assumption \ref{free_Hamiltonian_assumption}, one has
\begin{equation*}
\mathrm{Tr}(h^{-p}) < \infty 
\end{equation*}
for any $p > \frac{1}{2}$. This is a consequence of the fact that
\begin{equation}
\label{eigenvalues}
\lambda_k = 4\pi^2 |k|^2 + 1.
\end{equation}
We now state the assumption on the interaction potential in our model.
\begin{assumption}
\label{interaction_potential_assumption}
Let $v:\Lambda \to \R$ be a positive even function such that $v \in L^\infty(\Lambda)$.
\end{assumption}
Finally, we fix a cut-off function $\chi$ with the following properties.
{\begin{assumption}
\label{cut-off_assumption}
Suppose that $\chi : \R \to [0,1]$ is an indicator function with
\begin{equation*}
\text{supp}(\chi) = [0,K],
\end{equation*}
for $K \leq K_{\mathrm{max}}$.
\end{assumption}}
\begin{remark}
The precise value of $K_{\mathrm{max}}$ (for the quintic local NLS) is determined by Proposition \ref{local_normalisability_lemma}. The existence of such an optimal constant was proved in \cite{OST22}, and this is the reason we focus on the proof for the quintic NLS.
\end{remark}
\subsection{General notation and conventions}
Throughout the paper, we will write $\N_0 = \{0,1,2,\ldots\}$ and $\mathbb{N} :=\{1,2,\ldots\}$. We will use $C$ to denote a generic positive constant that may change line-to-line. Moreover, if $C$ depends on the variables $a_1,\ldots,a_p$, we will write $C=C(a_1,\ldots,a_p)$. If $x \leq Cy$, we will sometimes write $x \lesssim y$, and if $x \lesssim C(a_1,\ldots,a_p) y$, we write $x \lesssim_{a_1,\ldots,a_p} y$. We will denote the indicator function of a set $A$ by $\chi_A$, where one defines
\begin{equation*}
\chi_A(x) = \begin{cases}
    1 \quad \text{ if } x \in A \\
    0 \quad \text{ if } x \notin A.
\end{cases}
\end{equation*}
We denote by $H^s(\Lambda)$ the $L^2$ based Sobolev space, and define
\begin{equation*}
\|f\|^2_{H^s(\Lambda)} := \sum_{k \in \Z} (1 +|k|^2)^{s} |\hat{f}(k)|^2.
\end{equation*}
Here we adopt the convention that
\begin{equation*}
\hat{f}(k) = \int \mathrm{d}x \, f(x) \, \e^{-2\pi \ii k x}.
\end{equation*}
We denote by $\mfhp$ the subset of $L^2(\Lambda^p)$ which satisfies
\begin{equation*}
\{f \in L^2(\Lambda^p) : f(x_1,\ldots,x_p) = f(x_{\pi 1}, \ldots, x_{\pi p})\}
\end{equation*}
for permutation $\pi \in S_p$. We identify closed linear operators on $\mfhp$ with their corresponding Schwartz kernel, which we denote $\xi(x_1,\ldots,x_p;y_1,\ldots,y_p) \equiv \xi(\mbx;\mby)$.

Let $\mc{H}$ be a separable Hilbert space. We denote the identity operator on $\mc{H}$ by $I$, and the set of bounded linear operators on $\mc{H}$ by $\mc{L}(\mc{H})$. For $p \in [1,\infty]$, we define
\begin{equation*}
\|\mc{B}\|_{\mc{L}^p(\mc{H})} := \begin{cases}
(\mathrm{Tr}_{\mc{H}} |\mc{B}|^p)^{1/p} \quad \text{ if } p< \infty \\
\sup \mathrm{spec} |\mathcal{B}| \quad \, \, \, \,\text{ if } p = \infty.
\end{cases}
\end{equation*}
Here $|\mc{B}| := \sqrt{B^*B}$ and we denote by $\mathrm{spec}$ the spectrum of an operator. Then we denote by the Schatten space $\mc{L}^p(\mc{H})$ the set of operators for which the norm $\|\cdot\|_{\mc{L}^p(\mc{H})}$ is finite. We will also denote by $\mc{L}^\infty(\mc{H}_1;\mc{H}_2)$ the space of compact operators from $\mc{H}_1$ to $\mc{H}_2$. Where clear, we will omit the spaces from norms.

\subsection{Classical setting}
\subsubsection{Quintic Hartree equation}
We now state the classical model that we consider. We consider the following nonlocal quintic NLS
\begin{align}
\nonumber
\ii \partial_t \vph + (\Delta -1 )\vph = &-\frac{2}{3} \int_{\Lambda^2} \dd y \,  \mathrm{d} z \,  v(x-y) v(y-z) |\vph(y)|^2 |\vph(z)|^2 \vph(x) \\
\label{quintic_Hartree}
& -\frac{1}{3} \int_{\Lambda^2} \dd y \,  \mathrm{d} z \,  v(x-y) v(x-z) |\vph(y)|^2 |\vph(z)|^2 \vph(x),
\end{align}
which is called the {\it (quintic) Hartree equation}. We note that by formally taking $v = \delta$ in \eqref{quintic_Hartree}, one recovers the local focusing quintic NLS. We adopt the same Poisson structure as in \cite[Section 1]{RS23}, which means that the Hamiltonian associated to \eqref{quintic_Hartree} is given by
\begin{equation*}
H(\vph) = \int_\Lambda \mathrm{d}x \, |\nabla \vph(x)|^2 + |\vph(x)|^2 
- \frac{1}{3} \int_{\Lambda^3} \dd x \, \dd y \, \dd z \, v(x-y)v(x-z) |\vph(x)|^2 |\vph(y)|^2 |\vph(z)|^2. 
\end{equation*}
\begin{remark}
The local well-posedness of \eqref{quintic_Hartree} for $v=\delta$ was proved by Bourgain in \cite{Bou93}. The corresponding local well-posedness for \eqref{quintic_Hartree} for $v \in L^1$ was proved in \cite[Section 2]{RS23}.
\end{remark}
\subsubsection{Gibbs measure and classical Hamiltonian}
Take $k \in \N$, and let $\mu_k$ be independent standard complex Gaussian measures defined by $\mu_k := \frac{1}{\pi}\e^{-|z|^2} \mathrm{d}z$, where $\dd z$ is the Lebesgue measure on  $\C$. We consider the measure
\begin{equation}
\label{Wiener_measure_defintion}
\mu_0 := \bigotimes_{k \in \N} \mu_k
\end{equation}
on the product space $(\C^\N, \mathcal{G},\mu_0)$. We call $\mu_0$ either the {\it (Gaussian) free field} or the {\it Wiener measure}. We denote elements of $\C^\N$ as $\om \equiv (\om_k)$. The {\it classical free field} is defined via
\begin{equation}
\label{element_free_field}
\vph := \sum_{k \in \N} \frac{\om_k}{\sqrt{\lambda_k}} u_k.
\end{equation}
Since $\mathrm{Tr}(h^{-1}) < \infty$, it follows that $\vph \in H^{\frac{1}{2}-\eta}$-$\mu_0$ almost surely for any $\eta > 0$. We recall that one can also view $\mu_0$ as a probability measure on $H^s(\Lambda)$ by considering a suitable pushforward; we direct the reader to \cite[Remark 1.3]{FKSS17}. We note that $\mu_0$ is the Gaussian measure on $H^s$ for $s< 1/2$ whose covariance is given by $h^{-1}$. 

We define the {\it mass} of the free field \eqref{element_free_field} by
\begin{equation*}
\mcN \equiv \mcN^\om := \|\varphi\|_{L^2(\Lambda)}^2.
\end{equation*}
We note that this is finite almost surely. We also define the {\it classical interaction} as
\begin{equation}
\label{classical_interaction}
\mc{V} \equiv \mc{V}^\om := \frac{1}{3}\int \dd x \, (v*|\vph|^2)^2(x)|\vph(x)|^2 = \frac{1}{3}\int \dd x \, \mathrm{d} y \, \mathrm{d} z \, v(x-y)v(x-z) |\varphi(x)|^2 |\varphi(y)|^2 |\varphi(z)|^2,
\end{equation}
where we recall that $v$ satisfies Assumption \ref{interaction_potential_assumption}. Again, we note that the Sobolev embedding theorem implies that this is finite almost surely. We define the {\it classical free Hamiltonian}
\begin{equation}
\label{classical_free_Hamiltonian}
H_0 := \int \dd x \, \mathrm{d} y \, h(x;y) \overline{\vph(x)}\vph(y).
\end{equation}
The {\it classical Hamiltonian} is then given by
\begin{equation*}
H \equiv H^\om := H_0 - \mc{V}
\end{equation*}
If $\xi$ is a closed densely-defined operator on $\mfhp$, we consider the random variable defined by
\begin{equation}
\label{classical_random_variable_definition}
\Theta(\xi) \equiv \Theta^\om(\xi) := \int \dd x_1 \ldots \mathrm{d} x_p \, \mathrm{d} y_1 \ldots \mathrm{d} y_p \, \xi(\mbx,\mby) \prod_{i=1}^p\overline{\vph(x_i)} \prod_{i=1}^p\vph(y_i).
\end{equation}
We note that by considering the operator $V$ on $\mfh^{(3)}$ which acts as multiplication by 
\begin{equation}
\label{three_body_multiplication}
\frac{1}{3}\left(v(x_1 - x_2)v(x_2-x_3) + v(x_2 - x_3)v(x_1-x_3) + v(x_1 - x_2)v(x_1-x_3)\right),
\end{equation}
one can write
\begin{equation}
\label{classical_Hamiltonian_rewrite}
H = \Theta(h) - \frac{1}{3}\Theta(V).
\end{equation}
We direct the reader to \cite[(1.37)--(1.39))]{RS23} for more details on this computation. Recalling the cut-off function from Assumption \ref{cut-off_assumption}, we can now define the {\it Gibbs measure} as
\begin{equation}
\label{focusing_Gibbs_rigorous}
\dd \mu \equiv \dd \mu_{\mathrm{Gibbs}}^\chi := \frac{1}{z^\chi_{\mathrm{Gibbs}}} \e^{\mc{V}} \chi(\mcN) \, \dd \mu_0,
\end{equation}
where we have taken the {\it classical partition function} to be the normalisation constant
\begin{equation}
\label{classical_partition_function}
z \equiv z^\chi_{\mathrm{Gibbs}} := \int \e^{\mc{V}} \chi(\mcN) \, \dd \mu_0.
\end{equation}
\begin{remark}
Let us remark that the probability measure $\mu$ and the partition function are both well-defined by Proposition \ref{local_normalisability_lemma} and Lemma \ref{nonlocal_normalisability_lemma} for the local and nonlocal (with bounded interaction potential) respectively. In the case that we take $\|v\|_{L^1} = 1$, Lemma \ref{uniform_partition_function_corollary} ensures that we can take the constant $K$ in Assumption \ref{cut-off_assumption} to be the cut-off for the local measure even when considering the nonlocal $\mc{V}$ with $v$ as in Assumption \ref{interaction_potential_assumption}. We direct the reader to \cite{Bou94} for the construction of the measure in the local case, and to \cite[Section 2]{RS23} for the nonlocal case. This ensures we have a rigorous construction of the measure in \eqref{focusing_Gibbs_formal}.
\end{remark}
For a random variable $X = X^\om$, we define the {\it classical (Gibbs) state} $\rho \equiv \rho^\chi$ by
\begin{equation*}
\rho(X) := \mathbb{E}_\mu [X] = \frac{1}{z} \int X \e^{\mc{V}} \chi(\mcN) \, \dd \mu_0.
\end{equation*}
We can now define our main object in the study of convergence.
\begin{definition}
For $p \in \N_0$, the {\it classical $p$-particle correlation function} is the operator on $\mfhp$ whose kernel is equal to
\begin{equation*}
\gamma^p(x_1,\ldots,x_p;y_1,\ldots,y_p) \equiv \gamma^{p,\chi}(x_1,\ldots,x_p;y_1,\ldots,y_p) = \rho\left( \prod_{i=1}^p \overline{\vph(y_i)} \prod_{i=1}^p \vph(x_i) \right).
\end{equation*}
\end{definition}

\subsection{Quantum System}
We will consider {\it bosonic Fock space}
\begin{equation}
\mc{F} \equiv \mathcal{F}(\mfh) := \bigoplus_{p \geq 0} \mfhp. 
\end{equation}
For $f \in \mfh$, we define the {\it rescaled creation and annihilation operators} $\vph^*_\eps(f)$ and $\vph_\eps(f)$ through their action on the $p^{th}$ sector of Fock space. Let $\Psi = (\Psi^{(p)}(x_1,\ldots,x_p))_{p \in \N} \in \mathcal{F}$. Define
\begin{align*}
(\vph^*_\eps(f)\Psi)^{(p)}(x_1,\ldots,x_p) &:= \sqrt{\frac{\eps}{p}} \sum_{i=1}^p f(x_i) \Psi^{(p-1)}(x_1,\ldots,x_{i-1},x_{i+1},\ldots,x_p), \\
(\vph_\eps(f)\Psi)^{(p)}(x_1,\ldots,x_p) &:= \sqrt{\eps(p+1)} \int \dd x \,  f(x) \Psi^{(p+1)}(x, x_1,\ldots,x_p).
\end{align*}
The operators $\vph^*_\eps(f)$ and  $\vph_\eps(f)$ are adjoints of one another, and are both closed and densely-defined. They satisfy the following commutation relations
\begin{equation*}
[\vph_\eps(f),\vph^*_\eps(g)] = \eps \langle f,g \rangle, \quad [\vph_\eps(f),\vph_\eps(g)] = [\vph^*_\eps(f),\vph^*_\eps(g)] = 0.
\end{equation*}
We consider the scaled creation and annihilation operators to be operator-valued distributions, and we denote their kernels by $\vph_\eps^*(x)$ and $\vph_\eps(x)$. One can compute that formally the kernels satisfy the commutation relations given by
\begin{equation*}
[\vph_\eps(x),\vph^*_\eps(\tilde{x})] = \eps \delta(x-\tilde{x}), \quad [\vph_\eps(x),\vph_\eps(\tilde{x})] = [\vph^*_\eps(x),\vph^*_\eps(\tilde{x})] = 0,
\end{equation*}
so the quantum fields formally commute in the semiclassical limit $\eps \to 0$.

For $h$ as in Assumption \ref{free_Hamiltonian_assumption}, one defines the {\it free quantum Hamiltonian} via
\begin{equation}
\label{quintic_quantum_free_Hamiltonian}
H_{\eps,0} := \int \dd x \,  \mathrm{d}y \, h(x;y) \vph_\eps^*(x) \vphe(y).
\end{equation}
For $v$ as in Assumption \ref{interaction_potential_assumption}, we also define the {\it quantum interaction} as
\begin{equation*}
\mc{V}_\eps := \frac{1}{3}\int \dd x \, \mathrm{d} y \, \mathrm{d} z \, v(x-y) v(x-z) \vph_\eps^*(x) \varphi^*_\varepsilon(y) \varphi^*_\varepsilon(z) \varphi_\varepsilon(x) \varphi_\varepsilon(y) \varphi_\varepsilon(z).
\end{equation*}
Then the {\it quantum Hamiltonian} is given by
\begin{equation*}
H_\eps := H_{\eps,0} - \mc{V}_\eps.
\end{equation*}
Moreover, we note that for 
\begin{equation}
\label{n-body_Hamiltonian}
H_\eps^p = \eps \sum_{i=1}^p (-\Delta_i + 1) - \frac{\eps^3}{3} \sum_{\substack{i,j,k \\ i \neq j \neq k \neq i}}  v(x_i-x_j)v(x_i-x_k),
\end{equation}
one can write
\begin{equation*}
H_\eps = \bigoplus_{p=0}^{\infty} H_\eps^{p}.
\end{equation*}
Analogously to the classical case \eqref{classical_random_variable_definition}, one can lift an operator $\xi$ on $\mfhp$ to Fock space via the formula
\begin{equation*}
\Theta_\eps(\xi) := \int \mathrm{d}x_1 \ldots \mathrm{d}x_p \, \mathrm{d}y_1 \ldots \mathrm{d}y_p \,
 \xi(\mbx,\mby) \prod_{i=1}^p \vphes(x_i) \prod_{i=1}^p \vphe(y_i).
 \end{equation*}
Then as in \eqref{classical_Hamiltonian_rewrite}, one can write
\begin{equation*}
H_\eps = \Theta_\eps(h) - \frac{1}{3}\Theta_\eps(V),
\end{equation*}
where we recall the operator $V$ defined by \eqref{three_body_multiplication}. We remark that this quantum set-up has been used in other papers addressing three-body interactions, see for example \cite{TV26}, which treats the derivation of the (defocusing) dynamics, and the references therein.
 \begin{remark}
Let us note the the quantisation used in this paper can be rephrased in terms of Wick quantisation; see Section \ref{Sec:Wick_quantisation}. This will be used when utilising the theory of Wigner measures developed in \cite{AN08}.
\end{remark}

The {\it rescaled particle number} is given by
\begin{equation*}
\mcN_\eps := \int \mathrm{d}x \, \vphes(x)\vphe(x).
\end{equation*}
Recalling the cut-off function from Assumption \ref{cut-off_assumption}, we define the {\it (truncated) quantum Gibbs state} as
\begin{equation*}
\rho_\eps(\mc{A}) \equiv \rho_{\eps}^\chi(\mc{A}) := \frac{1}{Z_{\eps}^\chi} \mathrm{Tr}(\mc{A} \, \e^{-H_{\eps}}\chi(\mcN_\eps)),
\end{equation*}
for a Fock space operator $\mc{A}: \mc{F} \to \mathcal{F}$. Here we define the {\it quantum partition function} to be the normalisation constant
\begin{equation*}
Z_{\eps} \equiv Z_{\eps}^\chi := \mathrm{Tr}(\e^{-H_{\eps}}\chi(\mcN_\eps)).
\end{equation*}
At times, it will also be convenient to consider the {\it (untruncated) free Gibbs state} given by
\begin{equation*}
\rho_{\eps,0}(\mc{A}) := \frac{1}{Z_{\eps,0}}\mathrm{Tr}(\mc{A}\e^{-H_{\eps,0}}),
\end{equation*}
where we have introduced the {\it free quantum partition function}
\begin{equation*}
Z_{\eps,0} := \mathrm{Tr}(\e^{-H_{\eps,0}}).
\end{equation*}
We will also consider the {\it quantum relative partition function} defined as
\begin{equation}
\label{relative_qunatum_partition}
\mc{Z}_\eps \equiv \mc{Z}_\eps^\chi := \frac{Z_\eps}{Z_{\eps,0}}.
\end{equation}
\begin{remark}
We note that $\rho_{\varepsilon,0}(\mathcal{N}_\varepsilon) \sim 1$, so the expected { (non-scaled)} number of particles in the free untruncated Gibbs state grows like $\varepsilon^{-1}$. So taking $\varepsilon^2 \sim \lambda$, one can rewrite \eqref{n-body_Hamiltonian} as
\begin{equation*}
H^{(p)}_\eps = \eps\left(\sum_{i=1}^p (-\Delta + 1) - \frac{\lambda}{3} \sum_{\substack{i,j,k \\ i \neq j \neq k \neq i}}^p  v(x_i-x_j)v(x_i-x_k)\right).
\end{equation*}
In particular, the $\eps$ parameter makes both of the terms inside the bracket of order $p$. So one can interpret the semiclassical limit $\eps \to 0$ as a mean-field limit. We direct the reader to the introduction of \cite{FKSS17} for further details.
\end{remark}
We can now introduce the quantum analogue of the $p$-particle correlation functions.
\begin{definition}
For $p \in \N_0$, the {\it quantum $p$-particle correlation function} is the operator on $\mfhp$ whose kernel is given by
\begin{equation*}
\gamma^{p}_\eps(x_1,\ldots,x_p;y_1,\ldots,y_p) \equiv \gamma^{p,\chi}_\eps(x_1,\ldots,x_p;y_1,\ldots,y_p) := \rho_{\eps}\left(\prod_{i=1}^p \vphes(y_i) \prod_{i=1}^p \vphe(x_i)\right)
 \end{equation*}
\end{definition}
\subsection{Main Results}
Our first result is for interaction potentials which satisfy Assumption \ref{interaction_potential_assumption}.

\begin{theorem}[Bounded interaction potentials]
\label{bounded_theorem}
Let $v$ be as in Assumption \ref{interaction_potential_assumption} and suppose that $\chi$ is as in Assumption \ref{cut-off_assumption}. Let $p \in \N$. Then we have
\begin{equation*}
\lim_{\eps \to 0} \|\gamma_\eps^p -\gamma^p\|_{\mc{L}^1(\mfhp)} = 0.
\end{equation*}
Moreover, recalling the definitions of $\mc{Z}_\eps$ and $z$ from \eqref{relative_qunatum_partition} and \eqref{classical_partition_function}, one has
\begin{equation*}
\lim_{\eps \to 0} \mc{Z}_\eps = z.
\end{equation*}
\end{theorem}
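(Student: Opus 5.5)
The plan follows the perturbative strategy of \cite{FKSS17} as adapted to truncated states in \cite{RS22,RS23}. The rough indicator cut-off is handled by a Wigner measure argument rather than by the smooth functional calculus (Fourier representation of $\chi$) used there. First I introduce the interaction parameter $z\in\C$, writing $H_\eps(z)=H_{\eps,0}-z\mc{V}_\eps$. I then consider the functions $A^\xi_\eps(z):=\mathrm{Tr}(\Theta_\eps(\xi)\e^{-H_\eps(z)}\chi(\mcN_\eps))/Z_{\eps,0}$ and their classical counterparts $A^\xi(z):=\int\Theta(\xi)\e^{z\mc{V}}\chi(\mcN)\,\dd\mu_0$, for $\xi$ a bounded operator on $\mfhp$ (with $p=0$, $\xi=1$ giving the partition functions). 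Because $\chi$ restricts to $\mcN_\eps\le K$, and $\mc{V}_\eps\lesssim\|v\|_{L^\infty}^2\mcN_\eps^3$ on each sector, both functions are entire in $z$. They also satisfy the uniform bound $|A^\xi_\eps(z)|\le C\|\xi\|\,\e^{C|z|K^3}$ for $\mathrm{Re}\,z\ge0$, and are bounded for $\mathrm{Re}\,z\le0$. The statement then reduces to $A^\xi_\eps(1)\to A^\xi(1)$ uniformly in $\|\xi\|\le1$. Indeed, duality between $\mc{L}^1$ and compact operators, together with the uniformly bounded traces $\mathrm{Tr}\gamma^p_\eps\lesssim K^p$ (from the cut-off), upgrades weak-$*$ convergence to $\mc{L}^1$ convergence. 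The case $\xi=1$ gives $\mc{Z}_\eps\to z$.

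Next I perform a Duhamel expansion of $\e^{-H_\eps(z)}$ in powers of $z$ up to order $M$ and use Borel summation, or alternatively a Taylor expansion with remainder plus analyticity. This writes $A^\xi_\eps(z)=\sum_{m<M}a^\xi_{\eps,m}z^m+R^\xi_{\eps,M}(z)$. Since the cut-off bounds the particle number, the remainder is bounded by $\frac{(C|z|K^3)^M}{M!}$ uniformly in $\eps$. This is a crude but sufficient estimate, and it replaces the delicate Borel-summable bounds needed in the untruncated case. The same holds classically. Hence it suffices to prove convergence of each explicit coefficient, $a^\xi_{\eps,m}\to a^\xi_m$, where $a^\xi_m=\frac{1}{m!}\int\Theta(\xi)\mc{V}^m\chi(\mcN)\,\dd\mu_0$.

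The main obstacle is the convergence of these explicit terms, because $\chi$ is discontinuous. In \cite{RS23} one writes $\chi(\mcN_\eps)=\int\hat\chi(t)\e^{\ii t\mcN_\eps}\,\dd t$ and uses Wick's theorem with a modified covariance, which requires decay of $\hat\chi$ that fails for an indicator. Instead, I use that $\mcN_\eps$ commutes with $H_{\eps,0}$ and $\mc{V}_\eps$. The time-ordered Duhamel terms can therefore be rewritten, up to $O(\eps)$ commutator errors controlled by the particle-number bound, as $\rho_{\eps,0}(\Theta_\eps(\xi)\,\mc{V}_\eps^m\chi(\mcN_\eps))/m!$ in Wick quantised form. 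The free states $\rho_{\eps,0}$ have the Wigner measure $\mu_0$ by \cite{AN08}, and the required uniform moment bounds $\rho_{\eps,0}(\mcN_\eps^k)\lesssim_k1$ hold. Applying the Wigner measure theory for Wick observables with polynomial growth, I get convergence of $\rho_{\eps,0}(\Theta_\eps(\xi)\mc{V}_\eps^m f(\mcN_\eps))$ for continuous bounded $f$. For the indicator, I sandwich $\chi$ between continuous functions $f_\delta^-\le\chi\le f_\delta^+$ differing only on $[K,K+\delta]$, after splitting $\Theta(\xi)\mc{V}^m$ into positive and negative parts (or using positivity of $\mc{V}$ for $\xi\ge0$ and then linearity). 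The error is then controlled by $\mu_0(\mcN\in[K,K+\delta])\to0$, since $\mcN$ has a continuous law under $\mu_0$ as an infinite sum of independent exponentials. I expect an inductive argument in the number of factors $m$ (and in $p$) to be needed for the non-positive pieces, bounding each by lower-order already-converged quantities via Cauchy--Schwarz.

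Finally, I combine the $M$-truncation with the remainder bound, let $\eps\to0$ and then $M\to\infty$, and set $z=1$. This gives both limits in Theorem \ref{bounded_theorem}. Dividing the quotients is legitimate because $z>0$, since $\mu_0(\mcN\le K)>0$.
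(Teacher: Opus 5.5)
Your treatment of the explicit terms has a gap. You assert that the time-ordered Duhamel term equals $\rho_{\eps,0}(\Theta_\eps(\xi)\mcV_\eps^m\chi(\mcN_\eps))/m!$ up to $O(\eps)$ commutator errors controlled by the particle-number bound. That reordering requires commuting $\mcV_\eps$ and $\Theta_\eps(\xi)$ through the heat factors $\e^{-tH_{\eps,0}}$, which means controlling
\[
[\e^{-tH_{\eps,0}},\mcV_\eps]=-\int_0^t \e^{-\sigma H_{\eps,0}}\,[H_{\eps,0},\mcV_\eps]\,\e^{-(t-\sigma)H_{\eps,0}}\,\dd\sigma,\qquad [H_{\eps,0},\mcV_\eps]=\frac{\eps}{3}\,\Theta_\eps\bigl([-\Delta_{\mathbf{x}},V]\bigr),
\]
where formally $[-\Delta_{\mathbf{x}},V]=-(\Delta_{\mathbf{x}}V)-2\nabla_{\mathbf{x}}V\cdot\nabla_{\mathbf{x}}$ on $\T^3$. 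This symbol contains derivatives of $v$, which need not exist for $v\in L^\infty$. It also contains a first-order differential operator in the particle variables, so it is not bounded by any power of $\mcN_\eps$. The cut-off does not help: $H_{\eps,0}$ is unbounded on $\{\mcN_\eps\le K\}$, and its free expectation grows like $\eps^{-1/2}$. The same objection applies to $[H_{\eps,0},\Theta_\eps(\xi)]=\eps\,\Theta_\eps([\sum_{j=1}^p(-\Delta_{x_j}),\xi])$ for general $\xi\in\mathcal{L}^2(\mfh^{(p)})$. So no $O(\eps)$ reordering estimate of this kind is available. Your Wigner-measure and sandwich arguments therefore only reach untime-ordered products, not the actual coefficients $\alpha^\xi_{\eps,m}$.

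The paper avoids this by never moving $\mcV_\eps$ or $\Theta_\eps(\xi)$. Instead it identifies the Wigner measure of the trace-class integrand $\rho_{\eps,m}(\mbt)$ by induction on $m$ (Proposition~\ref{convergence_proposition}).
\begin{itemize}
\item When testing against $W_\eps(f)$, only the Weyl operator is commuted past $\e^{-(1-t_1)H_{\eps,0}}$. For smooth $f$ this commutator is $\eps W_\eps(f)$ times a Wick operator of degree at most one, bounded by $\eps\|f\|_{H^2}(\mcN_\eps+1)^{1/2}$; general $f$ follows by density (Lemmas~\ref{free_number_bound} and \ref{Weyl_commutator_bound_lemma}).
\item Cyclicity of the trace then moves the heat factor to the other end, giving $\mathrm{Tr}(W_\eps(f)\mcV_\eps\rho_{\eps,m-1}(\mathbf{s}))$.
\item Lemma~\ref{Wigner_measure_V_lemma} then peels off one factor of $\mcV$.
\end{itemize}
The rough cut-off enters only in the base case, via \cite[Corollary 3.8]{AN11}.

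Your appeal to Wigner-measure convergence for polynomial symbols also needs more than uniform $\mcN_\eps$-moments, because $\mcV$ and $I_p$ are not compact symbols. The paper supplies the missing tightness, and the uniformity in $\xi$, through uniform bounds on $(H^{(s)}_{\eps,0}+1)^p$ for $s<1/2$ (Proposition~\ref{p-Hamiltonian_uniform_proposition}, Lemma~\ref{theta_convergence_lemma}, Corollary~\ref{quantum_tail_bound}).

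Your sandwich step, by contrast, is sound and needs no positivity. Every factor of the $m$-th term commutes with $\mcN_\eps$. The Schatten--H\"older argument of Lemma~\ref{explicit_terms_Hamiltonian_lemma} therefore bounds the contribution of $\chi_\delta-\chi$ by $C\|\xi\|\,\rho_{\eps,0}(\chi_{[K,K+\delta]}(\mcN_\eps))$. Since $\mcN_\eps\to\mcN$ in law under the free state, the $\limsup$ of this is at most $C\|\xi\|\,\mu_0(K\le\mcN\le K+\delta)$. Combined with a smooth-cut-off result for the time-ordered terms, as in \cite{RS23}, this would give a legitimate alternative route. It cannot, however, rest on the reordering you propose.
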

{\begin{remark}
    Let us remark that we can relax the assumption that $K \leq K_{\mathrm{max}}$ in Theorem \ref{bounded_theorem}. Indeed, we note that by Lemma \ref{nonlocal_normalisability_lemma} below, there is no optimal cut-off when $v \in L^\infty(\T)$. We state Theorem \ref{bounded_theorem} in this form because we will use it in this form in the proof of the $\delta$ potential.
\end{remark}}

By taking an approximation of the identity, we are able to prove a result for the local interaction, corresponding to $v = \delta$. We take $U : \R \to \R$ to be a continuous even function with {$U \geq 0$ and}
\begin{equation*}
\int \dd x \, U(x) = 1.
\end{equation*}
Define $v^N:\T \to \R$ by
\begin{equation}
\label{approximation_identity}
v^N(x) := N U(N[x]),
\end{equation}
where $[x]$ is the unique element of $(x+\Z)\cap \T$. Then $v^N$ satisfies Assumption \ref{interaction_potential_assumption} and converges weakly to $\delta$ with respect to continuous functions as $N \to \infty$. We denote any object $X$ like the partition function or the correlation functions associated to $v^N$ by $X^N$.
\begin{theorem}[Convergence for delta potential]
\label{delta_theorem}
Let $v= \delta$ and $v^N$ be as in \eqref{approximation_identity}. Suppose that $\chi$ is as in Assumption \ref{cut-off_assumption}. Then there is a sequence $N_\eps \to \infty$ as $\eps \to 0$ such that for any $p \in\N$, one has
\begin{equation*}
\lim_{\eps \to 0} \|\gamma_{\eps}^{p,N_\eps} -\gamma^p\|_{\mc{L}^1(\mfhp)} = 0.
\end{equation*}
Moreover, one has
\begin{equation*}
\lim_{\eps \to 0} \mc{Z}_\eps^{N_\eps} = z.
\end{equation*}
\end{theorem}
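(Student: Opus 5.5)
The proof is a diagonal argument that combines Theorem~\ref{bounded_theorem}, applied for each fixed $N$, with convergence of the classical objects as $N \to \infty$.

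\textbf{Step 1: classical convergence.} Write $\gamma^{p,N}$ and $z^N$ for the classical objects built from $v^N$, and $\gamma^p$, $z$ for those with $v=\delta$, i.e.\ $\mc{V}^\delta = \frac13\int |\vph|^6$. We first show
\begin{equation*}
\lim_{N\to\infty} z^N = z, \qquad \lim_{N \to \infty}\|\gamma^{p,N}-\gamma^p\|_{\mc{L}^1(\mfhp)} = 0 \quad \text{for every } p.
\end{equation*}
Since $\vph \in H^{1/2-\eta}$ almost surely, $|\vph|^2$ is continuous almost surely. Hence $v^N * |\vph|^2 \to |\vph|^2$ uniformly, and $\mc{V}^N \to \mc{V}^\delta$ $\mu_0$-almost surely.

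To pass to the limit in $\e^{\mc{V}^N}\chi(\mcN)$ we need uniform integrability. Young's inequality with $\|v^N\|_{L^1}=1$ gives $\mc{V}^N \le \mc{V}^\delta$ pointwise. Indeed, $(v^N*|\vph|^2)^2 \le v^N * |\vph|^4$ by Jensen, so
\begin{equation*}
\mc{V}^N \le \frac13\int |\vph|^2\, v^N*|\vph|^4 \le \frac13 \|\vph\|_{L^6}^6
\end{equation*}
by H\"older and Young. This is the content of Lemma~\ref{uniform_partition_function_corollary}. Consequently $\e^{\mc{V}^N}\chi(\mcN) \le \e^{\mc{V}^\delta}\chi(\mcN)$, and the right-hand side is integrable for $K\le K_{\max}$ by Proposition~\ref{local_normalisability_lemma}.

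For the correlation functions, the extra factors $\prod_{i}\overline{\vph(y_i)}\vph(x_i)$ are controlled using H\"older together with the fact that $\chi(\mcN)$ bounds the mass. One subtlety remains at $K = K_{\max}$: there we may only have integrability of $\e^{\mc V^\delta}\chi(\mcN)$, not of a slightly higher power. If so, I would bound $\prod_i |\vph(x_i)|^2$ in $\mc{L}^1$ by $\mcN^p \le K^p$. The trace norm of the rank-one operator $\vph^{\otimes p}(\vph^{\otimes p})^*$ is exactly $\mcN^p$, so dominated convergence applies in $\mc{L}^1$ directly. This also yields $\mc{L}^1$ convergence of the kernels (the classical Gibbs state is an average of rank-one operators), so Step~1 needs only the uniform bound $\e^{\mc V^N}\chi(\mcN)\mcN^p \le K^p \e^{\mc V^\delta}\chi(\mcN)$.

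\textbf{Step 2: diagonalisation.} For each fixed $N$, $v^N$ satisfies Assumption~\ref{interaction_potential_assumption}, so Theorem~\ref{bounded_theorem} gives $\gamma_\eps^{p,N}\to\gamma^{p,N}$ in $\mc{L}^1$ and $\mc{Z}^N_\eps \to z^N$ as $\eps\to0$. For each $N$, choose $\eps_N \downarrow 0$ such that for all $\eps \le \eps_N$,
\begin{equation*}
\sum_{p\le N}\|\gamma_\eps^{p,N}-\gamma^{p,N}\|_{\mc{L}^1} + |\mc Z_\eps^N - z^N| \le \frac1N.
\end{equation*}
Set $N_\eps := \max\{N : \eps_N \ge \eps\}$, which tends to infinity as $\eps \to 0$. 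For fixed $p$ and $\eps$ small enough that $N_\eps\ge p$, the triangle inequality gives
\begin{equation*}
\|\gamma_\eps^{p,N_\eps}-\gamma^p\|_{\mc{L}^1} \le \frac{1}{N_\eps} + \|\gamma^{p,N_\eps}-\gamma^p\|_{\mc{L}^1} \to 0,
\end{equation*}
and similarly $\mc{Z}_\eps^{N_\eps} \to z$. The sequence $N_\eps$ is chosen once and works for every $p$, as required.

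\textbf{Main obstacle.} The delicate point is Step~1 at the critical cut-off $K=K_{\max}$, where no room is left in the exponent. The pointwise domination $\mc V^N \le \mc V^\delta$ avoids this problem entirely, since it requires no Hölder gain. All the quantum difficulties (the rough cut-off and uniformity in $\eps$) are already absorbed by Theorem~\ref{bounded_theorem}, used only at fixed $N$.
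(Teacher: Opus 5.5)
Your proof is correct, and it is more direct than the paper's. The paper diagonalises \emph{before} passing to trace norm. Lemma \ref{state_convergence_delta} combines the uniform-in-$\xi\in\mc{C}_p$ convergence $\rho^N_\eps(\Theta_\eps(\xi))\to\rho^N(\Theta(\xi))$ (Corollary \ref{state_convergence_corollary} applied to $v^N$) with $\rho^N(\Theta(\xi))\to\rho(\Theta(\xi))$ uniformly on $\mc{C}_p$. It then reruns the duality argument of \cite{RS23} on this local analogue of Corollary \ref{state_convergence_corollary} to obtain $\mc{L}^1$ convergence.

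You instead take the trace-norm conclusion of Theorem \ref{bounded_theorem} as a black box and diagonalise directly in $\mc{L}^1$. The classical limit $N\to\infty$ is then also done in trace norm, by writing $\gamma^{p,N}$ as a Bochner average of the rank-one operators $|\vph^{\otimes p}\rangle\langle\vph^{\otimes p}|$ and applying dominated convergence. These operators have trace norm $\mcN^p\le K^p$ on the support of $\chi$. The key analytic input is the same in both proofs: the pointwise bound $0\le\mcV^N\le\frac13\|\vph\|_{L^6}^6$ of Lemma \ref{uniform_partition_function_corollary}, together with integrability of $\e^{\frac13\|\vph\|_{L^6}^6}\chi(\mcN)$ up to $K_{\max}$. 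As you note, this needs no room in the exponent.

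Your route has two advantages. It avoids a second use of the duality step. Your explicit choice of $\eps_N$ through $\sum_{p\le N}$ also makes it transparent that one sequence $N_\eps$ works for all $p$ at once, a point the paper leaves to ``a diagonal argument''. The paper's route instead keeps both theorems running through the same state-level machinery.

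One justification needs correcting. $H^{1/2-\eta}(\T)$ does not embed into $C(\T)$ in one dimension, so it does not give continuity of $|\vph|^2$. The almost sure convergence $\mcV^N\to\mcV^\delta$ still holds. For instance, $\vph\in H^{1/3}\subset L^6$ almost surely, so $v^N*|\vph|^2\to|\vph|^2$ in $L^3$, and H\"older's inequality gives convergence of $\int(v^N*|\vph|^2)^2|\vph|^2$. Alternatively, the free field is almost surely H\"older continuous by Kolmogorov's criterion. Note also that $\|v^N\|_{L^1(\T)}\le 1$ rather than $=1$, which is all your domination argument uses.
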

\begin{remark}
We make a number of remarks about Theorems \ref{bounded_theorem} and \ref{delta_theorem}.
\begin{itemize}
\item Let us emphasise that the main novelty in Theorems \ref{bounded_theorem} and \ref{delta_theorem} is the ability to treat cut-offs with no smoothness assumptions, in particular the indicator function $\chi_{[0,K]}$.
\item  We expect our results to hold for any superharmonic trap, $V(x) = |x|^{\sigma}$ for $\sigma > 2$, when $\Lambda := \R$ and for sufficiently small cut-offs. We do not comment further on this problem. 
\item Our method also extends easily to the case of the cubic NLS, allowing us also to extend the result of \cite{RS22} to rough cut-offs.
\item In \cite{RS22,RS23} theorems were given in both the time independent and time dependent settings. Our results readily extend to the time dependent setting on the torus, since the arguments do not depend on the smoothness of the cut-off.
\item A similar result to Theorem \ref{delta_theorem} for integrable interaction potentials holds under the assumption that the cut-off has sufficiently small support that the corresponding Gibbs measure is well defined. We direct the reader to \cite[Theorem 1.9]{RS23} for a precise statement of the corresponding results for smooth cut-offs.
\end{itemize}
\end{remark}
\begin{remark}
\label{LNZ_remark}
Our result is similar to a recent result by L\"u, Nam, and Zhu in \cite{LNZ26}, where the authors also provide a derivation of the focusing $\Phi_1^6$ measure with optimal cut-off. In their result, they are able to extract a rate for the $N$ limit in terms of the parameter $\eps$. In their result, they start with a smooth truncation in the quantum problem, see \cite[Assumption 2.2]{LNZ26}. We are able to directly consider the rough cut-off in the quantum model. Unlike \cite{LNZ26}, which uses a variational approach to the problem, our approach is based on the perturbative expansion introduced in \cite{FKSS17}.
\end{remark}

\subsection{Future directions}
An interesting problem, as raised in \cite{DR24}, is to consider the problem when mass renormalisation is required. We note that in dimensions higher than one, there is a no-go theorem in \cite{BS96}, so one would need to consider the case of the nonlocal NLS --- see \cite{Bou97,OTT24} for constructions of the measure. It is possible to study the local problem with mass renormalisation if one considers a trapping potential which is subharmonic. However we are unable to treat these problems since the bounds on the remainder term in \cite{RS22,RS23} are based on the Feynman-Kac formula, and cannot be easily adapted to settings where renormalisation is required.

It would also be interesting to see if the methods introduced in this setting can be used to study the corresponding canonical ensemble problem, also studied in \cite{DR24}. The primary difficulty in this problem is the lack of Wick-type theorems, which we use to prove some estimates satisfied by the free quantum state. We plan to revisit this problem in future work.

\subsection{Previously Known Results}
The problem of deriving Gibbs measures from many-body quantum mechanics has received a lot of attention in recent years. In the defocusing case, the problem first was studied by Lewin--Nam--Rougerie and Fr\"ohlich--Knowles--Schlein--Sohinger. We direct the reader to \cite{CKRTG26,FKSS17,FKSS20,FKSS22,FKSS23,FKSS22b,LNR15,LNR18,LNR21,NZZ25,NZZ26,Soh19} for results in the defocusing case. Further results were proved in \cite{GSS26,JR25}. We direct the reader to \cite{RS23} for a more in depth discussion of the methods used. We also mention \cite{FKSS18}, which treats the dynamical problem in one dimension, giving a derivation of the time dependent correlation functions. Moreover, the first author with Ammari and Petrat gave a higher-order expansion in the case of a finite graph in \cite{AFP25}. 

The focusing problem was studied using a perturbative expansion by the second author and Sohinger in \cite{RS22,RS23}, and derivations of the Gibbs measure associated to the focusing cubic and quintic NLS were given in the respective papers. In these papers, the proof of the convergence of the explicit terms was based on an application of the Helffer-Sj\"ostrand formula. This required a smoothness assumption, as well as a bound on the corresponding non-truncated quantum explicit terms. They also gave a derivation of the time dependent correlation functions. L\"u, Nam, and Zhu also recently gave a derivation of the focusing $\Phi_1^6$ with optimal cut-off in \cite{LNZ26}. This was done using rather different methods to those in \cite{RS22,RS23}, being based on the quantum de Finetti theorem, Berezin-Lieb inequality, and a variational principle. In \cite{LNZ26}, the authors extract a rate for the $N$ limit in terms of the parameter $\eps$. They also studied the problem of the relative partition function when the support is larger than the optimal cut-off, showing that it displays blow-up in the semiclassical parameter; see \cite[Theorem 2.10]{LNZ26}. The variational method was also used by Dinh and Rougerie in \cite{DR24}, where the authors gave a derivation of the canonical Gibbs measure.

\subsection{Method of Proof}
The proof is based on the perturbative expansion introduced in \cite{FKSS17}. For bounded potentials, one gets analytic power series via the same arguments as \cite{RS22,RS23}. Thus the primary difficulty is the convergence of the explicit terms, since we cannot use the Helffer-Sj\"ostrand formula approach used previously. To get around this, we use the theory of Wigner measures introduced in \cite{AN08}. First, we use a result from \cite{AN11} to deduce that the Wigner measure associated to the truncated free Gibbs state is the truncated free field. We then use commutator estimates and an inductive argument to reduce convergence of the explicit terms to the convergence of the truncated free Gibbs state.

One of the difficulties we face is that the convergence given by Wigner measures does not automatically give (uniform) convergence in the space of observables we need, nor for the class of potentials we consider. To get around this, we use the fact that the density matrices for the quantum explicit terms are able to uniformly absorb arbitrarily small powers of the free Hamiltonian. We can trade this extra regularity to ensure convergence.

The new convergence argument also means we do not need bounds on the untruncated quantum explicit terms, in contrast to \cite{RS22,RS23}. In particular, this means that we avoid needing to appeal to the diagrammatic arguments used in \cite{FKSS17}. To the authors' knowledge, this is the first perturbative expansion derivation to completely avoid the diagrammatic arguments.

\subsection{Outline of Paper}
In Section \ref{preliminary_section} we outline some background results we will need about the Gibbs measure, and give a brief introduction to the theory of Wigner measures. Section \ref{SEC:semiclassical_analysis} is devoted to the proof of a number of results from semiclassical analysis. Section \ref{SEC:bounded} deals with the problem of bounded interaction potentials, giving a proof of Theorem \ref{bounded_theorem}. In Section \ref{Section:unbounded_potentials}, we extend these results to the local interaction, and give the proof of Theorem \ref{delta_theorem}. Finally, in Appendix \ref{estimates_appendix}, we prove a uniform bound for the free quantum Gibbs state.

\section{Preliminary Results}
\label{preliminary_section}
\subsection{Classical bounds}
We recall \cite[Lemma 3.10]{Bou94} (see also \cite{OST22}), which concerns the normalisability of the Gibbs measure for the local NLS.
\begin{proposition}
\label{local_normalisability_lemma}
Let $F : \R \to [0,1]$ be a function whose support is contained in $[0,K]$. Then for $p \in [2,3)$ and any $K > 0$, one has
\begin{equation*}
\e^{\frac{1}{p} \|\varphi\|_{L^{2p}(\T)}^{2p}}F(\|\varphi\|_{L^2}^2) \in L^1(\dd \mu_0).
\end{equation*}
Moreover, for $K \leq K_{\mathrm{max}}$, one has
\begin{equation*}
\e^{\frac{1}{3} \|\varphi\|_{L^6(\T)}^6}F(\|\varphi\|_{L^2}^2) \in L^1(\dd \mu_0).
\end{equation*}
\end{proposition}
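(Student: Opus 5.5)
Since $0\le F\le \chi_{[0,K]}$, it suffices to take $F=\chi_{[0,K]}$. The approach is the standard one of Bourgain, refined in \cite{OST22}: decompose $\vph$ dyadically in frequency, bound the potential energy by a Gagliardo--Nirenberg type inequality that exploits the mass constraint, and sum the resulting large-deviation estimates. Concretely, I would write $\vph = \sum_k \om_k \lambda_k^{-1/2} u_k$, and for $N$ dyadic let $P_N$ be the projection onto frequencies $|k|\le N$. Set $\vph_N := P_N\vph$.

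For the tail bound I would estimate $\mu_0(\frac1p\|\vph\|_{L^{2p}}^{2p} > \lambda,\ \|\vph\|_{L^2}^2\le K)$ and show that it decays faster than $\e^{-\lambda}$ in the subcritical case $p<3$. Given $\lambda$, choose $N\sim\lambda^{\theta}$ with $\theta$ to be tuned. For the low part, Gagliardo--Nirenberg gives
\[
\|\vph_N\|_{L^{2p}}^{2p}\lesssim \|\vph_N\|_{L^2}^{p+1}\|\vph_N\|_{\dot H^{1/2}}^{p-1} ,
\]
or alternatively the estimate $\|\vph_N\|_{L^\infty}^2\lesssim \|\vph_N\|_{L^2}\|\vph_N\|_{\dot H^{1}}$ applied to a piece. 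Since $\|\vph_N\|_{L^2}^2\le K$, a large value of $\|\vph_N\|_{L^{2p}}^{2p}$ forces $\|\vph_N\|_{\dot H^{1/2}}^2\gtrsim \lambda^{2/(p-1)}$. The quantity $\|\vph_N\|_{\dot H^{1/2}}^2$ is a sum of about $N$ independent variables $|\om_k|^2|k|/\lambda_k$, each of size $1/|k|$, so its mean is about $\log N$ and it has exponential tails. This yields a bound of the form $\exp(-c\lambda^{2/(p-1)})$, which is $\ll \e^{-\lambda}$ because $2/(p-1)>1$ when $p<3$. For the high part, $\|(1-P_N)\vph\|_{L^{2p}}$ is handled by Gaussian hypercontractivity together with the smallness $\mathbb E\|(1-P_N)\vph\|_{H^{s}}^2\lesssim N^{2s-1}$ for $s<1/2$. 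Sobolev embedding then gives Gaussian tails $\exp(-cN^{\delta}\lambda^{1/p})$ up to logarithms, which beats $\e^{-\lambda}$ once $\theta$ is chosen large. Summing via the layer-cake formula $\int \e^{X}\chi\,\dd\mu_0\le 1+\int_0^\infty \e^{\lambda}\mu_0(X>\lambda,\ \chi=1)\,\dd\lambda$ gives the first claim for every $K$.

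For the critical case $p=3$ the exponent $2/(p-1)=1$ is exactly borderline. The tail is then $\exp(-c(K)\lambda)$, and integrability requires $c(K)>1$. The constant must therefore be sharp. I would use the sharp Gagliardo--Nirenberg inequality
\[
\tfrac13\|f\|_{L^6}^6\le \tfrac{\|f\|_{L^2}^4}{\|Q\|_{L^2}^4}\|f'\|_{L^2}^2 ,
\]
where $Q$ is the ground state, together with the sharp Gaussian large deviation for $\|\nabla\vph_N\|^2$ in the appropriate regime. This is precisely the analysis of \cite{OST22}, which identifies $K_{\max}=\|Q\|_{L^2}^2$ (up to normalisation) and shows integrability for $K\le K_{\max}$, including the endpoint.

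The main obstacle is this critical step, and above all the endpoint $K=K_{\max}$, where naive constants fail and one needs the refined variational argument of \cite{OST22}. Given that the statement defines $K_{\max}$ through this very property, I would cite \cite{OST22} for the second claim rather than reprove it.
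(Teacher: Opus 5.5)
The paper does not prove this proposition; it quotes it from \cite[Lemma 3.10]{Bou94} and \cite{OST22}. Your decision to cite \cite{OST22} for the critical case, endpoint included, is therefore exactly the paper's treatment. The gap is in your self-contained argument for $p\in[2,3)$, whose low-frequency step fails as written.

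The inequality $\|f\|_{L^{2p}}^{2p}\lesssim\|f\|_{L^2}^{p+1}\|f\|_{\dot H^{1/2}}^{p-1}$ is false. For a bump of width $1/M$ the left side is $\sim M^{-1}$ and the right side is $\sim M^{-(p+1)/2}$. The correct $\dot H^{1/2}$ interpolation is $\|f\|_{L^{2p}}^{2p}\lesssim\|f\|_{L^2}^{2}\|f\|_{\dot H^{1/2}}^{2p-2}$. Under the mass constraint this only forces $\|\vph_N\|_{\dot H^{1/2}}^2\gtrsim(\lambda/K)^{1/(p-1)}$. Combined with the exponential tails you describe, it gives $\exp(-c\lambda^{1/(p-1)})$. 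Since $1/(p-1)\le 1$ on $[2,3)$, this does not beat $\e^{\lambda}$; it is borderline at $p=2$, and even then only for small $K$.

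The exponents you wrote are those of the Agmon-type bound $\|f\|_{L^{2p}}^{2p}\le\|f\|_{L^\infty}^{2p-2}\|f\|_{L^2}^2\lesssim\|f\|_{L^2}^{p+1}\|f\|_{H^1}^{p-1}$. That is the right tool, and using the mass constraint again via $\|f\|_{\dot H^{1/2}}^2\le\|f\|_{L^2}\|f\|_{\dot H^1}$ just brings you back to it. But then the random variable to control is $\langle\vph_N,h\vph_N\rangle=\sum_{|k|\le N}|\om_k|^2$, a sum of $2N+1$ independent standard exponentials. Its mean is $2N+1$, not $\log N$, and its tail is only $\mu_0(\langle\vph_N,h\vph_N\rangle>t)\le 2^{2N+1}\e^{-t/2}$. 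In short, you have paired the deterministic inequality for $H^1$ with the probabilistic behaviour of $\dot H^{1/2}$.

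As a result, the low-frequency bound $\exp(-c\lambda^{2/(p-1)})$ holds only when $N\ll\lambda^{2/(p-1)}$, i.e.\ $\theta<2/(p-1)$. The cutoff has a ceiling, and your ``choose $\theta$ large'' for the high part must fit under it. Your high-frequency estimate uses only $\mathbb{E}\|(1-P_N)\vph\|_{H^s}^2\lesssim N^{2s-1}$ plus Sobolev, which forces $s\ge\frac{p-1}{2p}$. That gives at best $\exp(-cN^{1/p}\lambda^{1/p})$ and requires $\theta>p-1$, which is incompatible with $\theta<2/(p-1)$ once $p>1+\sqrt2$.

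The repair is to bound $\|(1-P_N)\vph\|_{L^{2p}}$ directly. The pointwise variance is $\sum_{|k|>N}\lambda_k^{-1}\sim N^{-1}$, so Minkowski's inequality and Gaussian moment bounds give $\mu_0(\|(1-P_N)\vph\|_{L^{2p}}>t)\lesssim\exp(-ct^2N)$. At $t\sim\lambda^{1/(2p)}$ this is $\exp(-cN\lambda^{1/p})$, which needs only $\theta>(p-1)/p$. Since $(p-1)/p<1<2/(p-1)$ for $p<3$, an admissible $\theta$ exists, and the subcritical claim follows for every $K$. Alternatively, simply cite \cite[Lemma 3.10]{Bou94} as the paper does.
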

\begin{remark}
{The normalisability up to the optimal value $K_{\mathrm{max}}$ was determined for the quintic NLS in \cite{OST22}.}
\end{remark}
In what follows, it will be convenient to consider the Hartree equation as an approximation to the local NLS. We need the following result, which ensures that the Gibbs measure is well defined for bounded interaction potentials.
\begin{lemma}
\label{nonlocal_normalisability_lemma}
Suppose that $\chi$ is as in the statement of Proposition \ref{local_normalisability_lemma} with any choice of $K < \infty$. Suppose the interaction potential $v \in L^\infty(\T)$. Then one has
\begin{equation*}
\e^{\mcV(\varphi)} \chi(\|\varphi\|_{L^2}^2) \in L^1(\dd \mu_0).
\end{equation*}
\end{lemma}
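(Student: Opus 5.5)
The idea is to use boundedness of $v$ to control $\mcV$ by a power of the mass times a subcritical quantity, and then reduce to the first part of Proposition \ref{local_normalisability_lemma}. Since $v\in L^\infty(\T)$ and $v \geq 0$, Young's inequality gives $\|v*|\vph|^2\|_{L^\infty}\le \|v\|_{L^\infty}\|\vph\|_{L^2}^2$. Inserting this into \eqref{classical_interaction} yields
\begin{equation*}
0\le \mcV(\vph) = \frac13\int \dd x\,(v*|\vph|^2)^2(x)|\vph(x)|^2 \le \frac13\|v\|_{L^\infty}^2 \|\vph\|_{L^2}^6 .
\end{equation*}
On the support of $\chi(\|\vph\|_{L^2}^2)$ we have $\|\vph\|_{L^2}^2\le K$, so $\e^{\mcV}\chi(\mcN)\le \e^{\frac13\|v\|_{L^\infty}^2K^3}$. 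This is a deterministic constant, and since $\mu_0$ is a probability measure, integrability follows immediately for every $K<\infty$.

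I would also check measurability. $\mcV$ is a $\mu_0$-a.s.\ finite random variable: it is a limit of its finite-dimensional truncations, because $\vph\in L^2$ almost surely and the trilinear form is continuous on $L^2$ when $v$ is bounded, by the estimate above. The product $\e^{\mcV}\chi(\mcN)$ is therefore a measurable, nonnegative, bounded function.

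The main obstacle is minor. One needs the continuity of $\mcV$ on $L^2$ to justify measurability and a.s.\ definition. This follows from a multilinear version of the same Young-type bound, applied to differences $\mcV(\vph)-\mcV(\psi)$, which are controlled by $\|\vph-\psi\|_{L^2}$ times powers of $\|\vph\|_{L^2}$ and $\|\psi\|_{L^2}$. No appeal to the sharp threshold $K_{\mathrm{max}}$ is needed. This is consistent with the subsequent remark that for $v\in L^\infty$ there is no optimal cut-off.
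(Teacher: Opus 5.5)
Your proposal is correct and is essentially the paper's proof. The paper also bounds $|\mcV(\vph)|\le \frac13\|v\|_{L^\infty}^2\|\vph\|_{L^2}^6$ by H\"older's inequality and concludes because $\e^{\mcV}\chi(\mcN)\le \e^{\frac13\|v\|_{L^\infty}^2K^3}$ is bounded. Your measurability remark is a harmless extra that the paper omits, and you do not actually need either the sign condition $v\ge 0$ or the reduction to Proposition \ref{local_normalisability_lemma} announced in your first sentence.
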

\begin{proof}
One has
\begin{equation*}
\left|\mcV(\varphi)\right| \leq \frac{1}{3}\|v\|^2_{L^\infty}\|\varphi\|_{L^{2}}^{6},
\end{equation*}
which follows from applying H\"older's inequality to \eqref{classical_interaction}. The result then immediately follows since the weight is bounded.
\end{proof}
We also need the following result, which allows us to pick the cut-off uniformly when considering the interaction potential $v^N$ as in \eqref{approximation_identity}.
\begin{lemma}
\label{uniform_partition_function_corollary}
Suppose that $v^N$ is as in \eqref{approximation_identity}. Suppose that $\chi$ is as in the statement of Proposition \ref{local_normalisability_lemma} with $K$ sufficiently small such that the local measure corresponding to $v=\delta$ and $p=6$ is normalisable. Then
\begin{equation*}
\e^{\mcV^N(\varphi)} \chi(\|\varphi\|_{L^2}^2) \in L^1(\dd \mu_0).
\end{equation*}
Moreover, recalling the definition of the classical partition function in \eqref{classical_partition_function}, one has that $z^N$ is uniformly bounded in $N$.
\end{lemma}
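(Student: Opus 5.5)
The plan is to dominate the nonlocal interaction $\mcV^N$ pointwise by the local one, uniformly in $N$, and then invoke Proposition \ref{local_normalisability_lemma}. Write $\rho := |\vph|^2 \geq 0$. Then
\begin{equation*}
\mcV^N(\vph) = \frac{1}{3}\int \dd x \, (v^N * \rho)^2(x)\, \rho(x).
\end{equation*}
Since $U \geq 0$ and $\int U = 1$, for $N$ large enough that the support issues from periodisation are harmless (or in general using $[\cdot]$ and the fact that $v^N \geq 0$), we have $\|v^N\|_{L^1(\T)} \leq 1$. More precisely, $\|v^N\|_{L^1(\T)} = \int_{-N/2}^{N/2} U \leq 1$.

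The first step is to apply H\"older's inequality with exponents $(3/2, 3)$ in $x$:
\begin{equation*}
\int (v^N*\rho)^2 \rho \leq \|v^N*\rho\|_{L^3}^2 \, \|\rho\|_{L^3}.
\end{equation*}
Young's inequality then gives $\|v^N*\rho\|_{L^3} \leq \|v^N\|_{L^1}\|\rho\|_{L^3} \leq \|\rho\|_{L^3}$. Hence
\begin{equation*}
\mcV^N(\vph) \leq \frac{1}{3}\|\rho\|_{L^3}^3 = \frac{1}{3}\|\vph\|_{L^6}^6,
\end{equation*}
which is exactly the local quintic interaction. Since $\mcV^N \geq 0$, this gives the pointwise bound
\begin{equation*}
0 \leq \e^{\mcV^N(\vph)}\chi(\|\vph\|_{L^2}^2) \leq \e^{\frac13\|\vph\|_{L^6}^6}\chi(\|\vph\|_{L^2}^2)
\end{equation*}
for every $N$. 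The right-hand side is independent of $N$ and lies in $L^1(\dd\mu_0)$ by Proposition \ref{local_normalisability_lemma}, using the hypothesis that $K$ is small enough for the local $p=3$ (sextic potential) measure to be normalisable. Integrability of $\e^{\mcV^N}\chi$ follows immediately.

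For the uniform bound on $z^N$, we have
\begin{equation*}
z^N = \int \e^{\mcV^N}\chi(\mcN)\,\dd\mu_0 \leq \int \e^{\frac13\|\vph\|_{L^6}^6}\chi(\mcN)\,\dd\mu_0 < \infty,
\end{equation*}
and this bound is independent of $N$. A lower bound $z^N \geq \mu_0(\mcN \leq K) > 0$ also holds, because $\mcV^N \geq 0$ and the free field has positive probability of small mass; this may be useful later, but the upper bound is the stated claim.

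The only point needing care is the $L^1$ norm of the periodised kernel: we must check that $\|v^N\|_{L^1(\T)} \leq \int_\R U = 1$ when $v^N(x) = NU(N[x])$ is restricted to a fundamental domain. This holds by the change of variables $y = Nx$ over $[-1/2,1/2)$, with nonnegativity of $U$ giving the inequality. Everything else is H\"older and Young, so the argument is short.
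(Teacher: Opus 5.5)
Your proof is correct and follows the paper's argument. The paper likewise bounds $\mcV^N(\vph)\le \frac13\|v^N\|_{L^1}^2\|\vph\|_{L^6}^6$ by Young's inequality and then invokes Proposition \ref{local_normalisability_lemma}; your H\"older-plus-Young chain just spells out that step. Your observation that $\|v^N\|_{L^1(\T)}=\int_{-N/2}^{N/2}U\le 1$ is a small sharpening of the paper's stated equality $\|v^N\|_{L^1}=1$, which would need $\mathrm{supp}\,U\subset[-N/2,N/2]$.
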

\begin{proof}
Applying Young's inequality to the \eqref{classical_interaction}, one has
\begin{equation*}
\left|\mcV^N(\varphi)\right| \leq \frac{1}{3}\|v^N\|^2_{L^1}\|\varphi\|_{L^{6}}^{6} = \frac{1}{3}\|\varphi\|_{L^{6}}^{6},
\end{equation*}
The result then follows from Proposition \ref{local_normalisability_lemma}.
\end{proof}
We also recall the following bound, see for example \cite[Lemma 3.2]{RS22}.
\begin{lemma}
\label{random_variable_bound_lemma}
Suppose that $\xi \in \mc{L}(\mfhp)$. Then
\begin{equation*}
|\Theta(\xi)| \leq \|\xi\| \|\vph\|_{\mfh}^{2p}.
\end{equation*}
\end{lemma}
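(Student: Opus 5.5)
The plan is to reduce the bound to an operator-norm estimate on $\mfh^{\otimes p}$, followed by a norm computation for a tensor power. Fix $\om$ with $\vph = \vph^\om \in \mfh$; this holds almost surely, and the random variable $\Theta(\xi)$ is only considered on that set. First rewrite the integral in \eqref{classical_random_variable_definition} as an inner product on $L^2(\Lambda^p)$. Set
\begin{equation*}
\Phi(y_1,\ldots,y_p) := \prod_{i=1}^p \vph(y_i), \qquad \text{that is,} \qquad \Phi = \vph^{\otimes p}.
\end{equation*}
Using the paper's convention that $\langle f, g\rangle$ is antilinear in the first slot, the definition reads
\begin{equation*}
\Theta(\xi) = \int \dd \mbx \, \dd \mby \, \overline{\Phi(\mbx)} \, \xi(\mbx;\mby) \, \Phi(\mby) = \langle \Phi, \xi \Phi \rangle_{L^2(\Lambda^p)}.
\end{equation*}
This is justified first for $\xi$ with a nice kernel, for instance finite-rank operators or operators with $L^2$ kernels. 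For a general bounded $\xi$, the identity is taken as the meaning of $\Theta(\xi)$, or obtained by approximating $\xi$ strongly by such operators.

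The second step is to check that $\Phi$ lies in $\mfhp$. It is a symmetric function, and by Fubini
\begin{equation*}
\|\Phi\|_{L^2(\Lambda^p)}^2 = \prod_{i=1}^p \int \dd y_i \, |\vph(y_i)|^2 = \|\vph\|_{\mfh}^{2p}.
\end{equation*}
So $\Phi$ belongs to the domain of $\xi$ as an operator on $\mfhp$.

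Finally, apply Cauchy--Schwarz and the definition of the operator norm:
\begin{equation*}
|\Theta(\xi)| \leq \|\Phi\| \, \|\xi \Phi\| \leq \|\xi\| \, \|\Phi\|^2 = \|\xi\| \, \|\vph\|_{\mfh}^{2p}.
\end{equation*}

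The only step needing care is the identification of $\Theta(\xi)$ with the quadratic form $\langle \Phi, \xi \Phi\rangle$ when $\xi$ is merely bounded. In that case the kernel is a distribution and the integral in \eqref{classical_random_variable_definition} has to be read in the distributional or weak sense. To handle this, approximate $\xi$ by $P_n \xi P_n$, where $P_n$ is the projection onto the span of tensor products of $u_{k_1}\otimes\cdots\otimes u_{k_p}$ with $k_i \le n$. For these truncations both sides agree, and they converge as $n\to\infty$ because $P_n\Phi \to \Phi$ in $L^2$.
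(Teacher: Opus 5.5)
Your proof is correct and is essentially the argument behind the paper's citation of \cite[Lemma 3.2]{RS22}: identify $\Theta(\xi)$ with the quadratic form $\langle \vph^{\otimes p}, \xi\, \vph^{\otimes p}\rangle$ on $\mfhp$, use $\|\vph^{\otimes p}\| = \|\vph\|_{\mfh}^{p}$, and apply Cauchy--Schwarz. The paper gives no proof of its own, so your remark on reading the kernel integral for a general bounded $\xi$ via the truncations $P_n \xi P_n$ is extra care, not a departure from that argument.
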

\subsection{Wick quantisation and Weyl operators}
\label{Sec:Wick_quantisation}
We need to briefly recall the Wick calculus and Weyl operators. We direct the reader to \cite{AN08} for a more detailed discussion.
\begin{definition}[Homogeneous Polynomials]
We say that an operator $b \in \mathcal{P}_{p,q}$ if there is $\tilde{b} \in \mathcal{L}(\mfh^{(p)},\mfh^{(q)})$ such that $b(\varphi) = \langle \varphi^{\otimes_q},\tilde{b} \varphi^{\otimes_p} \rangle$. We define
\begin{equation*}
|b|_{\mathcal{P}_{p,q}} := |\tilde{b}|_{\mathcal{L}(\mfh^{(p)},\mfh^{(q)})}.
\end{equation*}
\end{definition}
We also denote by $ \mathcal{P}^\infty_{p,q}$ the set of compact polynomials where $\tilde{b} \in \mathcal{L}^\infty(\mfh^{(p)},\mfh^{(q)})$. 
\begin{remark}
Let use remark that $b(\varphi)  = \|\varphi\|_\mfh^2$ is in $\mc{P}_{1,1}$, with $\tilde{b}$ equal to the identity.
\end{remark}
\begin{definition}
The Wick quantisation of $b \in \mc{P}_{p,q}$ is the operator $b_\eps^{\mathrm{W}}$ given by
{ 
  \begin{align*}
        b_\eps^{\mathrm{W}}|_{\mathfrak{h}^{(n)}}:= 1_{[p,+\infty)} (n)  \frac{\sqrt{n!(n+q-p)!}}{(n-p)!}  \eps^{\frac{p+q}{2}}  \mathcal{S}_{n-p+q} (\tilde b \otimes 1^{\otimes (n-p)}),
    \end{align*}
    or equivalently using the kernel representation
}
\begin{equation*}
b_{\eps}^{\mathrm{W}}  \equiv b^{\mathrm{W}} =\Theta_{\eps}(\tilde{b}) = \int \dd x_1 \ldots \mathrm{d} x_p \, \dd y_1 \ldots \mathrm{d} y_q \, \tilde{b}(x_1,\ldots,x_p;y_1\ldots,y_q) \prod_{i=1}^p \varphi^*_\eps(x_i) \prod_{i=1}^q \varphi_\eps(y_i).
\end{equation*}
\end{definition}
\begin{remark}
Moreover, since $v \in L^\infty$, we have that $\mc{V}_{\eps} = (\Theta(V))^\WW$ and $\mc{V} \in \mc{P}_{3,3}$. 
\end{remark}
We will also use Weyl operators in our proof. We make the following definition.
\begin{definition}
For $ f \in \mfh$, we define the Weyl operator $W_\eps(f)$ as
\begin{equation*}
W_\eps(f) := \e^{\frac{\ii}{\sqrt{2}}(\vph_\eps(f) + \vph^*_\eps(f))}
\end{equation*}
\end{definition}
We have the following commutation relations for the Weyl operators. Let $f,g \in \mfh$.
\begin{align}
\nonumber
W^*_\eps(g) W_\eps(f) W_\eps(g) &= \e^{-\ii \eps \mathrm{Im}\langle f,g\rangle} W_\eps(f), \\
\label{Weyl_CR}
W_\eps(g)W_\eps(f) &= \e^{-\frac{\ii \eps}{2}\mathrm{Im}\langle g,f\rangle} W_\eps(g+f).
\end{align}
\subsection{Wigner measures}
In this section, we briefly recall the notion of a {\it Wigner measure} and prove a preliminary lemma. We direct the reader to \cite{AN08} for a detailed introduction to the theory of Wigner measures.
\begin{definition}
\label{Wigner_definition}
A Borel probability measure $\nu$ is a Wigner measure for a family of trace class operators $(\varrho_\eps)_{\eps \in (0,1)}$ on a Hilbert space $\mathcal{H}$ if there is a countable sequence $\eps_n \to 0$ such that for any $f \in \mc{H}$ one has
\begin{equation*}
\lim_{\eps_n \to 0} \mathrm{Tr}(W_{\eps_n}(f)\varrho_{\eps_n}) = \int_{\mc{H}} \e^{\ii \sqrt{2} \mathrm{Re}\langle f,\varphi \rangle} \dd \nu(\varphi)
\end{equation*}
We denote the set of Wigner measures admitted by $(\varrho_\eps)_{\eps \in (0,1)}$ by $\mc{M}(\varrho_\eps)$.
\end{definition}
\begin{remark}
If $\nu$ is a Wigner measure for the family $(\varrho_\eps)_{\eps \in (0,1)}$ of trace-class operators which can absorb arbitrary powers of the number operator, one has that
\begin{equation*}
\lim_{\eps_n \to 0} \mathrm{Tr}( b^{\WW}_{\eps_n} \varrho_{\eps_n})= \int_{\mc{H}} b \, \dd \nu
\end{equation*}
for all $b \in \mathcal{P}^\infty_{p,q}$. 
\end{remark}
The following lemma gives a sufficient condition for the existence of a Wigner measure, see \cite{AN08}.
\begin{lemma}
\label{Wigner_existence_lemma}
Suppose that $(\varrho_\eps)_{\eps \in (0,1)}$ is  a family of density operators such that
\begin{equation*}
\mathrm{Tr}(\mcN_\eps^k\varrho_\eps) \leq C(k)  < \infty
\end{equation*}
for some $k > 0$ uniformly in $\eps \in (0,1)$. Then $\mc{M}(\varrho_\eps)$ is non-empty and there is a Wigner measure satisfying
\begin{equation*}
\int \|\varphi\|_{\mc{H}}^{2k} \, \dd \nu (\varphi) < \infty.
\end{equation*}
\end{lemma}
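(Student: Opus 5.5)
The plan is to follow the standard Ammari--Nier route: first extract a sequence along which the characteristic functionals $\mathrm{Tr}(W_\eps(f)\varrho_\eps)$ converge pointwise, then show that the limit functional is the Fourier transform of a Borel probability measure with the stated moment bound.

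\emph{Tightness and a convergent subsequence.} For each $\eps$, the map $f \mapsto G_\eps(f) := \mathrm{Tr}(W_\eps(f)\varrho_\eps)$ is a normalised positive-definite function, since $\varrho_\eps$ is a density operator and the Weyl operators satisfy \eqref{Weyl_CR}. I would choose a countable dense set $D \subset \mc{H}$ and use $|G_\eps| \leq 1$ with a diagonal argument to obtain $\eps_n \to 0$ such that $G_{\eps_n}(f)$ converges for all $f \in D$.

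To pass to all $f \in \mc{H}$, I need uniform equicontinuity. The key estimate is
\begin{equation*}
|G_\eps(f) - G_\eps(g)| \lesssim \|f-g\|^{\min(k,1)}\,\bigl(1+\mathrm{Tr}(\mcN_\eps^{k}\varrho_\eps)\bigr)(1+\|f\|+\|g\|),
\end{equation*}
uniformly in $\eps$. It comes from writing $W_\eps(f)-W_\eps(g)$ via \eqref{Weyl_CR} as $W_\eps(g)\bigl(W_\eps(f-g)\e^{\ii\eps(\cdot)}-1\bigr)$ and bounding $\|(W_\eps(h)-1)(\mcN_\eps+1)^{-1/2}\| \lesssim \|h\|$. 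This bound uses $\|\vph_\eps(h)(\mcN_\eps+1)^{-1/2}\| \le \|h\|$ (with $\eps\le1$). For $k<1/2$ I would interpolate: take the minimum with the trivial bound $2$ and use $(\mcN_\eps+1)^{k}$. With this estimate the pointwise limit $G(f)$ exists for all $f$, and $G$ is continuous and positive definite with $G(0)=1$.

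\emph{Identifying a measure.} In infinite dimensions, Bochner's theorem requires more than continuity in norm; Minlos--Sazonov needs continuity in a topology defined by a Hilbert--Schmidt seminorm. Instead, I would realise the limit on finite-dimensional projections. For each finite-dimensional subspace $E$, Bochner's theorem applied to $G|_E$ gives a probability measure $\nu_E$ on $E$. These measures form a projective (cylindrical) family.

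To show the family comes from a genuine $\sigma$-additive measure on $\mc{H}$, I need a uniform moment bound. I would prove
\begin{equation*}
\int_E \|\vph\|^{2k}\,\dd\nu_E \le \liminf_n \mathrm{Tr}(\mcN_{\eps_n}^k\varrho_{\eps_n}) \le C(k).
\end{equation*}
For this I would test against smooth compactly supported cylindrical functions $\phi(\|P_E\vph\|^2)$, whose Weyl quantisations converge by the finite-dimensional semiclassical calculus. I would then compare with $\mcN_\eps^k \ge \bigl((P_E)^{\WW}\bigr)^k$, using $[\mcN_\eps,(P_E)^{\WW}]=0$ and functional calculus.

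The uniform bound, together with Chebyshev, gives $\nu_E(\|\vph\|>R)\le C(k)R^{-2k}$ uniformly in $E$. This ensures the cylindrical measure is concentrated on balls of $\mc{H}$. Prokhorov's theorem (applied in the weak topology on balls, which is compact and metrisable) then yields a Radon probability measure $\nu$ with $\hat\nu = G$. Hence $\nu \in \mc{M}(\varrho_\eps)$, and Fatou gives $\int\|\vph\|^{2k}\dd\nu \le C(k)$.

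The main obstacle is the last step: transferring the quantum moment bound to the finite-dimensional limits and controlling tightness uniformly in $E$. In particular, the finite-dimensional Weyl--Wick comparison requires $\eps$-errors of lower order in $\mcN_\eps$, which are themselves uniformly bounded by the hypothesis. If that comparison is delicate for non-integer $k$, I would first treat integer $k$, then handle general $k$ by operator monotonicity of $t\mapsto t^{s}$ for $s\in(0,1]$.
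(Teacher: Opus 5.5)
Your proposal is correct and follows the standard Ammari--Nier argument from \cite{AN08}, which the paper cites without giving a proof: equicontinuity plus diagonal extraction, Bochner on finite-dimensional subspaces, uniform $\|\cdot\|^{2k}$ bounds on the cylindrical marginals, and extension to a Borel measure via tightness on weakly compact balls. Two small corrections: first, for $\eps>0$ the functional $G_\eps$ is only positive definite up to the phases $\e^{\frac{\ii\eps}{2}\mathrm{Im}\langle f_i,f_j\rangle}$ coming from \eqref{Weyl_CR}, so only the limit $G$ is genuinely positive definite (which is all you use); second, $\varrho_\eps$ need not commute with $\mcN_\eps$, so you should estimate through $\varrho_\eps=\varrho_\eps^{1/2}\varrho_\eps^{1/2}$, and then the interpolation $\|(W_\eps(h)-1)(\mcN_\eps+1)^{-k/2}\|\lesssim\|h\|^{k}$ is needed for all $k<1$, not only $k<1/2$, which is what your exponent $\min(k,1)$ already reflects.
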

\begin{remark}
In fact, the assumption in Lemma \ref{Wigner_existence_lemma} that we have a sequence of quantum states can be relaxed to the assumption that we have a sequence of trace-class operators which satisfy
\begin{equation*}
\|(\mcN_\eps + 1)^\frac{k}{2} A_\eps (\mcN_\eps + 1)^\frac{k}{2}\|_{\mc{L}^1} < C(k) < \infty
\end{equation*}
uniformly in $\eps$. See \cite[Proposition 6.4]{AN08}.
\end{remark}
\begin{remark}
 Let us note that
\begin{equation*}
\frac{1}{Z_{\eps,0}}\mathrm{Tr}(\mcN_\eps^p\e^{-H_{\eps,0}}\chi(\mcN_\eps)) \leq K^p \frac{1}{Z_{\eps,0}}\mathrm{Tr}(\e^{-H_{\eps,0}}) = K^p.
\end{equation*}
Here we have used the positivity of the Gibbs state and Assumption \ref{cut-off_assumption}.
\end{remark}
\section{Semiclassical Analysis}
\label{SEC:semiclassical_analysis}
In this section, we collect some results and estimates from semiclassical analysis.
\subsection{Wigner measures for free states}
In this section, we consider the Wigner measures for the sequence of densities given by
\begin{equation*}
\rho_{\eps,0} = \frac{\e^{-H_{\eps,0}}}{Z_{\eps,0}}, \quad \rho^\chi_{\eps,0} = \frac{\e^{-H_{\eps,0}}}{Z_{\eps,0}}\chi(\mcN_\eps).
\end{equation*}
We have the following result.
\begin{lemma}
\label{Wigner_free_measure_lemma}
One has that
\begin{equation*}
\mathcal{M}((\rho_{\eps,0})_{\eps \in (0,1)}) = \{\dd \mu_0\}.
\end{equation*}
\end{lemma}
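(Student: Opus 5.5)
Since Wigner measures are characterised by the limits of $\mathrm{Tr}(W_\eps(f)\rho_{\eps,0})$, the plan is to compute the characteristic functional of the free Gibbs state explicitly. Then we show that it converges, along the full family $\eps\to 0$ and not just a subsequence, to the characteristic functional of $\mu_0$. Uniqueness of the Wigner measure then follows from Bochner's theorem (injectivity of the Fourier transform of Borel probability measures on the separable Hilbert space $\mfh$). Existence is also guaranteed a priori by Lemma \ref{Wigner_existence_lemma}, since $\rho_{\eps,0}(\mcN_\eps^k)$ is bounded uniformly in $\eps$. This uniform bound follows from the eigenbasis computation below: $\rho_{\eps,0}(\mcN_\eps)=\eps\sum_k (\e^{\eps\lambda_k}-1)^{-1}\le \mathrm{Tr}(h^{-1})$, and similarly for higher moments by Wick's theorem.

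\textbf{Step 1: factorisation.} Using the eigenbasis $u_k$ of $h$, the Fock space factorises as $\mc{F}\cong\bigotimes_k \mc{F}(\C u_k)$. Under this identification, $H_{\eps,0}=\sum_k \lambda_k \vphes(u_k)\vphe(u_k)$, so $\rho_{\eps,0}=\bigotimes_k \rho^{(k)}_{\eps}$ with $\rho^{(k)}_\eps$ a one-mode thermal state. A one-mode thermal state is quasi-free (Gaussian). The standard computation, via Wick's theorem or the coherent-state representation of the thermal state, gives
\begin{equation*}
\mathrm{Tr}(W_\eps(f)\rho_{\eps,0}) = \exp\Big(-\tfrac{1}{2}\big\langle f, \big(\tfrac{\eps}{\e^{\eps h}-1}+\tfrac{\eps}{2}\big) f\big\rangle\Big)
\end{equation*}
for $f$ finitely supported in the eigenbasis, with the constants fixed by the normalisation $W_\eps(f)=\e^{\frac{\ii}{\sqrt2}(\vphe(f)+\vphes(f))}$. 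The one-mode check is a direct computation on $\ell^2(\N_0)$. For general $f\in\mfh$ I would argue by density: $f\mapsto W_\eps(f)$ is strongly continuous, and the right-hand side is continuous because $\eps/(\e^{\eps\lambda}-1)\le \lambda^{-1}$ is bounded.

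\textbf{Step 2: limit.} Since $\frac{\eps}{\e^{\eps\lambda_k}-1}\to \lambda_k^{-1}$ monotonically and is bounded by $\lambda_k^{-1}\le 1$, dominated convergence in the sum $\sum_k|\langle u_k,f\rangle|^2(\cdot)$ shows that the exponent tends to $-\frac12\langle f,h^{-1}f\rangle$. On the other hand, $\mu_0$ is the centred complex Gaussian with covariance $h^{-1}$, meaning $\int\overline{\vph(x)}\vph(y)\,\dd\mu_0=h^{-1}(y;x)$. From this, $\mathrm{Re}\langle f,\vph\rangle$ is a real Gaussian with variance $\frac12\langle f,h^{-1}f\rangle$. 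Hence $\int \e^{\ii\sqrt2\,\mathrm{Re}\langle f,\vph\rangle}\dd\mu_0=\e^{-\frac12\langle f,h^{-1}f\rangle}$, matching the limit.

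\textbf{Main obstacle.} The main obstacle is bookkeeping the normalisation constants in Step 1: the factor $\frac{\ii}{\sqrt2}$ in the Weyl operator, the $\eps$-scaling of the CCR, and the convention $\mu_k=\frac1\pi \e^{-|z|^2}\dd z$ must all agree with Definition \ref{Wigner_definition}. I would settle this on a single mode first. A secondary point is justifying the infinite-product factorisation for general $f$, which I would handle by the density argument above, or alternatively by citing the quasi-free state result of \cite{AN11}.
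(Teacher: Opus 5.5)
Your proposal is correct and follows essentially the paper's route. The paper quotes the explicit characteristic functional of the free Gibbs state from \cite[Proposition 5.2.28]{BR81}, namely $\exp\bigl(-\frac{\eps}{4}\langle f,\frac{1+\e^{-\eps h}}{1-\e^{-\eps h}}f\rangle\bigr)$. This is identical to your $\exp\bigl(-\frac12\langle f,(\frac{\eps}{\e^{\eps h}-1}+\frac{\eps}{2})f\rangle\bigr)$, and the paper then passes to the limit $\exp(-\frac12\langle f,h^{-1}f\rangle)$ just as you do. Your mode-factorisation derivation, the density argument in $f$, the identification of the Gaussian characteristic functional, and uniqueness via injectivity of the Fourier transform only fill in details the paper leaves implicit.
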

\begin{proof}
We recall \cite[Proposition 5.2.28]{BR81}, so one has that
\begin{equation}
\label{free_Gibbs_characteristic}
\frac{1}{Z_{\eps,0}} \mathrm{Tr}(W_\eps(f)\e^{-H_{\eps,0}})= \mathrm{exp}\left(-\frac{\eps}{4}\left\langle f, \frac{1 + \e^{-\eps h}}{1-\e^{-\eps h}} f\right\rangle\right) \longrightarrow \mathrm{exp}\left(-\frac{1}{2}\left\langle f,h^{-1} f\right\rangle\right),    
\end{equation}
which is the characteristic of the Gaussian free field.
\end{proof}
The following result is an important consequence from Lemma \ref{Wigner_free_measure_lemma}.
\begin{lemma}
\label{convergence_free_cutoff_measure_lemma}
One has that
\begin{equation*}
\mathcal{M}\left(\frac{1}{Z_{\eps,0}}\e^{-H_{\eps,0}} \chi(\mcN_{\eps})\right) = \left\{\chi(\mcN) \, \dd \mu_0 \right\}
\end{equation*}
\end{lemma}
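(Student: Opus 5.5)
The plan is to reduce the statement to Lemma \ref{Wigner_free_measure_lemma} via a general stability result for Wigner measures under functions of the number operator. The paper flags this as the result ``from \cite{AN11}''. Concretely, I would invoke the result of Ammari--Nier: if $\varrho_\eps$ has the unique Wigner measure $\nu$ and satisfies $\mathrm{Tr}(\mcN_\eps^k\varrho_\eps)\le C(k)$ for all $k$, then for every bounded Borel $F:\R\to\C$ whose set of discontinuities is $\nu$-negligible under the map $\varphi\mapsto\|\varphi\|^2$, one has $\mc{M}(F(\mcN_\eps)\varrho_\eps)=\{F(\|\varphi\|^2)\,\dd\nu\}$.

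\textbf{Step 1 (moment bounds).} I verify the uniform number-operator moment bounds for $\rho_{\eps,0}$. Since $\mcN_\eps$ commutes with $H_{\eps,0}$, each moment $\rho_{\eps,0}(\mcN_\eps^k)$ is a polynomial in the traces $\eps\,\mathrm{Tr}\bigl((\e^{\eps h}-1)^{-1}\bigr)$ and their analogues. These are uniformly bounded by $\mathrm{Tr}(h^{-1})<\infty$, because $\eps(\e^{\eps\lambda}-1)^{-1}\le\lambda^{-1}$. For the truncated operator, the bound is even easier, by the remark preceding this section ($\le K^p$). Lemma \ref{Wigner_existence_lemma} then guarantees that Wigner measures exist along subsequences.

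\textbf{Step 2 (identification for continuous cut-offs).} If $\chi$ were continuous, I would prove the identification by hand. I approximate $\chi$ uniformly on $[0,R]$ by polynomials $P$, and control the tail $\mcN_\eps>R$ by the moment bounds. This reduces the claim to showing that $\mathrm{Tr}(W_\eps(f)\mcN_\eps^m\rho_{\eps,0})\to\int \e^{\ii\sqrt2\mathrm{Re}\langle f,\varphi\rangle}\|\varphi\|^{2m}\dd\mu_0$. That convergence follows by writing $\mcN_\eps=\sum_k (\|u_k\|^2)^{\WW}$, commuting Weyl operators through using \eqref{Weyl_CR} at the cost of $O(\sqrt\eps)$ errors, and using the explicit Gaussian characteristic function \eqref{free_Gibbs_characteristic} differentiated in auxiliary parameters. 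Equivalently, I can cite the continuity result of \cite{AN11} directly.

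\textbf{Step 3 (the indicator cut-off; main obstacle).} The hard step is passing to the indicator $\chi_{[0,K]}$, which is discontinuous at $K$. I sandwich it between continuous functions $\chi_\delta^-\le\chi\le\chi_\delta^+$ that agree with $\chi$ outside $(K-\delta,K+\delta)$. Since $|\mathrm{Tr}(W_\eps(f)(\chi-\chi^\pm_\delta)(\mcN_\eps)\rho_{\eps,0})|\le \rho_{\eps,0}\bigl((\chi_\delta^+-\chi_\delta^-)(\mcN_\eps)\bigr)$ (by positivity, commutation of $\mcN_\eps$ with $\rho_{\eps,0}$, and $\|W_\eps\|=1$), Step 2 gives
\begin{equation*}
\limsup_{\eps\to0}\Bigl|\mathrm{Tr}(W_\eps(f)\chi(\mcN_\eps)\rho_{\eps,0})-\int \e^{\ii\sqrt2\mathrm{Re}\langle f,\varphi\rangle}\chi(\mcN)\dd\mu_0\Bigr|\le 2\,\mu_0\bigl(\mcN\in(K-\delta,K+\delta)\bigr).
\end{equation*}
It remains to show that this bound vanishes as $\delta\to0$, that is, $\mu_0(\mcN=K)=0$. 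This holds because under $\mu_0$, $\mcN=\sum_k|\om_k|^2/\lambda_k$ is a sum of independent exponential variables, hence has an absolutely continuous law (already the first summand has a density, and convolution preserves absolute continuity). Since the limit is the same along every subsequence, the set of Wigner measures is exactly $\{\chi(\mcN)\,\dd\mu_0\}$.
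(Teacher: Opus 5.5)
Your proof is correct, but it handles the rough cut-off differently from the paper. The paper moves the Weyl operator into the state. It observes that $W_\eps(f)\rho_{\eps,0}$ has Wigner measure $\e^{\ii\sqrt{2}\,\mathrm{Re}\langle f,\cdot\rangle}\,\dd\mu_0$, applies \cite[Corollary 3.8]{AN11} directly to $\chi(\mcN_\eps)$, and disposes of the discontinuity of $\chi$ with a one-line remark about its Fourier transform. You instead keep $\rho_{\eps,0}$ positive and use the functional-calculus convergence only for continuous cut-offs. You then reach $\chi_{[0,K]}$ by sandwiching, using $|\mathrm{Tr}(W_\eps(f)G(\mcN_\eps)\rho_{\eps,0})|\le\mathrm{Tr}(G(\mcN_\eps)\rho_{\eps,0})$ for bounded $G\ge0$, together with the absolute continuity of the law of $\mcN$ under $\mu_0$.

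The paper's route is a single citation. Yours exposes the property that actually makes the indicator admissible, namely $\mu_0(\mcN=K)=0$. Some such condition is unavoidable for indicators. For coherent states $\varrho_\eps$ with Wigner measure $\delta_{\varphi_0}$ and $\|\varphi_0\|^2=K$, $\mcN_\eps$ is $\eps$ times a Poisson variable of mean $K/\eps$, so $\mathrm{Tr}(\chi(\mcN_\eps)\varrho_\eps)\to\frac{1}{2}\neq\chi(K)$.

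Two minor points.
\begin{enumerate}
\item Since $\chi_{[0,K]}$ also jumps at $0$, require $\chi^\pm_\delta$ to agree with $\chi$ only on $[0,\infty)\setminus(K-\delta,K+\delta)$. This suffices because $\mcN_\eps,\mcN\ge0$.
\item If you prove Step 2 by hand rather than citing \cite{AN11}, the polynomial scheme is incomplete as written. $P$ is chosen after $R$, so its growth on $(R,\infty)$ is not controlled by any fixed moment bound. Approximating uniformly on $[0,\infty]$ by linear combinations of $\e^{-tx}$ avoids this. Since $\e^{-t\mcN_\eps}\e^{-H_{\eps,0}}=\e^{-\dd\Gamma_\eps(h+t)}$, the quantity $\mathrm{Tr}(W_\eps(f)\e^{-t\mcN_\eps}\rho_{\eps,0})$ is an explicit quasi-free expression: the analogue of \eqref{free_Gibbs_characteristic} for $h+t$, times $Z_{\eps,0}(h+t)/Z_{\eps,0}(h)$. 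It converges to $\int\e^{\ii\sqrt{2}\,\mathrm{Re}\langle f,\varphi\rangle}\e^{-t\mcN}\,\dd\mu_0$, which would make the whole lemma independent of \cite{AN11}.
\end{enumerate}
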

\begin{proof}
We need to show that the quantum characteristic of $\rho_{\eps,0}^\chi$ converges to the characteristic of $\chi(\mcN) \dd \mu_0$. Note that by the Weyl commutation relations \eqref{Weyl_CR} and \eqref{free_Gibbs_characteristic}, it follows that the Wigner measure of $W_{\eps}(f) \rho_{\eps,0}$ is $\e^{\sqrt{2}\pi \ii \mathrm{Re} \langle f,\cdot\rangle} \, \dd \mu_0$. In particular, one can apply \cite[Corollary 3.8]{AN11} to obtain that
\begin{equation*}
\lim_{\eps \to 0} \mathrm{Tr}(W_\eps(f) \rho_{\eps,0}\chi(\mcN_\eps)) \to \int \e^{\sqrt{2} \pi \ii \mathrm{Re}\langle f,u\rangle} \chi(\mcN) \, \dd \mu_0.
\end{equation*}
Here we have used that the inverse Fourier transform of an indicator function is bounded and measurable.
\end{proof}
\subsection{Non-compact Wick symbols}
\label{Section:non-compactness}
In what follows, we will need to consider the convergence of operators which are the Wick quantisation of symbols in $\mathcal{L}_{p,p}$, rather than $\mcL^{\infty}_{p,p}$. Therefore, we cannot directly apply the results of \cite{AN08}. Instead, we will prove two results for states which can absorb semiclassical powers of the Laplacian. For simplicity of notation, we will slightly abuse notation and write
\begin{align*}
h^{(s)} &:= (-\Delta)^s + 1, \\
H_{\eps,0}^{(s)} &:= \dd \Gamma_\eps(h^{(s)}).
\end{align*}
We have the following bound.
\begin{lemma}
\label{Hamiltonian_bound}
Let $\xi \in \mc{L}(\mfh^{(p)})$ and take $s \in (0,1/2)$. The following uniform in $\eps$ bounds hold.
 \begin{align*}
\|(H_{\eps,0}^{(s)} + 1)^{-\frac{p}{2}}
\Theta_\eps(\xi)(H_{\eps,0}^{(s)} + 1)^{-\frac{p}{2}}\|_{\mc{L}} &\leq \left\Vert \left(\left(h^{(s)}\right)^{-\frac{1}{2}}\right)^{\otimes p} \xi\left(\left(h^{(s)}\right)^{-\frac{1}{2}}\right)^{\otimes p} \right \Vert_{\mathcal{L}(\mfh^{\otimes p})}, \\
\|(H_{\eps,0}^{(s)} + 1)^{-\frac{p}{2}}
\Theta_\eps(\xi) (\mcN_\eps + 1)^{-\frac{p}{2}}\|_{\mc{L}} &\leq \left\Vert \left( \left( h^{(s)} \right)^{-1/2}\right)^{\otimes p} \xi \right \Vert_{\mathcal{L}(\mfh^{\otimes p})}, \\
\| (\mcN_\eps + 1)^{-\frac{p}{2}} \Theta_\eps(\xi)(H_{\eps,0}^{(s)} + 1)^{-\frac{p}{2}}\|_{\mc{L}} &\leq \left\Vert \xi \left( \left( h^{(s)} \right)^{-1/2}\right)^{\otimes p} \right \Vert_{\mathcal{L}(\mfh^{\otimes p})}.
\end{align*}
\end{lemma}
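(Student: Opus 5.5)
The plan is to work sector by sector on Fock space and reduce everything to a one-particle operator inequality lifted through second quantisation. On $\mfh^{(n)}$, the operator $\Theta_\eps(\xi)$ acts as
\begin{equation*}
\Theta_\eps(\xi)|_{\mfh^{(n)}} = \eps^p \frac{n!}{(n-p)!}\, \mcS_n(\xi\otimes 1^{\otimes(n-p)})\mcS_n
\end{equation*}
(for $n\ge p$, and zero otherwise). Equivalently, it is $\eps^p$ times a sum over ordered $p$-tuples of distinct indices $(i_1,\dots,i_p)$ of $\xi$ acting in the variables $x_{i_1},\dots,x_{i_p}$. Meanwhile $H^{(s)}_{\eps,0}|_{\mfh^{(n)}} = \eps\sum_{i=1}^n h^{(s)}_i$ and $\mcN_\eps|_{\mfh^{(n)}} = \eps n$. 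Since all three operators commute with the sector decomposition, it suffices to prove each bound on a fixed $\mfh^{(n)}$, uniformly in $n$ and $\eps$.

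For the first bound, factor $\xi = A^{\otimes p}\, \tilde\xi\, A^{\otimes p}$ with $A := (h^{(s)})^{1/2}$ and $\tilde\xi := (A^{-1})^{\otimes p}\xi(A^{-1})^{\otimes p}$, so that $\|\tilde \xi\|$ is the right-hand side. For $\Psi,\Phi\in\mfh^{(n)}$, Cauchy--Schwarz over the tuples gives
\begin{equation*}
|\langle \Phi,\Theta_\eps(\xi)\Psi\rangle| \le \|\tilde\xi\|\,\eps^p\Bigl(\sum_{(i_1,\dots,i_p)}\|A_{i_1}\cdots A_{i_p}\Phi\|^2\Bigr)^{1/2}\Bigl(\sum_{(i_1,\dots,i_p)}\|A_{i_1}\cdots A_{i_p}\Psi\|^2\Bigr)^{1/2}.
\end{equation*}
Since the $h^{(s)}_i$ are commuting positive operators, we have
\begin{equation*}
\eps^p\sum_{\text{distinct}}\langle\Psi, h^{(s)}_{i_1}\cdots h^{(s)}_{i_p}\Psi\rangle \le \langle\Psi,(\eps\textstyle\sum_i h^{(s)}_i)^p\Psi\rangle \le \langle \Psi,(H^{(s)}_{\eps,0}+1)^p\Psi\rangle,
\end{equation*}
because expanding the $p$-th power of the sum contains the distinct-index terms, and all remaining terms are nonnegative. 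Substituting $\Psi\mapsto (H^{(s)}_{\eps,0}+1)^{-p/2}\Psi$, and likewise for $\Phi$, yields the first bound with constant $1$.

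The mixed bounds follow by the same Cauchy--Schwarz argument. For the second bound, factor $\xi = A^{\otimes p}\,\bigl((A^{-1})^{\otimes p}\xi\bigr)$ and leave the $\Psi$ side unweighted. On that side we use
\begin{equation*}
\eps^p\#\{\text{ordered distinct }p\text{-tuples}\} = \eps^p\frac{n!}{(n-p)!}\le (\eps n)^p\le(\mcN_\eps+1)^p.
\end{equation*}
The third bound is the adjoint version of the second, obtained by replacing $\xi$ with $\xi^*$ and taking adjoints of the resulting inequality.

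The main obstacle is the operator-ordering step: the inequality $\eps^p\sum_{\text{distinct}} h_{i_1}\cdots h_{i_p}\le (H^{(s)}_{\eps,0})^p$ must hold as operators. This is fine because all the $h_i^{(s)}$ commute and are positive, so their products are positive, and the argument never actually uses $s<1/2$. Some care is also needed with the symmetrisation $\mcS_n$ and with domains. I would handle both by working on finite-particle vectors in the form domain, and by using that $\mcS_n$ commutes with symmetric functions of the $h_i$. Since the Cauchy--Schwarz argument is carried out on the unsymmetrised tensor product, the symmetrisation only helps.
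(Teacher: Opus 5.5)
Your proposal is correct and follows essentially the same route as the paper. Both arguments work sector by sector, insert $\bigl((h^{(s)})^{\pm 1/2}\bigr)^{\otimes p}$ around $\xi$, apply Cauchy--Schwarz, and bound $\eps^p\frac{n!}{(n-p)!}\langle \Phi^{(n)}, h^{(s)}_1\cdots h^{(s)}_p\Phi^{(n)}\rangle$ by $\langle\Phi^{(n)},(H^{(s)}_{\eps,0}+1)^p\Phi^{(n)}\rangle$ using positivity of the commuting repeated-index terms. The differences are cosmetic: you sum over ordered distinct tuples and take a supremum over sectors, whereas the paper uses symmetry of $\Phi^{(n)}$ to reduce to a single tuple and sums over $n$. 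Your explicit treatment of the two mixed bounds, via $\eps^p n!/(n-p)!\le(\mcN_\eps+1)^p$ and taking adjoints, fills in the details the paper omits.
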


{\begin{proof}
    Let $\Phi,\Psi \in \mathcal{D}$ where $\mathcal{D}$ is a dense domain. Then  
    \begin{align*}
         &\langle \Phi, \Theta_\eps (\xi)  \Psi \rangle  = \sum_{n\in \mathbb{N}} \langle \Phi^{(n)}, \Theta_\eps (\xi)  \Psi^{(n)} \rangle = \sum_{n\in \mathbb{N}} {p!}  \eps^p \binom{n}{p}\langle \Phi^{(n)}, \xi  \otimes 1^{(n-p)}\Psi^{(n)} \rangle 
          \\ &= \sum_{n\in \mathbb{N}}   \eps^p n(n-1)\cdots (n-p+1)\Big \langle \Phi^{(n)},  \left[  \left (\left(h^{(s)}\right)^{1/2} \right)^{\otimes p}  \otimes 1^{(n-p)} \right]   \left[  \left (\left(h^{(s)}\right)^{-1/2} \right)^{\otimes p}  \otimes 1^{(n-p)} \right]
          \\ & \qquad \qquad \times \left( \xi   \otimes 1^{(n-p)} \right)    \left[  \left (\left(h^{(s)}\right)^{-1/2} \right)^{\otimes p}  \otimes 1^{(n-p)} \right]  \left[  \left (\left(h^{(s)}\right)^{1/2} \right)^{\otimes p}  \otimes 1^{(n-p)} \right]\Psi^{(n)}\Big  \rangle
          \\ & \leq \sum_{n\in \mathbb{N}}   \eps^p n(n-1)\cdots (n-p+1) \left\Vert    \left (\left(h^{(s)}\right)^{1/2} \right)^{\otimes p}  \otimes 1^{(n-p)} \Phi^{(n)}\right \Vert_{\mfh^{(n)}} 
          \\ &\quad \times  \left\Vert    \left (\left(h^{(s)}\right)^{1/2} \right)^{\otimes p}  \otimes 1^{(n-p)}  \Psi^{(n)}\right \Vert_{\mfh^{(n)}}
            \left \Vert   \left[  \left( \left( h^{(s)}\right)^{-1/2}\right)^{\otimes p} \xi  \left( \left( h^{(s)}\right)^{-1/2}\right)^{\otimes p} \right]     \otimes 1^{(n-p)}  \right \Vert_{\mathcal{L} (\mathfrak{h}^{\otimes n} )}.
    \end{align*}
    Here we have used Cauchy-Schwarz in the above computation. Denote
    \begin{align}
        h^{(s)}_j:= 1\otimes \cdots  \otimes \underbrace{h^{(s)}}_{\text{jth position}} \otimes \cdots \otimes 1.\label{jthopertor}
    \end{align}
 We have
    \begin{align*}
     \eps^p n(n-1)\cdots (n-p+1) &\left\Vert  \left (\left(h^{(s)}\right)^{1/2} \right)^{\otimes p}  \otimes 1^{(n-p)}\Phi^{(n)}\right \Vert_{\mfh^{(n)}}^{ 2} \\
     & =     \eps^p n(n-1)\cdots (n-p+1) \left \langle \Phi^{(n)} ,    \left(h^{(s)}\right)^{\otimes p}  \otimes 1^{(n-p)}  \Phi^{(n)} \right \rangle 
    \\ &=   \eps^p n(n-1)\cdots (n-p+1) \left \langle \Phi^{(n)} ,   h^{(s)}_1 h^{(s)}_2 \cdots h^{(s)}_p   \Phi^{(n)} \right \rangle 
    \\ &\leq \left \langle \Phi^{(n)} , \sum_{ 1\leq i_1,i_2, \dots , i_p\leq n} \eps ^p  h^{(s)}_{i_1}  h^{(s)}_{i_2} \cdots  h^{(s)}_{i_p}   \Phi^{(n)} \right \rangle 
    \\ & \leq  \left \Vert \left( H_{\eps,0}^{(s)}\right)^{p/2}   \Phi^{(n)} \right \Vert^2_{\mfh^{(n)}} \leq \left \Vert  \left( H_{\eps,0}^{(s)}+1\right)^{p/2}   \Phi^{(n)} \right \Vert^2_{\mfh^{(n)}},
    \end{align*}
    where we have used  $ h^{(s)}\geq 0$ and $\Phi^{(n)}$ symmetric. Similarly 
    \begin{align*}
     \eps^p n(n-1)\cdots (n-p+1) \left\Vert  (h^{(s)})^{\otimes p}  \otimes 1^{(n-p)}  \Psi^{(n)}\right \Vert_{\mfh^{(n)}}^2
  \leq   \left \Vert  \left( H_{\eps,0}^{(s)}+1\right)^{p/2}   \Psi^{(n)} \right \Vert^2_{\mfh^{(n)}}.
    \end{align*}
   Using the above two estimates, we get 
   \begin{multline*}
    \langle \Phi, \Theta_\eps (\xi)  \Psi \rangle_{\mcF}
           \leq \sum_{n\in \mathbb{N}}  \left \Vert \left( H_{\eps,0}^{(s)}+1\right)^{p/2}  \Psi^{(n)} \right \Vert_{\mfh^{(n)}} \left \Vert \left( H_{\eps,0}^{(s)}+1\right)^{p/2}  \Phi^{(n)} \right \Vert_{\mfh^{(n)}} \\
        \times
          \left \Vert    \left( \left( h^{(s)}\right)^{-1/2}\right)^{\otimes p} \xi  \left( \left( h^{(s)}\right)^{-1/2}\right)^{\otimes p}      \right \Vert_{\mathcal{L} (\mathfrak{h}^{\otimes p} )}
          \\ \leq  \left \Vert \left( H_{\eps,0}^{(s)}+1\right)^{p/2}  \Psi \right \Vert_\mcF \left \Vert \left( H_{\eps,0}^{(s)}+1\right)^{p/2}  \Phi \right \Vert_\mcF
           \left \Vert   \left( \left( h^{(s)}\right)^{-1/2}\right)^{\otimes p} \xi  \left( \left( h^{(s)}\right)^{-1/2}\right)^{\otimes p}       \right \Vert_{\mathcal{L}(\mathfrak{h}^{\otimes p } )},
    \end{multline*}
    where
    \begin{align*}
        & \left \Vert   \left( \left( h^{(s)}\right)^{-1/2}\right)^{\otimes p} \xi  \left( \left( h^{(s)}\right)^{-1/2}\right)^{\otimes p}       \right \Vert_{\mathcal{L}(\mathfrak{h}^{\otimes p } )}
        \\&=\left\Vert ((-\Delta)^s+1)^{-1/2} \otimes \cdots \otimes ((-\Delta)^s+1)^{-1/2}  \ \ \xi   \ \ ((-\Delta)^s+1)^{-1/2} \otimes \cdots \otimes ((-\Delta)^s+1)^{-1/2} \right \Vert_{\mathcal{L}(\mathfrak{h}^{\otimes p } )}.
    \end{align*}
    The remaining two estimates follow similarly. We omit the details.
    \end{proof}}
We have the following corollary of Lemma \ref{Hamiltonian_bound}. 
\begin{corollary}
\label{quantum_tail_bound}
Suppose that $\xi \in \mc{L}^2$. Then
\begin{equation*}
\|(H_{\eps,0}^{(s)}+1)^{-\frac{p}{2}}\Theta_\eps(\xi)(H_{\eps,0}^{(s)}+1)^{-\frac{p}{2}}\|_{\mc{L}} \lesssim \|\xi\|_{H^{-s}(\T^{2p})}.
\end{equation*}
\end{corollary}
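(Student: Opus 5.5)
By the first bound of Lemma \ref{Hamiltonian_bound}, it is enough to prove the one-body estimate
\begin{equation*}
\left\Vert \left(\left(h^{(s)}\right)^{-\frac{1}{2}}\right)^{\otimes p} \xi\left(\left(h^{(s)}\right)^{-\frac{1}{2}}\right)^{\otimes p} \right \Vert_{\mathcal{L}(\mfh^{\otimes p})} \lesssim \|\xi\|_{H^{-s}(\T^{2p})}.
\end{equation*}
This reduces the corollary to a statement about operators on $\mfh^{\otimes p}$, with no Fock space or $\eps$ dependence left. In particular, the implicit constant is automatically uniform in $\eps$.

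Next we bound the operator norm by the Hilbert--Schmidt norm, which is legitimate since $\xi \in \mc{L}^2$. The conjugated operator has kernel
\begin{equation*}
\left(\left(h^{(s)}\right)^{-\frac{1}{2}}\right)^{\otimes p}\otimes\left(\left(h^{(s)}\right)^{-\frac{1}{2}}\right)^{\otimes p}\, \xi(\mbx;\mby),
\end{equation*}
viewed as a function on $\T^{2p}$. The operator $h^{(s)}$ is diagonal in the Fourier basis with symbol $(2\pi|k|)^{2s}+1$. Hence the Hilbert--Schmidt norm squared equals, by Plancherel,
\begin{equation*}
\sum_{k \in \Z^{2p}} \prod_{j=1}^{2p} \big((2\pi|k_j|)^{2s}+1\big)^{-1} |\hat{\xi}(k)|^2 .
\end{equation*}
(If the kernel of $\xi(\mbx;\mby)$ pairs $\mby$ with conjugated exponentials, one only flips the sign of the corresponding $k_j$, which changes nothing.)

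It then remains to compare the Fourier weights. Since $(2\pi|k_j|)^{2s}+1 \gtrsim (1+|k_j|^2)^{s}$, it suffices to show
\begin{equation*}
\prod_{j=1}^{2p}(1+|k_j|^2)^{s} \geq \Big(1+\sum_{j}|k_j|^2\Big)^{s}.
\end{equation*}
This holds because $\prod_j(1+a_j) \geq 1+\sum_j a_j$ for $a_j \geq 0$, and $t\mapsto t^s$ is increasing. Combining these steps gives
\begin{equation*}
\Big\|\big((h^{(s)})^{-\frac{1}{2}}\big)^{\otimes p}\,\xi\,\big((h^{(s)})^{-\frac{1}{2}}\big)^{\otimes p}\Big\|_{\mc{L}^2}^2 \lesssim \sum_k (1+|k|^2)^{-s}|\hat\xi(k)|^2 = \|\xi\|^2_{H^{-s}(\T^{2p})}.
\end{equation*}

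The only point that needs care is this weight comparison. The product weight is in fact much stronger than the isotropic weight, so the inequality holds with room to spare. The rest of the argument is bookkeeping of Fourier conventions: the sign of $k$ for the $\mby$ variables, and the constant $2\pi$ in the symbol of $h^{(s)}$ versus the weight $(1+|k|^2)^s$ in the $H^{s}$ norm.
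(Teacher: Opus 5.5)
Your proposal is correct and follows the paper's proof. Both reduce via the first bound of Lemma \ref{Hamiltonian_bound}, control the operator norm by the Hilbert--Schmidt norm of the conjugated kernel (the paper phrases this as Cauchy--Schwarz in $\mby$), and then compare Fourier weights. Your explicit inequality $\prod_j(1+|k_j|^2)^s \geq (1+|k|^2)^s$ spells out the step from the product-weighted norm to $\|\xi\|_{H^{-s}(\T^{2p})}$, which the paper leaves implicit.
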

\begin{proof}
By Lemma \ref{Hamiltonian_bound}, we have
\begin{multline}
\label{messy_norm_bound}
\|(H_{\eps,0}^{(s)}+1)^{-\frac{p}{2}}\Theta_\eps(\xi)(H_{\eps,0}^{(s)}+1)^{-\frac{p}{2}}\|_{\mc{L}} \\
\lesssim \left\|\prod_{j=1}^p (1 + (-\Delta_{x_j})^s)^{-\frac{1}{2}} \int  \dd \mby \, \xi(\mbx;\mby) \prod_{j=1}^p (1 + (-\Delta_{y_j})^s)^{-\frac{1}{2}} \psi(\mby)\right\|_{L^2_{\mbx}(\T^p)}.
\end{multline}
We apply Cauchy-Schwarz in $\mby$ so it follows that \eqref{messy_norm_bound} is bounded by
\begin{equation*}
\|\xi\|_{H^{-s}_\mbx H^{-s}_{\mby}}\|\psi\|_{L^2_{\mby}},
\end{equation*}
from which the result follows. Here we have used the equivalence of norms induced by $(1-\Delta)^{-\frac{s}{2}}$ and $(1+ (-\Delta)^s)^{-\frac{1}{2}}$.
\end{proof} 
We can now prove the following important result.
{\begin{lemma}
\label{Wigner_measure_V_lemma}
Suppose that $\varrho_{\eps}$ is such that $\varrho_{\eps} = \varrho_{\eps}\chi(\mcN_\eps)$, $ [\varrho_\eps,\mathcal{N}_\eps]=0$, and that
\begin{equation}\label{asslemma}
\left\|\varrho_\eps\left(H_{\eps,0}^{(s)} +1 \right)^{\frac{3}{2}}\right\|_{\mcL^1} < C < \infty
\end{equation}
uniformly in $\eps$ for any $s \in (0,1/2)$. Moreover suppose that
\begin{equation*}
\mathcal{M}(\varrho_\eps) = \{\dd \mu\}.
\end{equation*}
Then for $v \in L^\infty(\T)$, one has that
\begin{equation*}
\mathcal{M}(\mcV_\eps \varrho_\eps) = \{\mcV \, \dd \mu\}.
\end{equation*}
\end{lemma}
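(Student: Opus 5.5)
The strategy is to approximate $V$ by compact (finite-rank) operators, to which the Wigner measure theory of \cite{AN08} applies directly. The error from this approximation is then controlled uniformly in $\eps$, using the extra regularity \eqref{asslemma} together with Corollary \ref{quantum_tail_bound}.

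Concretely, let $P_N$ be the spectral projection of $h$ onto the modes $|k|\le N$, and set $V_N := P_N^{\otimes 3} V P_N^{\otimes 3}$. This is a finite-rank operator on $\mfh^{(3)}$, so $\Theta(V_N)\in\mc{P}^\infty_{3,3}$. We split
\begin{equation*}
\mathrm{Tr}(W_\eps(f)\mcV_\eps\varrho_\eps) = \tfrac13\mathrm{Tr}(W_\eps(f)\Theta_\eps(V_N)\varrho_\eps) + \tfrac13\mathrm{Tr}(W_\eps(f)\Theta_\eps(V-V_N)\varrho_\eps).
\end{equation*}

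\emph{The compact part.} First note that, since $\varrho_\eps=\varrho_\eps\chi(\mcN_\eps)$ and $[\varrho_\eps,\mcN_\eps]=0$, the family $\varrho_\eps$ absorbs arbitrary powers of $\mcN_\eps$. Together with $\mathcal{M}(\varrho_\eps)=\{\dd\mu\}$ (so the full family converges, not just a subsequence), the Wigner measure of $W_\eps(f)\varrho_\eps$ is $\e^{\ii\sqrt2\mathrm{Re}\langle f,\cdot\rangle}\dd\mu$. This follows from the Weyl relations \eqref{Weyl_CR} exactly as in Lemma \ref{convergence_free_cutoff_measure_lemma}. For compact symbols we then invoke the remark after Definition \ref{Wigner_definition} (equivalently \cite[Theorem 6.13]{AN08}), applied to $W_\eps(f)\varrho_\eps$, which has uniformly bounded $\mcN_\eps$-moments. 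This gives
\begin{equation*}
\mathrm{Tr}(W_\eps(f)\Theta_\eps(V_N)\varrho_\eps)\to\int \e^{\ii\sqrt2\mathrm{Re}\langle f,\vph\rangle}\Theta(V_N)\,\dd\mu.
\end{equation*}
Since $\Theta_\eps(V_N)$ sits to the left of $\varrho_\eps$, some care is needed with the order of operators. I would handle this by writing $\varrho_\eps=\chi(\mcN_\eps)\varrho_\eps$, noting that $\Theta_\eps(V_N)\chi(\mcN_\eps)$ is bounded, and using the commutator $[W_\eps(f),\Theta_\eps(V_N)]=O(\eps)$ on bounded-$\mcN_\eps$ sectors if a symmetric form is required.

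\emph{The remainder, uniformly in $\eps$.} This is the main obstacle. Write $\mcV_{\eps}^{(N)}:=\Theta_\eps(V-V_N)$ and insert powers of $A_\eps:=H^{(s)}_{\eps,0}+1$:
\begin{equation*}
|\mathrm{Tr}(W_\eps(f)\Theta_\eps(V-V_N)\varrho_\eps)| \le \|A_\eps^{-3/2}\Theta_\eps(V-V_N)A_\eps^{-3/2}\|\;\|A_\eps^{3/2}\varrho_\eps A_\eps^{3/2}\|_{\mcL^1}
\end{equation*}
after moving $W_\eps(f)$ around. The conjugation $W_\eps(f)^*A_\eps W_\eps(f)$ is relatively bounded by $A_\eps$ on sectors with $\mcN_\eps\le K$ when $f\in H^{s}$, so I would first take $f$ smooth and conclude by density and tightness. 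The hypothesis \eqref{asslemma} only controls $\varrho_\eps A_\eps^{3/2}$. To get the symmetric bound, I would use positivity of $\varrho_\eps$ (assuming it is positive, or splitting it into positive parts) and that $\varrho_\eps^{1/2}A_\eps^{3/2}\in\mcL^2$. Alternatively, I would use the one-sided estimates of Lemma \ref{Hamiltonian_bound} with $(\mcN_\eps+1)^{-3/2}$ on the other side, which is harmless because of the cut-off. The operator-norm factor is then bounded, via Lemma \ref{Hamiltonian_bound} and Corollary \ref{quantum_tail_bound}, by $\|(h^{(s)})^{-1/2\otimes3}(V-V_N)\|$. Since $V$ is a bounded multiplication operator, $(h^{(s)})^{-1/2\otimes 3}V$ is compact. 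Hence $(h^{(s)})^{-1/2\otimes3}(V-V_N)\to0$ in norm as $N\to\infty$, because $P_N^{\otimes3}\to I$ strongly and a compact operator turns strong convergence into norm convergence. This bound is uniform in $\eps$.

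\emph{Classical side and conclusion.} On the classical side, $|\Theta(V-V_N)|\le\|V\|\,\|\vph\|^6$ and $\Theta(V-V_N)\to0$ pointwise. Since $\mu$ is supported on $\mcN\le K$, dominated convergence gives $\int \e^{\ii\sqrt2\mathrm{Re}\langle f,\cdot\rangle}\Theta(V_N)\,\dd\mu\to\int \e^{\ii\sqrt2\mathrm{Re}\langle f,\cdot\rangle}\mcV\,\dd\mu$. An $\epsilon/3$ argument, choosing first $N$ and then $\eps$, yields convergence of the characteristic functions for every $f$, and hence $\mathcal{M}(\mcV_\eps\varrho_\eps)=\{\mcV\,\dd\mu\}$.
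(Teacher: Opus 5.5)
Your proof has the same architecture as the paper's: approximate $\mcV$ by a compact symbol, apply the Ammari--Nier convergence to $W_\eps(f)\varrho_\eps$, control the remainder uniformly in $\eps$ using the extra $H^{(s)}_{\eps,0}$-regularity, and conclude on the classical side by dominated convergence. The difference is the regularisation.

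\emph{The paper's regularisation.} It uses the one-sided spectral cut-off $\Xi_R(H^{(s)}_{0,3})\tilde{\mcV}$. The explicit rate $R^{-1/2}$ comes from $\|(H^{(s)}_{0,3})^{-1/2}(1-\Xi_R)(H^{(s)}_{0,3})\|\le R^{-1/2}$, and only $\|V\|<\infty$ is used.

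\emph{Your regularisation.} You use the finite-rank truncation $V_N=Q_NVQ_N$ with $Q_N:=P_N^{\otimes 3}$, and get smallness qualitatively from compactness of $DV$, where $D:=((h^{(s)})^{-1/2})^{\otimes 3}$. This works. Write $D(V-V_N)=(I-Q_N)DV+Q_NDV(I-Q_N)$. The first term tends to zero in norm because $DV$ is compact. For the second term you must pass to adjoints: $\|DV(I-Q_N)\|=\|(I-Q_N)VD\|$, and $VD$ is compact. Strong convergence upgrades to norm convergence only when the compact factor sits to the right of the strongly convergent sequence, so this step is needed. Combined with the second inequality of Lemma~\ref{Hamiltonian_bound}, the bound is uniform in $\eps$. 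Corollary~\ref{quantum_tail_bound} is neither needed nor applicable, since $V$ is not Hilbert--Schmidt. Your route gives a genuinely finite-rank symbol; the paper's gives a rate.

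Two points in your remainder step should be settled rather than left as alternatives.

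\emph{The symmetric bound is unavailable.} Your displayed bound needs $A_\eps^{3/2}\varrho_\eps A_\eps^{3/2}\in\mcL^1$, with $A_\eps:=H^{(s)}_{\eps,0}+1$, and this does not follow from \eqref{asslemma}. Positivity does not help: for $\varrho_\eps\ge 0$, $\|\varrho_\eps^{1/2}A_\eps^{3/2}\|_{\mcL^2}^2$ is exactly $\|A_\eps^{3/2}\varrho_\eps A_\eps^{3/2}\|_{\mcL^1}$. In the application $\rho_{\eps,m}(\mbt)$ is not positive anyway.

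\emph{The ordering of $W_\eps(f)$.} The one-sided estimate is the right tool. But with $W_\eps(f)$ to the left of $\Theta_\eps$ and the regularity only to the right of $\varrho_\eps$, it forces you to bound $A_\eps^{-3/2}W_\eps(f)A_\eps^{3/2}$. The relative bound you quote for $W_\eps(f)^*A_\eps W_\eps(f)$ does not give the $3/2$ power without further work, and you would then need a density step in $f$.

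The paper avoids this. First trade $\mathrm{Tr}(W_\eps(f)\mcV_\eps\varrho_\eps)$ for $\mathrm{Tr}(W_\eps(f)\varrho_\eps\mcV_\eps)$. The cost is $\mathrm{Tr}([W_\eps(f),\mcV_\eps]\varrho_\eps)=O(\eps)$, because the commutator is $\eps$ times a Wick polynomial bounded by $(\mcN_\eps+1)^{5/2}$. Doing this before splitting means $N$ plays no role. Then estimate
\begin{equation*}
|\mathrm{Tr}(W_\eps(f)\varrho_\eps\Theta_\eps(V-V_N))| \le \|(\mcN_\eps+1)^{\frac32}W_\eps(f)(\mcN_\eps+1)^{-\frac32}\|\,\|(\mcN_\eps+1)^{\frac32}\varrho_\eps A_\eps^{\frac32}\|_{\mcL^1}\,\|A_\eps^{-\frac32}\Theta_\eps(V-V_N)(\mcN_\eps+1)^{-\frac32}\|,
\end{equation*}
so that $W_\eps(f)$ is only conjugated by powers of $\mcN_\eps$. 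This ordering also makes the compact part a direct application of the Wigner measure result to $W_\eps(f)\varrho_\eps$, with no separate commutator step.

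With these changes, and the factor $\tfrac13$ in $\mcV=\tfrac13\Theta(V)$ restored on the classical side, your argument goes through.
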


\begin{proof}
    We use a density argument. Recall that
    \begin{align*}
        \mathcal{V}_\eps= \mathcal{V}^{\mathrm{W}}_\eps,\qquad \mathcal{V}(\varphi)= \langle \varphi^{\otimes 3}, \tilde {\mathcal{V}} \varphi^{\otimes 3} \rangle  ,\qquad \tilde{\mcV} := \mcS_3 \overline{\mcV},
    \end{align*}
    where $\mcS_3$ is the symmetrisation operator and
\begin{equation*}
\left(\overline {\mathcal{V}} \varphi^{\otimes 3}\right) (x,y,z) = v(x-y) v(x-z) \varphi(x) \varphi(y)\varphi(z).
\end{equation*}
Denote by 
      \begin{align*}
          H^{(s)}_{0,p}:= \sum_{j=1}^p h^{(s)}_j, \qquad   h^{(s)}_j:= 1\otimes \cdots  \otimes \underbrace{h^{(s)}}_{\text{jth position}} \otimes \cdots \otimes 1.
      \end{align*}  
      Also denote by $\Xi_R = \Xi(\frac{x}{R})$, where $\Xi$ is a smooth cut-off function which is one on $[0,1]$ and zero outside of $[-2,2]$. We define the regularised symbol $\tilde {\mathcal{V}}_R$ 
    \begin{align*}
        \tilde {\mathcal{V}}_R:= \Xi_R (H_{0,3}^{(s)}) \tilde{\mathcal{V}} = \Xi_R (H_{0,3}^{(s)})(H_{0,3}^{(s)}+1) (H_{0,3}^{(s)}+1)^{-1}   \tilde{\mathcal{V}} \in \mcL^{\infty}(\mfh^{\otimes_3}),
    \end{align*}
    since $(H_{0,3}^{(s)}+1)^{-1} $ is a compact operator and the others are bounded operators.   Using the Weyl commutation relations \eqref{Weyl_CR}, it follows from \cite[Proposition 2.9]{AN08} that
    \begin{equation*}
        [W_{\eps}(f),\mcV_\eps] = \left(\mcV\left(u + \frac{\ii \eps f}{\sqrt{2}}\right) - \mcV(u)\right)^{\mathrm{W}} = \eps B_\eps,
    \end{equation*}
    where $B_\eps$ can be bounded using powers of the number operator. Therefore it suffices to prove the result for $\mathrm{Tr}(W_\eps(f)\varrho_\eps \mcV_\eps)$.
    We write 
    \begin{multline}
   \left \vert  {\rm Tr } \left(  W_{\eps}(f) \varrho_{\eps}   \ \mathcal{V}^{\mathrm W}\right) - \int_{\mfh} \e^{\sqrt{2}\ii \pi  \mathrm{Re}\langle f,\varphi\rangle} \mathcal{V}(\varphi)  \, \dd \mu (\varphi)  \right \vert \leq  \left \vert   {\rm Tr } \left(  W_{\eps}(f)    \varrho_{\eps}  \left(  \mathcal{V}^{\mathrm W} -  \mathcal{V}_{R}^{\mathrm W} \right)\right) \right \vert 
\\ +\left \vert  {\rm Tr } \left(  W_{\eps}(f)    \varrho_{\eps}  \mathcal{V}_R^{\mathrm W}\right) - \int_{\mfh} \e^{\sqrt{2}\ii \pi  \mathrm{Re}\langle f,\varphi\rangle}  \tilde {\mathcal{V}}_R(\varphi)  \, \dd \mu (\varphi)  \right \vert
\\ + \left \vert \int_{\mfh} \e^{\sqrt{2}\ii \pi  \mathrm{Re} \langle f,\varphi\rangle} \left(  {\mathcal{V}}(\varphi)- {\mathcal{V}}_R(\varphi)\right)  \, \dd \mu (\varphi) \right \vert. \label{dominatedconvresult1} 
\end{multline} 
The second term on the right-hand side of \eqref{dominatedconvresult1} converges to $0$ because the symbol is compact. For the third term, one notes that we have pointwise convergence as $R \to \infty$ and both terms are bounded by $\|v\|_{L^\infty}^2\|\varphi\|_{\mfh}^4$, so convergence follows from the dominated convergence theorem. For the first term, we have 
\begin{multline}
\label{something}
   \left \vert   {\rm Tr } \left(  W_{\eps}(f)  \ \varrho_{\eps} \left( \mathcal{V}^{ \mathrm{W}} - \mathcal{V}_R^{ \mathrm{W}} \right)\right) \right \vert \leq \left \Vert (\mathcal{N}_\eps +1)^{\frac{3}{2}}  W_{\eps}(f) (\mathcal{N}_\eps +1)^{-\frac{3}{2}}   \right \Vert_{\mathcal{L}} \\
   \times \left \Vert (\mathcal{N}_\eps +1)^{\frac{3}{2}}  \varrho_\eps (H^{(s)}_{\eps,0} +1)^{\frac{3}{2}}   \right \Vert_{\mathcal{L}^1} 
    \left \Vert (H^{(s)}_{\eps,0}  +1)^{-\frac{3}{2}}  \left( \mathcal{V}^{\mathrm{W}} - \mathcal{V}_R^{ \mathrm{W}} \right) (\mathcal{N}_\eps +1)^{-\frac{3}{2}}   \right \Vert_{\mathcal{L}} .
\end{multline}
  with  {$s \in (0,1/2)$.}  The first term is bounded by \cite[Proposition 2.4]{NierAmmari}. The second term satisfies
  \begin{align*}
      \left \Vert (\mathcal{N}_\eps +1)^{\frac{3}{2}}  \varrho_\eps (H^{(s)}_{\eps,0} +1)^{\frac{3}{2}}   \right \Vert_{\mathcal{L}^1} \leq C (K+1)^{\frac{3}{2}},
  \end{align*}
  where $C $ is independent of $\eps$ by \eqref{asslemma}. Here we have also used the support properties of $\chi$. We now consider the third term on the right hand side of \eqref{something}. Setting $\tilde \Xi_R(\cdot):= 1- \Xi_R(\cdot)$, we can write  
  \begin{align*}
  \tilde {\mathcal{V}}- \tilde {\mathcal{V}}_R  = \tilde\Xi_R (H_{0,3}^{(s)}) \tilde {\mathcal{V}}.
  \end{align*}
Let $\Phi,\Psi \in \mathcal{D}$ for a dense domain $\mathcal{D}$. Then we have 
    \begin{align}
    \nonumber
       &  \langle \Phi, \left( \mathcal{V}-\mathcal{V}_R \right)^{\mathrm{W}}  \Psi \rangle 
         \\ 
         \nonumber
         &= \sum_{n\in \mathbb{N}} \langle \Phi^{(n)}, \left( \mathcal{V}-\mathcal{V}_R \right)^{\mathrm{W}}  \Psi^{(n)} \rangle 
         \\ 
         \nonumber
         &= \sum_{n \geq 3}   \eps^3  n(n-1)(n-2)\left \langle \Phi^{(n)}, \mathcal{S}_n\left(  \left(\tilde\Xi_R (H_{0,3}^{(s)}) \tilde{  \mathcal{V}} \right) \otimes 1^{(n-3)}\right)\Psi^{(n)} \right \rangle 
          \\ 
          \nonumber
          &= \sum_{n \geq 3}   \eps^3  n(n-1)(n-2)\left \langle \Phi^{(n)},  \left ( H^{(s)}_{0,3} \right)^{\frac{1}{2}} \left( H^{(s)}_{0,3}\right)^{-\frac{1}{2}}\left( \tilde\Xi_R (H_{0,3}^{(s)}) \tilde{  \mathcal{V}}  \otimes 1^{(n-3)}\right)\Psi^{(n)} \right \rangle 
          \\ 
          \label{horrid_fock_1}
          & \leq  \sum_{n \geq 3}   \eps^3  n(n-1)(n-2) \left\Vert     \left ( H^{(s)}_{0,3} \right)^{\frac{1}{2}}   \Phi^{(n)}\right \Vert_{\mfh^{(n)}}   \left \Vert   \left( H^{(s)}_{0,3}\right)^{-\frac{1}{2}} \tilde\Xi_R (H_{0,3}^{(s)})   \right \Vert_{\mathcal{L} (\mathfrak{h}^{(n)} )} \left \Vert \tilde{\mathcal{V}} \right \Vert_{\mathcal{L} (\mathfrak{h}^{(n)} )}  \left\Vert    \Psi^{(n)}\right \Vert_{\mfh^{(n)}}
    \end{align}
    where we have used Cauchy-Schwarz in the above computation.  Here $\mcS_n$ is the symmetrisation operator. We have
    \begin{align}
    \nonumber
     \eps^3 n(n-1)(n-2) & \left\Vert     \left ( H^{(s)}_{0,3} \right)^{\frac{1}{2}}  \Phi^{(n)}\right \Vert_{\mfh^{(n)}}^2  \\
     \nonumber
     & =     \eps^3 n(n-1)(n-2) \left \langle \Phi^{(n)} ,    H^{(s)}_{0,3}    \Phi^{(n)} \right \rangle 
    \\ 
    \nonumber 
    &=   \eps^3 n(n-1)(n-2) \left \langle \Phi^{(n)} ,   \left ( h^{(s)}_1+ h^{(s)}_2 +h^{(s)}_3 \right)   \Phi^{(n)} \right \rangle 
    \\ 
    \nonumber
    & {\leq}  3 \eps^3 n(n-1)(n-2) \left \langle \Phi^{(n)} ,  h^{(s)}_1 h^{(s)}_2 h^{(s)}_3   \Phi^{(n)} \right \rangle
    \\ 
    \nonumber
    &\lesssim \left \langle \Phi^{(n)} , \sum_{ 1\leq i_1,i_2, i_3\leq n} \eps ^3  h^{(s)}_{i_1}  h^{(s)}_{i_2}  h^{(s)}_{i_3}   \Phi^{(n)} \right \rangle 
    \\ 
    \label{horrid_fock}
    & \leq \left \Vert \left( H_{\eps,0}^{(s)}\right)^{3/2}   \Phi^{(n)} \right \Vert^2 \leq \left \Vert  \left( H_{\eps,0}^{(s)}+1\right)^{3/2}   \Phi^{(n)} \right \Vert^2
    \end{align}
    where we have used  $ h^{(s)}\geq 1$, that $ x_1+x_2+x_3 \leq 3 x_1 x_2 x_3$ for $x_i\geq 1$, and that $\Phi^{(n)}$ is symmetric. Similarly 
    \begin{align}
    \label{horrid_fock_2}
     \eps^3 n(n-1) (n-2) \left\Vert    \Psi^{(n)}\right \Vert_{\mfh^{(n)}}^2
  \leq   \left \Vert  \left(\mathcal{N}_\eps+1\right)^{3/2}   \Psi^{(n)} \right \Vert^2.
    \end{align}
    By the spectral theorem, we have 
    \begin{align}
    \label{spectral_theorem}
         \left \Vert   \left( H^{(s)}_{0,3}\right)^{-\frac{1}{2}} \tilde\Xi_R (H_{0,3}^{(s)})   \right \Vert_{\mathcal{L} (\mathfrak{h}^{(n)} )} \leq \frac{1}{\sqrt{R}}.
    \end{align}
   Combining \eqref{horrid_fock_1}--\eqref{spectral_theorem}, we get 
   \begin{align*}
    & \langle \Phi,  \left( \mathcal{V}-\mathcal{V}_R \right)^{\mathrm{W}}  \Psi \rangle_{\mcF}
          \\ & \leq  \sum_{n\geq 3} \eps^3  n(n-1)(n-2) \left\Vert     \left ( H^{(s)}_{0,3} \right)^{\frac{1}{2}}   \Phi^{(n)}\right \Vert_{\mfh^{(n)}}   \left \Vert   \left( H^{(s)}_{0,3}\right)^{-\frac{1}{2}} \tilde\Xi_R (H_{0,3}^{(s)})   \right \Vert_{\mathcal{L} (\mathfrak{h}^{(n)} )} \left \Vert \tilde{\mathcal{V}} \right \Vert_{\mathcal{L} (\mathfrak{h}^{(n)} )}  \left\Vert    \Psi^{(n)}\right \Vert_{\mfh^{(n)}}
          \\ & \lesssim \frac{K^{\frac{3}{2}} }{\sqrt{R}}  \left \Vert  \left(\mathcal{N}_\eps+1\right)^{3/2}   \Psi \right \Vert \left \Vert  \left( H_{\eps,0}^{(s)}+1\right)^{3/2}   \Phi \right \Vert \Vert v\Vert^2.
    \end{align*}
This implies that 
\begin{align*}
 \left \Vert   \left( H_{\eps,0}^{(s)}+1\right)^{-3/2}  \left( \mathcal{V}-\mathcal{V}_R \right)^{\mathrm{W}}  \left(\mathcal{N}_\eps+1\right)^{-3/2} \right \Vert_{\mathcal{L}} \lesssim  \frac{\Vert v\Vert^2 K^{\frac{3}{2}}  }{\sqrt{R}} \to 0
\end{align*}
as $R \to \infty$, which completes the proof.   
\end{proof}}

\begin{remark}
\label{remark_number_bounds_powers}
We note that we stated Lemma \ref{Wigner_measure_V_lemma} for states truncated with respect to the number operator. This can be relaxed to states that can absorb any power of the number operator.
\end{remark}

We also have a similar problem in that the symbols we need to consider for $\Theta_\eps(\xi)$ are not compact. To deal with this issue, we prove the following lemma. We first adopt the following notation
\begin{equation}
\label{Cp_B_p_definition}
\mathcal{C}_p := \mfBp \cup \{I_p\}, \quad \mathfrak{B}_p := \{\xi \in \mcL^2(\mfh^{(p)}) : \|\xi\|_{\mcL^2(\mfh^{(p)})} \leq 1\}.
\end{equation}

  \begin{lemma}
  \label{theta_convergence_lemma}
   Let $p \in \N_0$ and $s \in (0,1/2)$. Suppose that $\varrho_\eps$ is a family of trace-class quantum operators such that $\varrho_\eps = \varrho_\eps \chi(\mcN_\eps)$, $[\varrho_\eps,\mcN_\eps] = 0$, and
   \begin{align*}
       \| \varrho_\eps (H_{\eps,0}^{(s)} + 1)^{\frac{p}{2}}\|_{\mc{L}^1} < C(p),
   \end{align*}
uniformly in $\eps$. Moreover suppose that
\begin{equation*}
\mathcal{M}(\varrho_\eps) = \{\nu\}.
\end{equation*} 
Then for any fixed  $\xi \in \mcCp$,  one has 
   \begin{align*}
       \mathrm{Tr}(\Theta_\eps(\xi) \varrho_\eps) \underset{\eps \to 0}{ \longrightarrow} \int  \,  \Theta(\xi) \, \dd \nu .
   \end{align*}
\end{lemma}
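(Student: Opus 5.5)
We split into the cases $\xi = I_p$ and $\xi \in \mfBp$. For $\xi = I_p$, $\Theta_\eps(I_p) = \mcN_\eps(\mcN_\eps - \eps)\cdots(\mcN_\eps-(p-1)\eps)$ and $\Theta(I_p) = \mcN^p$. Since $\varrho_\eps = \varrho_\eps\chi(\mcN_\eps)$ and $\chi$ is supported in $[0,K]$, the state absorbs arbitrary powers of $\mcN_\eps$ uniformly in $\eps$. We then use the convergence of Wick symbols from \cite{AN08}: for a state whose Wigner measure is unique and which has uniform bounds on all powers of $\mcN_\eps$, the moments $\mathrm{Tr}((\|\cdot\|^{2})^{\WW}_\eps)^{p}\varrho_\eps)$ converge to $\int \|\vph\|^{2p}\,\dd\nu$. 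This is the standard statement for polynomials in the number operator (e.g.\ \cite[Corollary 6.14]{AN08}). The lower-order terms in $\eps$ are $O(\eps)$ times bounded moments, so they vanish in the limit.

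For $\xi \in \mfBp$, we follow the density argument of Lemma \ref{Wigner_measure_V_lemma}. Set $\xi_R := \Xi_R(H^{(s)}_{0,p})\,\xi$. Since $\xi$ is Hilbert–Schmidt it is already compact, so $\xi_R \in \mcL^\infty$, and
\[
\mathrm{Tr}(\Theta_\eps(\xi_R)\varrho_\eps) \to \int \Theta(\xi_R)\,\dd\nu
\]
by the Wigner measure convergence for compact symbols, again using the uniform number-operator bounds. We then show the two remainders vanish uniformly in $\eps$ as $R \to \infty$.

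\emph{Quantum remainder.} We write
\[
|\mathrm{Tr}(\Theta_\eps(\xi-\xi_R)\varrho_\eps)| \le \|(\mcN_\eps+1)^{p/2}\varrho_\eps(H^{(s)}_{\eps,0}+1)^{p/2}\|_{\mcL^1}\, \|(H^{(s)}_{\eps,0}+1)^{-p/2}\Theta_\eps(\xi-\xi_R)(\mcN_\eps+1)^{-p/2}\|_{\mcL}.
\]
Commutation with $\mcN_\eps$ and the cut-off bound the first factor by $(K+1)^{p/2}C(p)$. For the second factor, the second inequality of Lemma \ref{Hamiltonian_bound} gives the bound $\|((h^{(s)})^{-1/2})^{\otimes p}\tilde\Xi_R(H^{(s)}_{0,p})\xi\|$. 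Since $((h^{(s)})^{-1/2})^{\otimes p}$ commutes with $H^{(s)}_{0,p}$ and is at most $(H^{(s)}_{0,p})^{-1/2}$ up to constants, this factor is bounded, exactly as in \eqref{spectral_theorem}, by $R^{-1/2}\|\xi\|$.

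\emph{Classical remainder.} We have $|\Theta(\xi-\xi_R)| \le \|\xi-\xi_R\|\,\|\vph\|^{2p}$ by Lemma \ref{random_variable_bound_lemma}, and $\|\xi - \xi_R\|_{\mcL} \le \|\tilde\Xi_R(H^{(s)}_{0,p})\xi\|_{\mcL^2} \to 0$ by dominated convergence in the eigenbasis, since $\xi$ is Hilbert–Schmidt. By Lemma \ref{Wigner_existence_lemma} (or because $\nu$ is supported on $\{\mcN \le K\}$), $\int\|\vph\|^{2p}\,\dd\nu < \infty$, so this remainder tends to $0$.

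Combining the three estimates and letting first $\eps\to0$ and then $R\to\infty$ gives the claim.

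The main obstacle is justifying convergence for the compact-symbol part together with the identity case. The results of \cite{AN08} for $b\in\mcP^\infty_{p,p}$ require uniform control of $\mcN_\eps$-moments, which is supplied by the support of $\chi$; the identity case, being a noncompact symbol, requires the separate moment argument above rather than the compact-symbol theorem. If the cited corollary is unavailable, the fallback is to derive moment convergence from tightness plus uniform integrability. This works because all moments are bounded by powers of $K$ on both the quantum and classical sides, so $\mcN^p$ can be replaced by the bounded continuous function $\min(\mcN,K+1)^p$.
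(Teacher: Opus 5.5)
\textbf{The case $\xi\in\mfBp$ is fine; the case $\xi=I_p$ has a genuine gap.} In the $\mfBp$ case your cut-off is even superfluous: a Hilbert--Schmidt $\xi$ is already compact, so the compact-symbol theorem with the number-operator bounds applies directly.

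For $I_p$ you rely on the principle that uniqueness of the Wigner measure, together with uniform bounds on all powers of $\mcN_\eps$, forces $\mathrm{Tr}(\mcN_\eps^p\varrho_\eps)\to\int\|\vph\|^{2p}\,\dd\nu$. That principle is false. It is a no-loss-of-mass statement and must be proved separately.

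Here is a counterexample. Take an orthonormal basis $(e_n)$, set $N_\eps:=\lfloor\eps^{-1}\rfloor$, choose $n(\eps)\to\infty$, and let $\varrho_\eps:=|e_{n(\eps)}^{\otimes N_\eps}\rangle\langle e_{n(\eps)}^{\otimes N_\eps}|$.
\begin{itemize}
\item This state commutes with $\mcN_\eps$, satisfies $\varrho_\eps=\varrho_\eps\chi(\mcN_\eps)$ for $K\ge1$, and has all number moments bounded by $1$.
\item A one-mode (Laguerre polynomial) computation, using $\langle e_{n(\eps)},f\rangle\to0$ and $\eps N_\eps\le 1$, gives $\mathrm{Tr}(W_\eps(f)\varrho_\eps)\to1$ for every $f$. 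Hence $\mathcal{M}(\varrho_\eps)=\{\delta_0\}$.
\item Yet $\mathrm{Tr}(\mcN_\eps\varrho_\eps)=\eps N_\eps\to1\neq0$.
\end{itemize}

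Your fallback fails for the same reason. Wigner convergence only tests against cylindrical functions that are continuous for the weak topology, and $\|\vph\|^2$ is merely weakly lower semicontinuous. Capping it at $K+1$ changes nothing, because the obstruction is mass escaping to arbitrarily high frequencies, not lack of integrability. The example violates exactly the hypothesis $\|\varrho_\eps(H^{(s)}_{\eps,0}+1)^{p/2}\|_{\mcL^1}<C(p)$. Your argument for $I_p$ never uses that hypothesis, while for $\xi\in\mfBp$ it is not needed at all.

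\textbf{The fix.} Run your $\mfBp$ argument verbatim for $\xi=I_p$; this is what the paper does. It regularises every $\xi\in\mcCp$ as $\Xi_R(H^{(s)}_{0,p})(H^{(s)}_{0,p}+1)(H^{(s)}_{0,p}+1)^{-1}\xi$, so compactness comes from the resolvent of $H^{(s)}_{0,p}$ even when $\xi$ is the identity. Concretely:
\begin{itemize}
\item $\Xi_R(H^{(s)}_{0,p})$ has finite rank, so compact-symbol convergence applies to it.
\item By your trace estimate and the second inequality of Lemma \ref{Hamiltonian_bound}, the quantum remainder is at most $(K+1)^{p/2}C(p)\,\|((h^{(s)})^{-1/2})^{\otimes p}(1-\Xi_R)(H^{(s)}_{0,p})\|\le (K+1)^{p/2}C(p)\sqrt{p/R}$, uniformly in $\eps$. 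This uses $\prod_j h^{(s)}_j\ge\frac1p\sum_j h^{(s)}_j$, valid since each $h^{(s)}_j\ge1$.
\item The classical remainder $\int\langle\vph^{\otimes p},(1-\Xi_R)(H^{(s)}_{0,p})\vph^{\otimes p}\rangle\,\dd\nu$ tends to $0$ by dominated convergence, since $(1-\Xi_R)(H^{(s)}_{0,p})\to0$ strongly and $\int\|\vph\|^{2p}\,\dd\nu<\infty$.
\end{itemize}
The $H^{(s)}_{\eps,0}$ bound is precisely what gives uniform-in-$\eps$ control of the high-frequency tail, ruling out the loss of mass in the example above.
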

\begin{proof}
The proof is analogous to the proof of Lemma \ref{Wigner_measure_V_lemma}, except we cut-off in $H_{0,p}^{(s)}$ and insert $(H_{\eps,0}^{(s)} + 1)^{\pm \frac{p}{2}}$. Here we consider the regularised symbol $\tilde{\xi}_R$ given by
\begin{align*}
        \tilde {\xi}_R:= \Xi_R (H_{0,p}^{(s)}) \tilde{\xi} = \Xi_R (H_{0,p}^{(s)})(H_{0,p}^{(s)}+1) (H_{0,p}^{(s)}+1)^{-1}   \tilde{\xi} \in \mcL^{\infty}(\mfh^{\otimes_p})
    \end{align*}
Note that here we do not need to commute $\Theta_\eps(\xi)$ and $W_\eps(f)$. We omit the details.
\end{proof}

\subsection{Miscellaneous Estimates}
In this section, we will collect some miscellaneous bounds which will be used throughout the paper. We have the following lemma which bounds the commutator of a Weyl operator with smooth $f$ and the free Hamiltonian.
\begin{lemma}
\label{free_number_bound}
Suppose that $f \in C^\infty(\T)$. Then one has the following bound
 \begin{equation*}
\|[W_\eps(f),H_{\eps,0}] (\mcN_\eps + 1)^{-\frac{1}{2}}\|_{\mc{L}} \leq \eps \sqrt{2} \|f\|_{H^2}+ \frac{\eps^2}{2} \|f\|_{H^1}^2.
\end{equation*}

\end{lemma}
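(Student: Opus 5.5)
We will compute the commutator exactly by conjugating $H_{\eps,0}$ with the Weyl operator. Write $\phi_\eps(f) := \frac{1}{\sqrt{2}}(\vph_\eps(f)+\vph^*_\eps(f))$, so that $W_\eps(f) = \e^{\ii \phi_\eps(f)}$. By the canonical commutation relations,
\begin{equation*}
[\vph_\eps(x),\phi_\eps(f)] = \tfrac{\eps}{\sqrt{2}}f(x), \qquad [\vph^*_\eps(x),\phi_\eps(f)] = -\tfrac{\eps}{\sqrt{2}}\overline{f(x)}.
\end{equation*}
Hence
\begin{equation*}
W_\eps(f)^*\vph_\eps(x)W_\eps(f) = \vph_\eps(x) + \tfrac{\ii\eps}{\sqrt{2}}f(x), \qquad W_\eps(f)^*\vph^*_\eps(x)W_\eps(f) = \vph^*_\eps(x) - \tfrac{\ii\eps}{\sqrt{2}}\overline{f(x)}.
\end{equation*}
This is the standard translation property, which also follows from \eqref{Weyl_CR} by differentiation. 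Inserting these into \eqref{quintic_quantum_free_Hamiltonian} gives
\begin{equation*}
W_\eps(f)^* H_{\eps,0} W_\eps(f) = H_{\eps,0} + \tfrac{\ii\eps}{\sqrt{2}}\big(\vph^*_\eps(hf) - \vph_\eps(hf)\big) + \tfrac{\eps^2}{2}\langle f,hf\rangle.
\end{equation*}
Since $f\in C^\infty$, we have $hf\in L^2$, so each term is well-defined on the domain of $H_{\eps,0}$. Multiplying on the left by $W_\eps(f)$ then gives
\begin{equation*}
[W_\eps(f),H_{\eps,0}] = W_\eps(f)\Big(\tfrac{\ii\eps}{\sqrt{2}}\big(\vph^*_\eps(hf)-\vph_\eps(hf)\big) + \tfrac{\eps^2}{2}\langle f,hf\rangle\Big).
\end{equation*}

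Next we estimate each term after multiplying on the right by $(\mcN_\eps+1)^{-1/2}$. The operator $W_\eps(f)$ is unitary, so it can be dropped from the norm.

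For the annihilation term we use the standard bound $\|\vph_\eps(g)\Psi\|\le \|g\|\,\|\mcN_\eps^{1/2}\Psi\|$. For the creation term we use $\|\vph^*_\eps(g)\Psi\|\le \|g\|\,\|(\mcN_\eps+\eps)^{1/2}\Psi\|$. Both follow sector by sector from the definitions, and since $\eps<1$ they give
\begin{equation*}
\|\vph^\sharp_\eps(hf)(\mcN_\eps+1)^{-1/2}\|\le \|hf\|_{L^2}\le \|f\|_{H^2}.
\end{equation*}
Here we have used \eqref{eigenvalues} together with the Sobolev normalisation; any resulting constant is absorbed by the convention for $\|\cdot\|_{H^s}$. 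The two field terms together therefore contribute at most $\frac{2\eps}{\sqrt 2}\|f\|_{H^2}=\eps\sqrt2\|f\|_{H^2}$.

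For the scalar term, $\langle f,hf\rangle = \|\nabla f\|^2+\|f\|^2 = \|f\|_{H^1}^2$, again up to the normalisation of $H^1$. Since $\|(\mcN_\eps+1)^{-1/2}\|\le 1$, this term contributes at most $\frac{\eps^2}{2}\|f\|_{H^1}^2$. Adding the two contributions gives the claimed bound.

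The main obstacle is justifying the conjugation identity rigorously, given that the operators involved are unbounded. We plan to handle this as follows. Set $F(t):=W_\eps(tf)^*H_{\eps,0}W_\eps(tf)\Psi$ for $\Psi$ in a dense core, namely finite-particle vectors with smooth components. Differentiating in $t$, one obtains $F'(t) = W_\eps(tf)^*\,\ii[H_{\eps,0},\phi_\eps(f)]\,W_\eps(tf)\Psi$. The commutator here is computed directly:
\begin{equation*}
\ii[H_{\eps,0},\phi_\eps(f)] = \tfrac{\ii\eps}{\sqrt2}\big(\vph^*_\eps(hf)-\vph_\eps(hf)\big).
\end{equation*}
Integrating $F'$ from $0$ to $1$ then yields the conjugation identity. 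Smoothness of $f$ is what guarantees that $W_\eps(tf)$ preserves the domain of $H_{\eps,0}$ on this core. Alternatively, we can check the identity on coherent vectors, where everything is explicit.
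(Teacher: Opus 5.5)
Your proposal is correct and follows essentially the same route as the paper: write $[W_\eps(f),H_{\eps,0}] = W_\eps(f)\bigl(H_{\eps,0} - W_\eps(f)^*H_{\eps,0}W_\eps(f)\bigr)$, identify the difference as a field term in $hf$ plus the constant $\frac{\eps^2}{2}\langle f,hf\rangle$, drop the unitary $W_\eps(f)$, and bound each piece against $(\mcN_\eps+1)^{-1/2}$. The differences are minor: you derive the translation identity by hand from the canonical commutation relations and sketch a domain justification, whereas the paper simply cites \cite[Proposition 2.10]{AN08}; and your displayed commutator has the wrong overall sign (it equals $-W_\eps(f)(\cdots)$, because $H_{\eps,0}W_\eps(f) = W_\eps(f)\,W_\eps(f)^*H_{\eps,0}W_\eps(f)$), which is harmless for a norm bound.
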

\begin{proof}
    Note that the commutator $[W_\eps(f), H_{\eps,0}]$ can be written as 
    \begin{align*}
        [W_\eps(f), H_{\eps,0}]=  W_\eps (f) \Big( \eps \mathcal{R}_1(f) +\eps^2\mathcal{R}_2(f) \Big )^{\rm W}.
    \end{align*}
   Indeed, by using 
\begin{align*}
[W_\eps(f), H_{\eps,0}] & = W_\eps (f) \Big( H_{\eps,0}- W_\eps (f)^* H_{\eps,0} W_\eps (f) \Big)
\\ & =  W_\eps (f) \Big( h(\cdot)- h(\cdot + \frac{\ii \eps}{\sqrt{2}}  f) \Big )^{\rm W}.
\end{align*}
where we have used that 
\[  W_\eps (f)^* H_{\eps,0} W_\eps (f)= W_\eps (f)^* h^{\mathrm{W}} W_\eps (f) = \Big(h(\cdot + \frac{\ii \eps}{\sqrt{2}}  f) \Big )^{\rm W},\]
see \cite[Proposition 2.10]{AN08}. By  expanding, we get 
\begin{align}\label{taylorexpansion}
{h(\varphi)-h\left(\varphi+ \frac{\ii \eps}{\sqrt{2}}  f\right)}=\sum_{j=1}^2  \eps^{j}\mathcal{R}_j(f)[\varphi].
\end{align}
with 
\begin{align}
&\mathcal{R}_1(f)[\varphi]:=  \sqrt{2} \Im m \langle \varphi, hf \rangle \in \mathcal{P}_{0,1}+ \mathcal{P}_{1,0},
\\ &\mathcal{R}_2(f):= -\frac{1}{2} \langle f, hf \rangle  .
\end{align}
 Indeed,  the associated operator in  $\mathcal{R}_1(f)[\varphi] $ is
\begin{align}
 &  \widetilde{\mathcal{R}_1(f)}:\lambda \in \C \rightarrow \lambda h f \in {\rm Span}(hf)\subset \mfh,\qquad    \widetilde{\mathcal{R}_1(f)}  \in \mathcal{L}^\infty(\C, \mfh)
\end{align}
since it is finite-rank operator and  
\begin{align*}
   \left \Vert  \widetilde{\mathcal{R}_1(f)}(\lambda) \right \Vert_{\mfh}= \vert \lambda \vert \left \Vert hf  \right \Vert_{\mfh}=\vert \lambda \vert \left \Vert f  \right \Vert_{H^2}.
\end{align*}
Then we get
\begin{align*}
    \|  \Big(  \mathcal{R}_1(f)  \Big )^{\rm W}(\mcN_\eps + 1)^{-\frac{1}{2}}\|_{\mc{L}} \leq \sqrt{2}\| \widetilde{\mathcal{R}_1(f)}\|_{\mathcal{L}(\C ;\mfh)} =  \sqrt{2} \|f\|_{H^2}.
\end{align*}
Similarly, we get 
\begin{align*}
    \|  \Big(  \mathcal{R}_2(f)  \Big )^{\rm W}(\mcN_\eps + 1)^{-\frac{1}{2}}\|_{\mc{L}} \leq \frac{\|f\|^2_{H^1}}{2}.
\end{align*}
Therefore, we have 
\begin{equation*}
\|[W_\eps(f),H_{\eps,0}] (\mcN_\eps + 1)^{-\frac{1}{2}}\|_{\mc{L}} \leq \eps \sqrt{2} \|f\|_{H^2}+ \frac{\eps^2}{2} \|f\|_{H^1}^2. 
\end{equation*}
\end{proof}
We also have the following bounds, which follow from \cite{AN08}.
\begin{lemma}
\label{interaction_number_bound}
Suppose that $v \in L^\infty$. Then 
\begin{equation*}
\|\mc{V}_\eps (\mcN_\eps + 1)^{-3}\|_{\mcL} \leq C\|v\|_{L^\infty}^2 < \infty,
\end{equation*}
uniformly in $\eps$. Moreover, suppose that $\xi \in \mcL(\mfh^{(p)})$, then
\begin{equation*}
\|\Theta_\eps(\xi)(\mcN_\eps + 1)^{-p}\|_{\mcL} \leq C(p) < \infty,
\end{equation*}
uniformly in $\eps$.
\end{lemma}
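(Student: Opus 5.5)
The plan is to work sector by sector on Fock space and reduce both bounds to operator bounds for the kernels on $\mfh^{(n)}$. Both $\mc{V}_\eps$ and $\Theta_\eps(\xi)$ commute with $\mcN_\eps$, since each has equally many creation and annihilation operators. So it suffices to bound $\|\Theta_\eps(\xi)|_{\mfh^{(n)}}\|_{\mcL(\mfh^{(n)})}$ by $C(p)(\eps n+1)^p$ uniformly in $n$ and $\eps$. The interaction bound is then the special case $\mc{V}_\eps = \frac{1}{3}\Theta_\eps(V)$ with $p=3$, together with $\|V\|_{\mcL(\mfh^{(3)})}\le \|v\|_{L^\infty}^2$. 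This last inequality holds because $V$ acts as multiplication by the function \eqref{three_body_multiplication}, whose sup norm is at most $\|v\|_{L^\infty}^2$.

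For the sector bound, use the Wick quantisation formula recalled in Section \ref{Sec:Wick_quantisation} with $p=q$:
\begin{equation*}
\Theta_\eps(\xi)|_{\mfh^{(n)}} = 1_{[p,\infty)}(n)\,\frac{n!}{(n-p)!}\,\eps^{p}\,\mcS_{n}\bigl(\xi\otimes 1^{\otimes(n-p)}\bigr).
\end{equation*}
The symmetriser $\mcS_n$ is an orthogonal projection, and $\|\xi\otimes 1^{\otimes(n-p)}\|_{\mcL}=\|\xi\|_{\mcL(\mfh^{(p)})}$. Hence the norm on the $n$-th sector is at most $\eps^p n(n-1)\cdots(n-p+1)\|\xi\|\le (\eps n)^p\|\xi\|$. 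Alternatively, one can run the computation in the proof of Lemma \ref{Hamiltonian_bound} with $h^{(s)}$ replaced by the identity. That argument gives $|\langle\Phi^{(n)},\Theta_\eps(\xi)\Psi^{(n)}\rangle|\le \eps^p n^p\|\xi\|\|\Phi^{(n)}\|\|\Psi^{(n)}\|$ directly, which is the same estimate, so either route can be cited.

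Finally, combine with the spectral calculus of $\mcN_\eps$. On $\mfh^{(n)}$ the operator $\mcN_\eps$ acts as $\eps n$, so
\begin{equation*}
\|\Theta_\eps(\xi)(\mcN_\eps+1)^{-p}\|_{\mcL}=\sup_n \|\Theta_\eps(\xi)|_{\mfh^{(n)}}\|\,(\eps n+1)^{-p}\le \sup_n \frac{(\eps n)^p}{(\eps n+1)^p}\|\xi\|\le \|\xi\|.
\end{equation*}
This is uniform in $\eps$, with $C(p)=\|\xi\|$. For $\mc{V}_\eps$ we get the bound $\frac13\|v\|_{L^\infty}^2$.

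The only delicate point is bookkeeping: the normalisation of the Wick quantisation must match $\Theta_\eps$, including the factor $\sqrt{n!(n+q-p)!}/(n-p)!=n!/(n-p)!$ when $p=q$. One must also make sure that the symmetrisation is applied to an operator on symmetric vectors, so that no extra combinatorial factor $\binom{n}{p}$ or $p!$ sneaks in. I would check this against the first line of the proof of Lemma \ref{Hamiltonian_bound}, where the same prefactor $p!\,\eps^p\binom{n}{p}$ appears. The statement uses the power $-p$ rather than $-p/2$, and this leaves ample room for any constant discrepancy.
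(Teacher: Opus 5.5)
Your proof is correct and follows essentially the same route as the paper. The paper gives no argument beyond citing the number estimates of \cite{AN08}, and those estimates rest on exactly your sector-wise computation: $\|\Theta_\eps(\xi)|_{\mfh^{(n)}}\|\le \eps^p\frac{n!}{(n-p)!}\|\xi\|\le(\eps n)^p\|\xi\|$, combined with $[\Theta_\eps(\xi),\mcN_\eps]=0$; your explicit constants $\|\xi\|$ and $\frac13\|v\|_{L^\infty}^2$ (via $\mc{V}_\eps=\frac13\Theta_\eps(V)$) agree with the paper's normalisations.
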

Finally, we collect some bounds for the free Gibbs state. The first is a consequence of the quantum Wick theorem, see Lemma \ref{quantum_Wick_theorem_lemma}.
\begin{lemma}
\label{number_operator_powers_lemma}
Let $p \in \N_0$. Then one has
\begin{equation*}
\frac{1}{Z_{\eps,0}} \mathrm{Tr}(\mcN_\eps^p \e^{-H_{\eps,0}}) < C(p) < \infty,
\end{equation*}
uniformly in $\eps$.
\end{lemma}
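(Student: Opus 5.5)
The plan is to diagonalise $H_{\eps,0}$ in the eigenbasis $(u_k)$ of $h$ and reduce the statement to a computation with independent geometric (Bose–Einstein) occupation numbers, which is the quantum Wick theorem referred to as Lemma \ref{quantum_Wick_theorem_lemma}. Writing $a_k := \vphe(u_k)$, one has $H_{\eps,0} = \sum_k \lambda_k a_k^* a_k$ and $\mcN_\eps = \sum_k a_k^* a_k$. Under $\rho_{\eps,0}$ the mode occupation numbers $n_k := \eps^{-1} a_k^* a_k$ are independent geometric random variables with parameter $q_k := \e^{-\eps\lambda_k}$. Indeed, $\e^{-H_{\eps,0}} = \bigotimes_k \e^{-\eps \lambda_k n_k}$ and $\mathrm{Tr}(\e^{-H_{\eps,0}}) = \prod_k (1-q_k)^{-1}$, which is finite since $\sum_k q_k < \infty$. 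Thus $\mcN_\eps = \eps \sum_k n_k$ is a sum of independent random variables under the free state, and the claim reduces to bounding moments of this sum.

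First, I compute the mean. One has $\rho_{\eps,0}(\mcN_\eps) = \sum_k \eps q_k/(1-q_k) = \sum_k \eps/(\e^{\eps\lambda_k}-1) \le \sum_k \lambda_k^{-1} = \mathrm{Tr}(h^{-1}) < \infty$, using $\e^x - 1 \ge x$; this is uniform in $\eps$. Second, I bound the higher moments by the cumulants. The cumulant generating function is explicit:
\begin{equation*}
\log \rho_{\eps,0}(\e^{t\mcN_\eps}) = \sum_k \log\frac{1-q_k}{1-q_k\e^{t\eps}}.
\end{equation*}
The $m$-th cumulant is $\kappa_m = \eps^m \sum_k \kappa_m(\mathrm{Geom}(q_k))$. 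For a geometric variable, $\kappa_m = \mathrm{Li}_{1-m}$-type expressions in $q_k$, and these are bounded by $C(m)\, q_k/(1-q_k)^m \le C(m)\,(1 + (\eps\lambda_k)^{-m})$ times $q_k$. Hence
\begin{equation*}
\kappa_m \lesssim_m \sum_k \left(\eps^m q_k + \lambda_k^{-m}\right),
\end{equation*}
where I used $q_k/(1-q_k)^m \le C(m)\, q_k\,\max(1, (\eps\lambda_k)^{-m})$. The term $\sum_k \lambda_k^{-m}$ is finite for $m \ge 1$. For the other term, $\eps^m \sum_k \e^{-\eps\lambda_k} \lesssim \eps^m(1 + \eps^{-1/2})$ by comparison with a Gaussian integral over $k \in \Z$, and this is bounded for $m \ge 1$, $\eps \in (0,1)$. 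Since moments are polynomials in cumulants, $\rho_{\eps,0}(\mcN_\eps^p) \le C(p)$ uniformly in $\eps$.

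An equivalent route, if preferred, is to expand $\mcN_\eps^p$ in normal order as $\sum_{j\le p} c_{p,j}\,\eps^{p-j}\,\Theta_\eps(I_j)$ using the CCR, with $c_{p,j}$ Stirling-type coefficients. The quantum Wick theorem then gives $\rho_{\eps,0}(\Theta_\eps(I_j)) = \sum_{\pi\in S_j}\mathrm{Tr}\bigl((\text{perm } \pi)\,\Gamma_\eps^{\otimes j}\bigr)$ with $\Gamma_\eps = \eps(\e^{\eps h}-1)^{-1} \le h^{-1}$. Each term is bounded by $\mathrm{Tr}(h^{-1})^{c(\pi)}$, because each cycle of $\pi$ gives $\mathrm{Tr}(\Gamma_\eps^{\ell}) \le \mathrm{Tr}(h^{-1})\|h^{-1}\|^{\ell-1}$.

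The main obstacle is the $\eps$-uniformity at low modes, where $\eps\lambda_k \ll 1$ makes $q_k/(1-q_k)$ of size $(\eps\lambda_k)^{-1}$. The whole argument rests on the elementary inequality $\eps(\e^{\eps\lambda}-1)^{-1} \le \lambda^{-1}$, applied uniformly to all powers, together with the trace-class property of $h^{-1}$ from Assumption \ref{free_Hamiltonian_assumption}. The extra lower-order terms produced by normal ordering carry additional positive powers of $\eps$ and are harmless.
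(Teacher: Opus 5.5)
Your proof is correct. The paper gives no argument for this lemma beyond stating that it follows from the quantum Wick theorem (Lemma \ref{quantum_Wick_theorem_lemma}). Your second route is exactly that argument written out: the normal ordering $\mcN_\eps^p=\sum_{j\le p}S(p,j)\,\eps^{p-j}\,\Theta_\eps(I_j)$ with Stirling numbers of the second kind, then expanding $\rho_{\eps,0}(\Theta_\eps(I_j))$ as a sum over $\pi\in S_j$ of products over cycles of $\mathrm{Tr}(\Gamma_\eps^{\ell})$, together with $\Gamma_\eps=\eps(\e^{\eps h}-1)^{-1}\le h^{-1}$ and $\|h^{-1}\|=1$.

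Your main route is genuinely different. It uses that the free Gibbs state factorises over the eigenmodes of $h$, so $\mcN_\eps$ is $\eps$ times a sum of independent geometric variables with explicit cumulants $\mathrm{Li}_{1-m}(q_k)\le (m-1)!\,q_k(1-q_k)^{-m}$. The bound $q_k(1-q_k)^{-m}\lesssim_m q_k\max\{1,(\eps\lambda_k)^{-m}\}$ then reduces everything to mode sums of the kind the paper estimates in Appendix \ref{estimates_appendix} for Proposition \ref{p-Hamiltonian_uniform_proposition}.

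This route is self-contained, avoids all pairing combinatorics, and actually gives more than is asked. Integrating $\e^u-1\ge u$ inside your cumulant generating function gives
$\log\frac{1-\e^{-x}}{1-\e^{-(x-y)}}=\int_{x-y}^{x}\frac{\dd u}{\e^u-1}\le\log\frac{x}{x-y}$, with $x=\eps\lambda_k$ and $y=\eps t$. Hence $\rho_{\eps,0}(\e^{t\mcN_\eps})\le\prod_k(1-t/\lambda_k)^{-1}=\int\e^{t\mcN}\,\dd\mu_0$ for $0\le t<1$. These are uniform exponential moments, dominated by the classical ones, and the lemma follows by taking any $t\in(0,1)$ in $\mcN_\eps^p\le p!\,t^{-p}\e^{t\mcN_\eps}$.

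The price is that this route relies on the observable being diagonal in the occupation-number basis. The Wick pairing formula is basis-free and applies equally to general $\Theta_\eps(\xi)$.
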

We are also able to absorb  lifted  the fractional Laplacian for small powers.
\begin{lemma}
\label{free_Hamiltonian_1_lemma}
Suppose that $s \in (0,1/2)$ and let $m \in \R_{\geq 1}$. One has the bound
\begin{equation*}
\frac{1}{Z_{\frac{1}{m}\eps,0}} \mathrm{Tr}(H^{(s)}_{\eps,0} \e^{-\frac{1}{m}H_{\eps,0}}) < C(m,s) < \infty,
\end{equation*}
uniformly in $\eps$.
\end{lemma}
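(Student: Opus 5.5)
The plan is to compute the expectation exactly, using that everything is diagonal in the eigenbasis $(u_k)$ of $h$, and then bound the resulting one-body sum. I read the normalisation $Z_{\frac1m\eps,0}$ as $\mathrm{Tr}(\e^{-\frac1m H_{\eps,0}})$, which is what makes the left-hand side a normalised expectation in the free state with Hamiltonian $\frac1m H_{\eps,0}$.

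\emph{Step 1: reduce to a one-body trace.} Write $a_k := \eps^{-1/2}\vphe(u_k)$, so that $[a_k,a_l^*]=\delta_{kl}$. Both operators are then diagonal:
\begin{equation*}
H_{\eps,0} = \sum_{k} \eps\lambda_k\, a_k^*a_k, \qquad H^{(s)}_{\eps,0} = \sum_k \eps\lambda^{(s)}_k\, a_k^*a_k, \qquad \lambda_k^{(s)} := (4\pi^2|k|^2)^s + 1.
\end{equation*}
This works because $h^{(s)}$ is a function of $-\Delta$ and so shares the eigenfunctions $u_k$. The Fock space factorises as a tensor product over modes, and so does $\e^{-\frac1m H_{\eps,0}}$. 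For each mode the Bose--Einstein formula gives
\begin{equation*}
\frac{\mathrm{Tr}(a_k^*a_k\,\e^{-\beta_k a_k^*a_k})}{\mathrm{Tr}(\e^{-\beta_k a_k^*a_k})} = \frac{1}{\e^{\beta_k}-1}, \qquad \beta_k := \frac{\eps\lambda_k}{m}.
\end{equation*}
Summing over modes, the quantity to bound is
\begin{equation*}
\sum_{k} \frac{\eps \lambda_k^{(s)}}{\e^{\eps\lambda_k/m}-1}.
\end{equation*}
Equivalently, this is $\mathrm{Tr}\big(\eps h^{(s)}(\e^{\eps h/m}-1)^{-1}\big)$, and one can also obtain it from the quantum Wick theorem (Lemma \ref{quantum_Wick_theorem_lemma}) applied to the free state at inverse temperature $\frac1m$. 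Monotone convergence over finitely many modes justifies exchanging the trace and the sum.

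\emph{Step 2: uniform bound.} Use the elementary inequality $\e^x-1\ge x$ for $x\ge 0$. This gives $\frac{\eps}{\e^{\eps\lambda_k/m}-1}\le \frac{m}{\lambda_k}$, so the sum is at most
\begin{equation*}
m\sum_k \frac{\lambda_k^{(s)}}{\lambda_k}\lesssim m \sum_{k} \lambda_k^{s-1} = m\,\mathrm{Tr}(h^{-(1-s)}),
\end{equation*}
which is independent of $\eps$. Here I use $\lambda_k^{(s)}\lesssim_s \lambda_k^s$, which follows from \eqref{eigenvalues}. Since $1-s>\frac12$ exactly when $s<\frac12$, the trace $\mathrm{Tr}(h^{-(1-s)})$ is finite by the remark after Assumption \ref{free_Hamiltonian_assumption}. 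This yields $C(m,s)$.

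The only step needing care is the rigorous justification of the exact formula in Step 1: the product structure of the Fock space over infinitely many modes, and the unboundedness of $H^{(s)}_{\eps,0}$. I would handle this by truncating to finitely many modes and passing to the limit by monotone convergence, since all summands are nonnegative. The estimate itself is elementary, and the constraint $s<\frac12$ is exactly what makes the final sum converge.
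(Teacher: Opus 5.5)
Your proposal is correct and is essentially the paper's proof. The paper also reduces the expectation to the one-body trace $\mathrm{Tr}\bigl(\eps h^{(s)}(\e^{\eps h/m}-1)^{-1}\bigr)$, citing the reduced density matrix formula of \cite{LNR15} where you compute mode-by-mode Bose--Einstein occupation numbers, and then uses $\e^{x}-1\ge x$ to bound it by $m\,\mathrm{Tr}(h^{s-1})+m\,\mathrm{Tr}(h^{-1})$, which is finite since $s<\frac12$.
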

\begin{proof}
Denote 
\begin{equation*}
\rho_{\eps,0} := \frac{1}{Z_{\eps,0}}\e^{-H_{\eps,0}}, \qquad \rho_{\eps,0,m} := \frac{1}{Z_{\frac{1}{m}\eps,0}}\e^{-\frac{1}{m}H_{\eps,0}} .
\end{equation*}
Using \cite[Lemma 2.1]{LNR15}, we have
\begin{align*}
\rho_{\eps,0}^{(k)}= \left[ \frac{1}{\e^{\eps h} -1}\right]^{\otimes k}, \qquad \rho_{\eps,0,m}^{(k)}= \left[ \frac{1}{\e^{\frac{\eps h}{m}} -1}\right]^{\otimes k},
\end{align*}
where $\rho_{\eps,0}^{(k)}$ is the $k^{th}$ reduced density matrix. A direct computation leads to 
\begin{align}
\nonumber
\frac{1}{\mathrm{Tr}(\e^{-\frac{1}{m}H_{\eps,0}})} \mathrm{Tr}\left(H_{\eps,0}^{(s)}\e^{-\frac{1}{m}H_{\eps,0}}\right)
 & = \mathrm{Tr} \left(  \dd \Gamma_\eps ((-\Delta)^s + 1))
   \rho_{\eps,0,m}\right) 
   \\ \nonumber & = \mathrm{Tr} \left( \eps ((-\Delta)^s + 1)
   \rho_{\eps,0,m}^{(1)}\right)=  \mathrm{Tr} \left( \eps ((-\Delta)^s + 1)
   \frac{1}{\e^{\frac{\eps h}{m}} -1} \right)
   \\ \nonumber & \leq \mathrm{Tr} \left( \eps h^s
   \frac{1}{\e^{\frac{\eps h}{m}} -1} \right)+ \mathrm{Tr} \left( 
   \frac{\eps}{\e^{\frac{\eps h}{m}} -1} \right)
   \\ \label{bounded_Hamiltonian_equation} & \leq m\mathrm{Tr} \left(  {   h^{s-1}}
   \right)+ m\mathrm{Tr} \left( 
   h^{-1} \right)
 \leq mC ,
  \end{align}
  since $0<s<1/2$ and where we recall that $h=-\Delta +1.$
\end{proof}
Using the quantum Wick theorem, we also have the following lemma about powers of the lifted fractional Laplacian. We delay the proof until Appendix \ref{estimates_appendix}.
\begin{proposition}
\label{p-Hamiltonian_uniform_proposition}
Suppose that $p \in \N_0$ and let $s \in (0,\frac{1}{2})$. Then
\begin{equation}
\label{p_Hamiltonian_uniform_equation}
\frac{1}{Z_{\eps,0}}\mathrm{Tr}(\left(\dd \Gamma_{\eps}((-\Delta)^s)\right)^p\e^{-H_{\eps,0}}) < C(p,s) < \infty
\end{equation}
uniformly in $\eps > 0$.
\end{proposition}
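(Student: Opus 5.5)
We will expand the $p$-th power in normal-ordered form and then apply the quantum Wick theorem (Lemma \ref{quantum_Wick_theorem_lemma}) to the free state. Write $A := \dd \Gamma_\eps((-\Delta)^s) = \int \dd x\,\dd y\, a(x;y)\vphes(x)\vphe(y)$ with $a := (-\Delta)^s$, diagonal in the Fourier basis $u_k$ with eigenvalues $\alpha_k := (4\pi^2|k|^2)^s$. Since $A$ and $H_{\eps,0}$ are both diagonal in the occupation-number basis of $(u_k)$, we can write $A = \eps\sum_k \alpha_k n_k$, where the $n_k$ are the (unscaled) occupation numbers. Under $\rho_{\eps,0}$ the $n_k$ are independent geometric random variables with means $\bar n_k = (\e^{\eps\lambda_k}-1)^{-1}$. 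This reduces the claim to a statement about sums of independent random variables:
\begin{equation*}
\frac{1}{Z_{\eps,0}}\mathrm{Tr}(A^p \e^{-H_{\eps,0}}) = \mathbb{E}\Big[\Big(\sum_k \eps\alpha_k n_k\Big)^p\Big].
\end{equation*}

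Next we bound moments of each summand. For a geometric variable one has $\mathbb{E}[n_k^j]\le j!\,(\bar n_k+\bar n_k^j)\lesssim_j \bar n_k(1+\bar n_k)^{j-1}$. Put $X_k := \eps\alpha_k n_k$. Using $\eps\bar n_k\le \lambda_k^{-1}$ and $\eps\le \eps(1+\bar n_k)\lesssim \eps + \lambda_k^{-1}$, we obtain for $j\ge1$
\begin{equation*}
\mathbb{E}[X_k^j]\lesssim_j \alpha_k^j\,\eps\bar n_k\,(\eps+\eps\bar n_k)^{j-1}\lesssim_j \alpha_k^j \lambda_k^{-1}(\eps+\lambda_k^{-1})^{j-1}.
\end{equation*}
For $\eps\le 1$ this gives $\mathbb{E}[X_k^j]\lesssim_j \alpha_k^j\lambda_k^{-1}$. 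The right-hand side is summable over $k\in\Z$ only when $2sj<1$, i.e.\ for $j=1$ (using $s<1/2$). For larger $j$ we must use the decay of $\eps\bar n_k$ more carefully. One has $\eps\bar n_k\le \lambda_k^{-1}$ always, and $\eps\bar n_k\le \eps\,\e^{-\eps\lambda_k/2}\cdot C$ once $\eps\lambda_k\ge1$. Splitting according to $\eps\lambda_k\lessgtr1$:
\begin{itemize}
\item when $\eps\lambda_k\le 1$, we have $\eps(1+\bar n_k)\lesssim\lambda_k^{-1}$ and $\mathbb{E}[X_k^j]\lesssim \alpha_k^j\lambda_k^{-j}\lesssim \lambda_k^{-j(1-s)}$, which is summable since $j(1-s)\ge 1-s>1/2$;
\item when $\eps\lambda_k>1$, we have $\mathbb{E}[X_k^j]\lesssim \alpha_k^j\eps^j\e^{-\eps\lambda_k}\lesssim \lambda_k^{-j(1-s)}(\eps\lambda_k)^j\e^{-\eps\lambda_k}\lesssim_j \lambda_k^{-j(1-s)}$, again summable uniformly in $\eps$.
\end{itemize}
Hence $M_j:=\sum_k \mathbb{E}[X_k^j]\le C(j,s)$ uniformly in $\eps\in(0,1]$. (The case $\eps>1$ is trivial by exponential decay; equivalently one can redo Lemma \ref{free_Hamiltonian_1_lemma} with $m=1$.)

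Finally, expand $\big(\sum_k X_k\big)^p$ over multi-indices. Using independence and nonnegativity, and grouping equal indices,
\begin{equation*}
\mathbb{E}\Big[\Big(\sum_kX_k\Big)^p\Big]\le \sum_{\text{partitions } \pi \text{ of } \{1,\dots,p\}}\ \prod_{B\in\pi} M_{|B|}\le C(p,s).
\end{equation*}
This is exactly the cumulant/Wick structure that the quantum Wick theorem encodes: each block of the partition corresponds to a connected loop $\mathrm{Tr}(\eps^{|B|} a^{|B|}\bar n^{\,|B|}\cdots)$. The main obstacle is the uniform bound on $M_j$ for $j\ge2$, namely controlling $\alpha_k^j$ growth against the Bose factors at high momenta $\eps\lambda_k\gtrsim1$. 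The splitting above handles it, with $s<1/2$ needed only for the $j=1$ term.
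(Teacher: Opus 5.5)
Your argument is correct and is essentially the paper's proof in probabilistic language. The paper uses the quantum Wick theorem to reduce to single-mode sums $\sum_k(\eps\lambda_k^s)^i\nu_{k,\eps}^j$ with $1\le j\le i\le p$, and then splits into $\eps\lambda_k\le p$ and $\eps\lambda_k>p$. Your independent geometric occupation numbers, partition expansion and frequency split do exactly this, and your pointwise bound $\mathbb{E}[X_k^j]\lesssim_j\lambda_k^{-j(1-s)}$ together with nonnegativity and monotone convergence is slightly cleaner than the paper's integral comparisons and $N$-truncation.

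One small slip: $\mathbb{E}[n_k^j]\le j!\,(\bar n_k+\bar n_k^{\,j})$ is false as stated, since for $j=3$, $\bar n_k=1$ the left side is $13$ and the right side is $12$. The $\lesssim_j$ bound you actually use does follow from the factorial moments $\mathbb{E}[n_k(n_k-1)\cdots(n_k-i+1)]=i!\,\bar n_k^{\,i}$.
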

We thus get the following corollary, which follows from Proposition \ref{p-Hamiltonian_uniform_proposition} by expanding the bracket and using Young's multiplication inequality.
\begin{corollary}
\label{p-Hamiltonian_uniform_corollary}
Suppose that $p \in \N_0$ and let $s \in (0,\frac{1}{2})$. Then
\begin{equation*}
\frac{1}{Z_{\eps,0}}\mathrm{Tr}\left(\left(H_{\eps,0}^{(s)} + 1\right)^p\e^{-H_{\eps,0}}\right) < C(p,s) < \infty
\end{equation*}
uniformly in $\eps > 0$.
\end{corollary}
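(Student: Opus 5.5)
The plan is to expand $\left(H_{\eps,0}^{(s)}+1\right)^p$ and reduce to Proposition \ref{p-Hamiltonian_uniform_proposition} together with Lemma \ref{number_operator_powers_lemma}. Since $h^{(s)} = (-\Delta)^s + 1$, linearity of second quantisation gives
\begin{equation*}
H_{\eps,0}^{(s)} + 1 = \dd\Gamma_\eps((-\Delta)^s) + \mcN_\eps + 1 =: A + B + 1 .
\end{equation*}
Here $A$ and $B$ are commuting nonnegative self-adjoint operators: both are diagonal in the occupation-number basis of the eigenfunctions $u_k$, which are the Fourier modes on $\T$ by Assumption \ref{free_Hamiltonian_assumption}. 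This basis also diagonalises $\e^{-H_{\eps,0}}$. Hence all these operators have a joint spectral decomposition, and functional calculus reduces any operator inequality among them to a scalar inequality.

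For nonnegative reals $a,b$ one has $(a+b+1)^p \le 3^{p-1}(a^p + b^p + 1)$, by convexity or by expanding the multinomial and applying Young's inequality to the mixed terms $a^i b^j$. Applying this through the joint spectral calculus gives
\begin{equation*}
\left(H_{\eps,0}^{(s)}+1\right)^p \le 3^{p-1}\left( A^p + \mcN_\eps^p + 1\right)
\end{equation*}
as an operator inequality, and both sides commute with $\e^{-H_{\eps,0}}$. Multiplying by the positive operator $\e^{-H_{\eps,0}}/Z_{\eps,0}$ and taking traces (legitimate because everything is simultaneously diagonal, so the traces are sums of nonnegative eigenvalue products) yields
\begin{equation*}
\frac{1}{Z_{\eps,0}}\mathrm{Tr}\left(\left(H_{\eps,0}^{(s)}+1\right)^p\e^{-H_{\eps,0}}\right) \le 3^{p-1}\left( C(p,s) + C(p) + 1\right).
\end{equation*}
Here the first constant comes from Proposition \ref{p-Hamiltonian_uniform_proposition} and the second from Lemma \ref{number_operator_powers_lemma}. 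Both are uniform in $\eps$, which gives the claimed bound.

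The only point needing care is justifying the operator inequality and the passage to traces with unbounded operators. I would first restrict to finitely many modes and finite particle sectors, where everything is a finite diagonal sum, and then pass to the limit by monotone convergence, since all summands are nonnegative. I do not expect a genuine obstacle beyond this bookkeeping.
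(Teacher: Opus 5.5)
Your proposal is correct and follows the same route as the paper, which obtains the corollary from Proposition \ref{p-Hamiltonian_uniform_proposition} by expanding $(H_{\eps,0}^{(s)}+1)^p$ and bounding the mixed terms with Young's inequality. Your splitting $H_{\eps,0}^{(s)} = \dd\Gamma_\eps((-\Delta)^s) + \mcN_\eps$ and your use of Lemma \ref{number_operator_powers_lemma} for the $\mcN_\eps^p$ term spell out a step the paper leaves implicit; it is needed because $(-\Delta)^s$ vanishes on the zero mode.
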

\begin{remark}
We note that since $\e^{-H_{\eps,0}}$ and the number operator and lifted fractional Laplacians are positive, Lemmas \ref{number_operator_powers_lemma} and \ref{free_Hamiltonian_1_lemma}, as well as Proposition \ref{p-Hamiltonian_uniform_proposition} and Corollary \ref{p-Hamiltonian_uniform_corollary} also hold for the truncated free state.
\end{remark}
\section{Proof for bounded potentials}
\label{SEC:bounded}
In this section, we consider the problems for bounded interaction potentials.
\subsection{Duhamel expansion}
In this section, we recall the two power series expansions we will use throughout the rest of the paper. We identify an operator with its integral kernel, and we recall that
\begin{equation}
\label{curly_C_definition}
\mathfrak{B}_p := \{\xi \in \mc{L}^2(\mfh^{(p)}) : \|\xi\|_{\mcL^2(\mfh^{(p)})} \leq 1\}, \quad \mc{C}_p := \mf{B}_p \cup \{I_p\}.
\end{equation}
We note that we can write
\begin{equation}
\label{quantum_state_rewrite}
\rho_\eps\left(\Theta_\eps(\xi)\right) = \frac{\tilde{\rho}_{\eps,1}(\Theta_\eps(\xi))}{\tilde{\rho}_{\eps,1}(1)},
\end{equation}
where we define
\begin{equation*}
\tilde{\rho}_{\eps,z}(\mc{A}) := \frac{1}{Z_{\eps,0}} \mathrm{Tr}\left( \mc{A} \e^{-(H_{\eps,0} - z \mcV_{\eps})} \chi(\mcN_\eps)\right)
\end{equation*}
Define $A^\xi_\eps(z) := \tilde{\rho}_{\eps,z}(\Theta_\eps(\xi))$. Via Duhamel expansion, we have the following lemma, see for example \cite[Lemma 3.4]{RS22}.
\begin{lemma}
\label{quantum_Duhamel_expansion_lemma}
Suppose that $M \in \N$. Then we have $A^{\xi}_{\eps}(z) = \sum^{M-1}_{m=0} \alpha_{\eps,m}^\xi z^m + R^\xi_{\eps,M}(z)$, where
\begin{multline}
\label{quintic_quantum_explicit_terms}
\alpha_{\eps,m}^\xi := \frac{1}{Z_{\eps,0}} \mathrm{Tr}\biggl( \int_0^1 \dd t_1 \int_0^{t_1} \dd t_2 \ldots \int_0^{t_{m-1}} \dd t_m \, \Theta_\eps(\xi) \e^{-(1-t_1) H_{\eps,0}} \mcV_{\eps} \e^{-(t_1-t_2) H_{\eps,0}} \mcV_{\eps} \\
\times \e^{-(t_2-t_3)H_{\eps,0}} \ldots \e^{-(t_{m-1} -t_m) H_{\eps,0}} \mcV_{\eps} \e^{-t_m H_{\eps,0}}\chi(\mcN_{\eps})\biggr),
\end{multline}
and
\begin{multline}
\label{quintic_quantum_remainder_term}
R_{\eps,M}^\xi (z) := \frac{1}{Z_{\eps,0}} \mathrm{Tr}\biggl( \int_0^1 \dd t_1 \int_0^{t_1} \dd t_2 \ldots \int_0^{t_{M-1}} \dd t_M \, \Theta_\eps(\xi) \e^{-(1-t_1) H_{\eps,0}} \mcV_{\eps}
\e^{-(t_1-t_2) H_{\eps,0}} \mcV_{\eps} \\
\times \e^{-(t_2-t_3)H_{\eps,0}} \ldots \e^{-(t_{M-1} -t_M) H_{\eps,0}} \mcV_{\eps} \e^{-t_M (H_{\eps,0} - z\mcV_{\eps})} \chi(\mcN_{\eps})\biggr).
\end{multline}
\end{lemma}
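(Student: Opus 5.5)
The plan is the standard Duhamel/Taylor expansion with integral remainder, applied to the operator-valued function $z \mapsto \e^{-(H_{\eps,0} - z\mcV_\eps)}$ and iterated $M$ times. The function is then inserted into the trace with $\Theta_\eps(\xi)$ on the left and $\chi(\mcN_\eps)$ on the right.

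\textbf{Reduction to a finite-particle sector.} Since $\chi(\mcN_\eps)$ projects onto $\mcN_\eps \le K$, and $H_{\eps,0}$ and $\mcV_\eps$ commute with $\mcN_\eps$, everything reduces to the finitely many sectors $\mfh^{(n)}$ with $n \le K/\eps$. On each such sector, Lemma \ref{interaction_number_bound} shows that $\mcV_\eps$ is bounded there, with norm controlled by $\|v\|_{L^\infty}^2 (K+1)^3$. Consequently $H_{\eps,0} - z\mcV_\eps$ is a bounded perturbation of the self-adjoint operator $H_{\eps,0}$, so all semigroups below are well defined. Moreover, $\e^{-tH_{\eps,0}}$ is trace class for $t>0$, which will justify the trace manipulations.

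\textbf{The Duhamel identity.} For a bounded perturbation $B = z\mcV_\eps$, I would start from
\begin{equation*}
\e^{-(H_{\eps,0} - B)} = \e^{-H_{\eps,0}} + \int_0^1 \dd t_1 \, \e^{-(1-t_1)H_{\eps,0}} B \, \e^{-t_1(H_{\eps,0} - B)}.
\end{equation*}
This is proved by differentiating $s \mapsto \e^{-(1-s)H_{\eps,0}}\e^{-s(H_{\eps,0}-B)}$ in $s$ and integrating from $0$ to $1$. The identity is then applied again to $\e^{-t_1(H_{\eps,0}-B)}$, this time on the interval $[0,t_1]$ (equivalently, after rescaling time). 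Iterating $M$ times gives the nested simplex integrals. The $m$-th term has all factors free and carries $z^m$, while the final term retains the full propagator $\e^{-t_M(H_{\eps,0}-z\mcV_\eps)}$. Multiplying on the left by $\Theta_\eps(\xi)$ and on the right by $\chi(\mcN_\eps)$, taking the trace, and dividing by $Z_{\eps,0}$ produces exactly \eqref{quintic_quantum_explicit_terms} and \eqref{quintic_quantum_remainder_term}. Here the $m=0$ term is understood as $Z_{\eps,0}^{-1}\mathrm{Tr}(\Theta_\eps(\xi)\e^{-H_{\eps,0}}\chi(\mcN_\eps))$.

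\textbf{Main obstacle: interchanging trace and integrals.} The only real issue is justifying that trace and time integrals commute and that every trace is finite, given that $\Theta_\eps(\xi)$ is unbounded. The plan is as follows.
\begin{itemize}
\item On the truncated sectors, $\Theta_\eps(\xi)$ is bounded by Lemma \ref{interaction_number_bound}.
\item In each term, at least one exponent among $1-t_1, t_1-t_2, \ldots, t_m$ is $\ge 1/(m+1)$. The corresponding factor $\e^{-cH_{\eps,0}}$ is trace class on each finite sector, with trace norm bounded uniformly in the $t_j$.
\item The remaining factors are bounded uniformly in the $t_j$. The factor $\e^{-t(H_{\eps,0}-z\mcV_\eps)}$ is bounded by $\e^{t|z|\|\mcV_\eps\chi(\mcN_\eps)\|}$, for example via the Trotter formula or Duhamel plus Gr\"onwall.
\end{itemize}
Hence the integrands are trace-class valued, continuous in $\mbt$ in trace norm, and uniformly bounded on the simplex, so Fubini applies.

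Finally, since $\mcN_\eps$ commutes with all the operators involved, $\chi(\mcN_\eps)$ may be moved freely between factors without changing the formulas.
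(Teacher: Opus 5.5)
Your strategy (iterating $\e^{-(H_{\eps,0}-B)}=\e^{-H_{\eps,0}}+\int_0^1\e^{-(1-t_1)H_{\eps,0}}B\,\e^{-t_1(H_{\eps,0}-B)}\,\dd t_1$ with $B=z\mcV_\eps$ on the range of $\chi(\mcN_\eps)$, then inserting $\Theta_\eps(\xi)$ and taking the trace) is the Duhamel expansion the paper invokes; the paper gives no proof and refers to \cite[Lemma 3.4]{RS22}. However, your derivation does not reproduce \eqref{quintic_quantum_remainder_term} ``exactly'', as you claim. Each Duhamel step brings in one factor $B=z\mcV_\eps$, so what you actually obtain is $A^\xi_\eps(z)=\sum_{m<M}\alpha^\xi_{\eps,m}z^m+z^M R^\xi_{\eps,M}(z)$. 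The printed statement has lost this $z^M$, and without it the identity is false in general. At $z=0$ the left side is $\alpha^\xi_{\eps,0}$, while the right side is $\alpha^\xi_{\eps,0}+R^\xi_{\eps,M}(0)=\alpha^\xi_{\eps,0}+\alpha^\xi_{\eps,M}$. The factor $|z|^M$ in the remainder bound of Lemma \ref{bounds_quantum_lemma} confirms the intended form. You should state and prove the corrected identity rather than assert agreement with the displayed formula.

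Second, your trace-class justification has a hole for the remainder term. The pigeonhole step (some time gap is at least $1/(M+1)$) only helps when that gap belongs to a free factor $\e^{-cH_{\eps,0}}$. In the remainder, the last gap $t_M$ belongs to $\e^{-t_M(H_{\eps,0}-z\mcV_\eps)}$, for which you only have an operator-norm bound. When $t_M$ is close to $1$, all free gaps are at most $1-t_M$. The trace norms you would rely on, such as $\mathrm{Tr}(\e^{-cH_{\eps,0}}\chi(\mcN_\eps))$ with $c\to 0$, then blow up, because the range of $\chi(\mcN_\eps)$ is still infinite dimensional. So your uniform bound fails on that part of the simplex.

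The standard repair is the Schatten--H\"older argument used in the paper and in \cite{RS23}. One has $\|\e^{-\tau H_{\eps,0}}\|_{\mcL^{1/\tau}}=Z_{\eps,0}^{\tau}$, and summing the Dyson series term by term with the same H\"older bound gives $\|\e^{-\tau(H_{\eps,0}-z\mcV_\eps)}\chi(\mcN_\eps)\|_{\mcL^{1/\tau}}\le \e^{\tau|z|\|\mcV_\eps\chi(\mcN_\eps)\|}Z_{\eps,0}^{\tau}$. H\"older with exponents $1/(t_j-t_{j+1})$ (and $1/t_M$ for the last slot) then bounds every integrand in $\mcL^1$ by a constant times $Z_{\eps,0}$, uniformly on the simplex.

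Alternatively, the trace in the statement sits outside the $\mbt$-integrals, so you only need the integrated remainder operator to be trace class. For $z\neq 0$ this follows by subtracting the explicit terms from $\Theta_\eps(\xi)\e^{-(H_{\eps,0}-z\mcV_\eps)}\chi(\mcN_\eps)$, whose trace-class property is already presupposed in the definition of $A^\xi_\eps(z)$.
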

Similarly, we can write
\begin{equation}
\label{classical_state_rewrite}
\rho(\Theta(\xi)) = \frac{\tilde{\rho}_1(\Theta(\xi))}{\tilde{\rho}_1(1)},
\end{equation}
where we define
\begin{equation*}
\tilde{\rho}_z(X) := \int \dd \mu_0 \, X \e^{z\mcV} \chi(\mcN).
\end{equation*}
Define $A^{\xi}(z) := \tilde{\rho}_z(\Theta(\xi))$. One can compute the following formal power series expansion.

\begin{lemma}
\label{classical_expansion_lemma}
Let $M \in \N$. Then $A^\xi(z) = \sum_{m=0}^{M-1} \alpha_m^\xi z^m + R_M^\xi(z)$, where
\begin{align}
\label{quintic_classical_explicit_terms}
\alpha^\xi_m &:= \frac{1}{m!} \int \dd \mu_0 \, \Theta(\xi) \mcV^m \chi(\mcN), \\
\label{quintic_classical_remainder_term}
R^\xi_M(z) &:= \frac{1}{M!} \int \dd \mu_0 \, \Theta(\xi) \e^{\tilde{z}\mcV} \mcV^M \chi(\mcN),  
\end{align}
for some $\tilde{z} \in [0,z]$.
\end{lemma}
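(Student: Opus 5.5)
The expansion can be obtained by Taylor's theorem with Lagrange remainder applied, for each fixed $\om$, to $g(z) := \e^{z\mcV}$, followed by integration against $\Theta(\xi)\chi(\mcN)\,\dd\mu_0$.

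\emph{Integrability.} First I check that everything is finite, so that $A^\xi(z)$ and the $\alpha^\xi_m$ are well defined for $z\in[0,1]$. By Lemma \ref{random_variable_bound_lemma}, for $\xi\in\mc{C}_p$ (and $\mcL^2\subset\mcL$) we have $|\Theta(\xi)|\le \|\xi\|\,\mcN^p$. Since $v\in L^\infty$, the proof of Lemma \ref{nonlocal_normalisability_lemma} gives $|\mcV|\le \frac13\|v\|_{L^\infty}^2\mcN^3$. On the support of $\chi(\mcN)$ we have $\mcN\le K$, so
\[
|\Theta(\xi)\mcV^m\e^{\tilde z\mcV}\chi(\mcN)| \le \|\xi\| K^{p} \bigl(\tfrac13\|v\|_{L^\infty}^2K^3\bigr)^m \e^{\frac{|z|}{3}\|v\|^2_{L^\infty}K^3}.
\]
This is a constant, hence integrable against the probability measure $\mu_0$, uniformly in $\tilde z\in[0,z]$.

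\emph{Pointwise Taylor expansion.} Fix $\om$ in the full-measure set where $\mcN,\mcV$ are finite. Taylor's theorem for $g(z)=\e^{z\mcV}$ gives
\[
\e^{z\mcV}=\sum_{m=0}^{M-1}\frac{z^m\mcV^m}{m!}+\frac{z^M\mcV^M}{M!}\e^{\tilde z(\om)\mcV}
\]
for some $\tilde z(\om)\in[0,z]$. Multiplying by $\Theta(\xi)\chi(\mcN)$ and integrating term by term, which is justified by the bound above, yields the explicit terms $\alpha^\xi_m z^m$ with $\alpha^\xi_m$ as in \eqref{quintic_classical_explicit_terms}.

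\emph{The remainder.} The remaining point, and the only subtle one, is that $\tilde z$ depends on $\om$. As stated, $\tilde z$ is a single number, so I would not rely on Lagrange's form but use the integral form instead:
\[
R^\xi_M(z)=\frac{z^M}{(M-1)!}\int_0^1(1-t)^{M-1}\int\dd\mu_0\,\Theta(\xi)\mcV^M\e^{tz\mcV}\chi(\mcN)\,\dd t .
\]
Fubini applies by the uniform bound. The exact single-$\tilde z$ form in \eqref{quintic_classical_remainder_term} would follow from a mean value theorem for integrals only if the inner integrand were real with constant sign; since $\Theta(\xi)$ can be complex, this generally fails. So I would read the statement either with $\tilde z=\tilde z(\om)$ inside the integral, noting that $\om\mapsto\tilde z(\om)$ can be chosen measurably (e.g.\ via the integral form), or as the integral form itself.

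Either reading gives the bound the later argument needs:
\[
|R^\xi_M(z)|\le \frac{|z|^M}{M!}\|\xi\|K^p\bigl(\tfrac13\|v\|^2_{L^\infty}K^3\bigr)^M\e^{\frac{|z|}{3}\|v\|_{L^\infty}^2K^3},
\]
which tends to $0$ as $M\to\infty$. I expect this measurability/interpretation issue to be the main, albeit mild, obstacle; the rest is routine.
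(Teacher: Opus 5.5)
Your argument is correct and matches the paper, which gives no proof beyond calling this a formal Taylor expansion of $\e^{z\mcV}$ integrated against $\Theta(\xi)\chi(\mcN)\,\dd\mu_0$; your integrability bounds are the ones behind Lemma \ref{bounds_classical_lemma}. Your caveat is well taken. A single $\tilde z$ cannot be extracted in general because $\Theta(\xi)$ is complex, and for complex $z$ even the pointwise Lagrange form fails. So the integral form of the remainder (or an $\om$-dependent $\tilde z$ when $z$ is real) is the right reading, and it suffices for everything downstream.
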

We now recall some bounds from \cite{RS23}, whose proof follows verbatim for the rough cut-off, see \cite[Lemmas 3.8 and 3.10]{RS23}.
\begin{lemma}
\label{bounds_quantum_lemma}
Suppose that $v \in L^\infty(\T)$ and $\xi \in \mc{C}_p$. Then one has the following bounds.
\begin{align*}
|\alpha_{\eps,m}^\xi| &\leq \frac{K^p(K^3\|v\|_{L^\infty}^2)^m\|\xi\|}{m! \, 3^m}, \\
|R_{\eps,M}^\xi(z)| &\leq \e^{\frac{1}{3} K^3 |\mathrm{Re}(z)| \|v\|_{L^\infty}^2 }\frac{K^p(K^3\|v\|_{L^\infty}^2)^M\|\xi\|}{M! \, 3^M} |z|^M.
\end{align*}
\end{lemma}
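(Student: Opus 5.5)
The plan is to bound the explicit terms \eqref{quintic_quantum_explicit_terms} and the remainder \eqref{quintic_quantum_remainder_term} directly, using only operator inequalities. The factor $\chi(\mcN_\eps)$ will supply all the number-operator control. The observation that makes this work is that every operator in the trace commutes with $\mcN_\eps$: $H_{\eps,0}$, $\mcV_\eps$ and $\Theta_\eps(\xi)$ all preserve particle number. Hence $\chi(\mcN_\eps)$ can be moved to any position in the product, and, being a projection, it can be duplicated and placed next to each factor. So every factor may be regarded as acting on the range of $\chi(\mcN_\eps)$, where $\mcN_\eps \leq K$.

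On that range, Hölder's inequality gives the following pointwise bounds, valid sector by sector:
\begin{equation*}
\|\mcV_\eps \chi(\mcN_\eps)\| \leq \tfrac{1}{3}\|v\|_{L^\infty}^2 K^3, \qquad \|\Theta_\eps(\xi)\chi(\mcN_\eps)\| \leq \|\xi\| K^p .
\end{equation*}
For the first bound, on the $n$-particle sector $\mcV_\eps$ acts as $\frac{\eps^3}{3}\sum_{i,j,k} v(x_i-x_j)v(x_i-x_k)$. This sum has at most $n^3$ terms, so its norm is at most $\frac{1}{3}\|v\|_{L^\infty}^2(\eps n)^3 \leq \frac{1}{3}\|v\|_{L^\infty}^2 K^3$. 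For the second bound, $\Theta_\eps(\xi)$ restricted to $\mfh^{(n)}$ equals $\eps^p\frac{n!}{(n-p)!}$ times a symmetrisation of $\xi\otimes 1$. Its norm is therefore at most $(\eps n)^p\|\xi\| \leq K^p\|\xi\|$, and here $\|\xi\| \leq \|\xi\|_{\mcL^2}\leq 1$ or $\xi = I_p$.

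For $\alpha^\xi_{\eps,m}$, I write the trace as $\frac{1}{Z_{\eps,0}}\mathrm{Tr}(B\,\e^{-H_{\eps,0}}\chi(\mcN_\eps))$, where $B$ is a product of bounded operators. To get there, I redistribute the heat factors: cyclicity of the trace, together with the fact that $\e^{-tH_{\eps,0}}$ is positive and commutes with $\mcN_\eps$, lets me use the standard Hölder inequality for Schatten norms with exponents $1/(t_{j}-t_{j+1})$. Then
\begin{equation*}
|\mathrm{Tr}(\Theta_\eps(\xi)\e^{-(1-t_1)H_{\eps,0}}\mcV_\eps\cdots\mcV_\eps\e^{-t_mH_{\eps,0}}\chi)| \leq \|\Theta_\eps(\xi)\chi\|\,\|\mcV_\eps\chi\|^m\,\mathrm{Tr}(\e^{-H_{\eps,0}}\chi),
\end{equation*}
since each $\e^{-sH_{\eps,0}}\chi$ has $\mcL^{1/s}$ norm $\mathrm{Tr}(\e^{-H_{\eps,0}}\chi)^{s}$. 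Finally, $\mathrm{Tr}(\e^{-H_{\eps,0}}\chi)\leq Z_{\eps,0}$, and the simplex $\{1\geq t_1\geq\cdots\geq t_m\geq 0\}$ has volume $1/m!$. This gives exactly the first bound.

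For the remainder, I apply the same Hölder argument. The only change is the last factor, $\e^{-t_M(H_{\eps,0}-z\mcV_\eps)}\chi$, which now has a non-self-adjoint perturbation when $z$ is complex. The point I expect to be the main obstacle is bounding its $\mcL^{1/t_M}$ norm by $\e^{t_M\frac{1}{3}K^3\|v\|_{L^\infty}^2|\mathrm{Re} z|}\,\mathrm{Tr}(\e^{-H_{\eps,0}}\chi)^{t_M}$. I would work sector by sector, on finite-particle sectors truncated by $\chi$. There, the Lie--Trotter product formula gives
\begin{equation*}
\e^{-t(H_{\eps,0}-z\mcV_\eps)} = \lim_{n\to\infty}\bigl(\e^{-tH_{\eps,0}/n}\e^{tz\mcV_\eps/n}\bigr)^n .
\end{equation*}
Since $\mcV_\eps$ is self-adjoint, $\|\e^{tz\mcV_\eps/n}\chi\| = \|\e^{t\mathrm{Re}(z)\mcV_\eps/n}\chi\|\leq \e^{t|\mathrm{Re}z|\|\mcV_\eps\chi\|/n}$. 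Applying Hölder to the Trotter product then yields the claimed bound uniformly in $n$, and it passes to the limit by lower semicontinuity of Schatten norms. If convergence of the Trotter product on the infinite-dimensional sectors is a concern, I would justify it on each fixed $n$-particle sector, where $\mcV_\eps$ is bounded, and then sum over $n\leq K/\eps$. Using $\e^{t_M c}\leq \e^{c}$, the $|z|^M$ from the expansion, and the simplex volume $1/M!$, the second bound follows as in \cite[Lemmas 3.8 and 3.10]{RS23}. None of these steps uses the smoothness of $\chi$, only $0\leq\chi\leq 1$, that $\chi$ is a projection, and its support in $[0,K]$.
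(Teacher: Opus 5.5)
Your proof is correct. The paper gives no argument of its own here; it defers to \cite[Lemmas 3.8 and 3.10]{RS23}.

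For the explicit terms, your argument is the one used there. It consists of H\"older's inequality with exponents $1/(t_j-t_{j+1})$, the identity $\|\e^{-tH_{\eps,0}}\|_{\mc{L}^{1/t}}=Z_{\eps,0}^t$, and the bound $\|\mcV_\eps\chi(\mcN_\eps)\|\le\frac13K^3\|v\|_{L^\infty}^2$ obtained after commuting the projection through. The paper reuses the same scheme in the proof of Lemma \ref{explicit_terms_Hamiltonian_lemma}.

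The genuine difference is in the one non-trivial step, the factor $\e^{-t_M(H_{\eps,0}-z\mcV_\eps)}\chi(\mcN_\eps)$ with complex $z$. According to the paper's own remark in the Future directions, the remainder bounds of \cite{RS22,RS23} rest on the Feynman--Kac formula. That formula exploits the fact that $\mcV_\eps$ acts by multiplication and that the free heat kernel is positive.

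Your route is purely operator-theoretic. You apply Lie--Trotter on the finitely many sectors with $\eps n\le K$ and H\"older on the $n$-fold product. Normality of $\e^{tz\mcV_\eps/n}$ ensures only $\mathrm{Re}(z)$ appears, and weak lower semicontinuity of Schatten norms lets you pass to the limit. This needs only that $\mcV_\eps\chi(\mcN_\eps)$ is bounded, self-adjoint and number-conserving. It also makes transparent that nothing about $\chi$ enters beyond $\chi^2=\chi$ and $\mathrm{supp}\,\chi\subset[0,K]$.

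What it gives up is robustness for unbounded interactions. For $v\in L^1$, for instance, the step bound $\|\e^{tz\mcV_\eps/n}\chi(\mcN_\eps)\|\le \e^{t|\mathrm{Re}(z)|\|\mcV_\eps\chi(\mcN_\eps)\|/n}$ is no longer available, whereas a path-integral bound can still work. In this paper, however, the quantum problem is only ever treated for $v^N\in L^\infty$, and $v=\delta$ is reached by a diagonal argument on the classical side. So your argument is fully sufficient here.

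Finally, \eqref{quintic_quantum_remainder_term} as printed is missing the prefactor $z^M$. Your reading, with $z^M$ included, is the one consistent with $A^\xi_\eps(z)=\sum_{m<M}\alpha^\xi_{\eps,m}z^m+R^\xi_{\eps,M}(z)$ and with the stated bound.
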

We also have the corresponding bounds for the classical expansion, see \cite[Lemmas 3.12 and 3.13]{RS23}.
\begin{lemma}
\label{bounds_classical_lemma}
Suppose that $v \in L^\infty(\T)$ and $\xi \in \mc{C}_p$. Then one has the following bounds.
\begin{align*}
|\alpha_{m}^\xi| &\leq \frac{K^p(K^3\|v\|_{L^\infty}^2)^m\|\xi\|}{m! \, 3^m}, \\
|R_{M}^\xi(z)| &\leq \e^{\frac{1}{3} K^3 |\mathrm{Re}(z)| \|v\|_{L^\infty}^2 }\frac{K^p(K^3\|v\|_{L^\infty}^2)^M\|\xi\|}{M! \, 3^M} |z|^M.
\end{align*}
\end{lemma}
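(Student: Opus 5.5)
The bounds of Lemma \ref{bounds_classical_lemma} follow from two facts: on the support of $\chi(\mcN)$ everything is controlled pointwise by the mass, and $\mcN \leq K$ there. I would first collect the pointwise estimates. By Lemma \ref{random_variable_bound_lemma}, $|\Theta(\xi)| \leq \|\xi\| \|\vph\|_{\mfh}^{2p} = \|\xi\|\mcN^p$ for $\xi \in \mc{C}_p$. Here the operator norm is at most $1$ both for $\xi \in \mfBp$, since $\|\xi\|_{\mcL} \leq \|\xi\|_{\mcL^2}$, and for $\xi = I_p$. The H\"older estimate from the proof of Lemma \ref{nonlocal_normalisability_lemma} gives $|\mcV| \leq \frac{1}{3}\|v\|_{L^\infty}^2 \mcN^3$. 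Since $\chi = \chi_{[0,K]}$ takes values in $[0,1]$, on the support of $\chi(\mcN)$ these combine to $|\Theta(\xi)\mcV^m| \leq \|\xi\| K^p (K^3\|v\|_{L^\infty}^2/3)^m$.

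For the explicit terms, insert this bound into \eqref{quintic_classical_explicit_terms} and use that $\mu_0$ is a probability measure. This gives
\begin{equation*}
|\alpha_m^\xi| \leq \frac{1}{m!}\int \dd\mu_0 \, |\Theta(\xi)| |\mcV|^m \chi(\mcN) \leq \frac{K^p (K^3\|v\|^2_{L^\infty})^m \|\xi\|}{m!\,3^m},
\end{equation*}
which is the first bound.

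For the remainder, I would not rely on the intermediate-value form in Lemma \ref{classical_expansion_lemma} with some $\tilde z \in [0,z]$, since $z$ may be complex and that form is only formal there. Instead I would use the integral form of Taylor's theorem applied to $\e^{z\mcV}$:
\begin{equation*}
\e^{z\mcV} - \sum_{m<M} \frac{(z\mcV)^m}{m!} = \frac{(z\mcV)^M}{(M-1)!}\int_0^1 (1-t)^{M-1} \e^{tz\mcV}\,\dd t .
\end{equation*}
Since $\mcV$ is real, $|\e^{tz\mcV}| = \e^{t\,\mathrm{Re}(z)\mcV} \leq \e^{|\mathrm{Re}(z)||\mcV|}$, and on the support of $\chi(\mcN)$ this is at most $\e^{\frac{1}{3}K^3|\mathrm{Re}(z)|\|v\|_{L^\infty}^2}$. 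Using $\int_0^1(1-t)^{M-1}\,\dd t = 1/M$ turns the prefactor into $1/M!$. Multiplying by $\Theta(\xi)\chi(\mcN)$, integrating against $\mu_0$, and applying Fubini, which is justified because the integrand is bounded on the support of $\chi(\mcN)$, gives the stated bound on $R_M^\xi(z)$. It also shows that the series for $A^\xi$ is entire in $z$, which is the rigorous meaning of the expansion in Lemma \ref{classical_expansion_lemma}.

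There is no genuine obstacle. The only subtle point is the one handled above: treating the remainder for complex $z$ via the integral form rather than the mean value form. The roughness of $\chi$ plays no role, because only $0 \leq \chi \leq \chi_{[0,K]}$ is used. This is why the argument of \cite[Lemmas 3.12 and 3.13]{RS23} carries over verbatim.
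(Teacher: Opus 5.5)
Your proposal is correct and follows the same route as the argument the paper defers to (\cite[Lemmas 3.12 and 3.13]{RS23}). That argument bounds $|\Theta(\xi)|\le\|\xi\|\mcN^p$ via Lemma \ref{random_variable_bound_lemma} and $|\mcV|\le\frac13\|v\|_{L^\infty}^2\mcN^3$ via H\"older, then uses $\mcN\le K$ on the support of $\chi(\mcN)$. Your switch to the integral form of the Taylor remainder is a sensible tightening of Lemma \ref{classical_expansion_lemma}, whose mean-value form (with a $\vph$-dependent $\tilde z$ and a missing factor $z^M$) is only formal for complex $z$, but it does not change the substance of the argument.
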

{\begin{remark}
Let us remark that by arguing as in the proof of \cite[Lemma 3.8]{RS23}, one can prove that
\begin{multline}
\label{non_integrated_bound}
\frac{1}{Z_{\eps,0}} \biggl|\mathrm{Tr}\biggl( \Theta_\eps(\xi) \e^{-(1-t_1) H_{\eps,0}} \mcV_{\eps} \e^{-(t_1-t_2) H_{\eps,0}} \mcV_{\eps} \e^{-(t_2-t_3)H_{\eps,0}} \ldots \\
\times \e^{-(t_{m-1} -t_m) H_{\eps,0}} \mcV_{\eps} \e^{-t_m H_{\eps,0}}\chi(\mcN_{\eps})\biggr)\biggr| \leq \frac{K^p(K^3\|v\|_{L^\infty}^2)^m\|\xi\|}{ 3^m}.
\end{multline}
\end{remark}}
We have the following corollary, which follows from Lemmas \ref{bounds_quantum_lemma} and \ref{bounds_classical_lemma}.
\begin{corollary}
\label{analytic_corollary}
The power series in Lemmas \ref{quantum_Duhamel_expansion_lemma} and \ref{classical_expansion_lemma} are analytic on $\C$.
\end{corollary}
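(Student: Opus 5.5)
The statement to prove is Corollary \ref{analytic_corollary}: the expansions in Lemmas \ref{quantum_Duhamel_expansion_lemma} and \ref{classical_expansion_lemma} define entire functions of $z$. The plan is to show three things: each coefficient is finite, the remainder tends to zero as $M\to\infty$ locally uniformly in $z$, and hence $A^\xi_\eps(z)$ and $A^\xi(z)$ are each equal to a power series with infinite radius of convergence.

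\textbf{Step 1: geometric coefficient bounds.} By Lemma \ref{bounds_quantum_lemma},
\begin{equation*}
|\alpha^\xi_{\eps,m}| \leq \frac{K^p\|\xi\|}{m!}\Bigl(\tfrac{1}{3}K^3\|v\|_{L^\infty}^2\Bigr)^m,
\end{equation*}
and Lemma \ref{bounds_classical_lemma} gives the same bound for $\alpha^\xi_m$. With $a:=\frac{1}{3}K^3\|v\|_{L^\infty}^2$, the series $\sum_m \alpha_{\eps,m}^\xi z^m$ is therefore dominated by $K^p\|\xi\|\,\e^{a|z|}$. The root test then gives infinite radius of convergence, uniformly in $\eps$. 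So each series defines an entire function of $z$.

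\textbf{Step 2: the series equals the function.} Fix a compact set $|z|\leq r$. The remainder estimate in Lemma \ref{bounds_quantum_lemma} gives
\begin{equation*}
|R^\xi_{\eps,M}(z)| \leq \e^{ar}K^p\|\xi\|\frac{(ar)^M}{M!}\longrightarrow 0 \quad (M\to\infty),
\end{equation*}
uniformly on the disc. Since $A^\xi_\eps(z)=\sum_{m<M}\alpha^\xi_{\eps,m}z^m+R^\xi_{\eps,M}(z)$ holds for every $M$, letting $M\to\infty$ shows $A^\xi_\eps(z)=\sum_{m\geq0}\alpha^\xi_{\eps,m}z^m$. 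This limit is uniform on compacts, so $A^\xi_\eps$ is entire. The classical case is identical using Lemma \ref{bounds_classical_lemma}; there the intermediate point $\tilde z\in[0,z]$ in \eqref{quintic_classical_remainder_term} is harmless because the bound depends only on $|\mathrm{Re}(z)|\leq |z|$.

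\textbf{Main obstacle.} The only point needing care is that Lemma \ref{classical_expansion_lemma} is phrased via a Lagrange-type intermediate value $\tilde z$, which is only literally meaningful for real $z$. For complex $z$, I would instead use the integral form of the Taylor remainder,
\begin{equation*}
\frac{z^M}{(M-1)!}\int_0^1(1-t)^{M-1}\,\e^{tz\mcV}\,\mcV^M\,\dd t ,
\end{equation*}
whose integrand obeys the same estimate. Alternatively, analyticity of $A^\xi(z)=\int\Theta(\xi)\e^{z\mcV}\chi(\mcN)\,\dd\mu_0$ follows directly: on the support of $\chi$, Lemma \ref{nonlocal_normalisability_lemma} and Lemma \ref{random_variable_bound_lemma} bound $|\mcV|$ and $|\Theta(\xi)|$ by constants. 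Dominated convergence then allows differentiating under the integral sign (or applying Morera's theorem). The quantum side needs nothing extra, since the remainder \eqref{quintic_quantum_remainder_term} is already an exact integral expression.
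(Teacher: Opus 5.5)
Your proposal is correct and follows the paper's route: the corollary is deduced directly from the factorial bounds on the coefficients and remainders in Lemmas \ref{bounds_quantum_lemma} and \ref{bounds_classical_lemma}, which give a series converging everywhere whose remainder vanishes locally uniformly. Your remark on Lemma \ref{classical_expansion_lemma} is also correct. For complex $z$, the Lagrange-type $\tilde z$ should be replaced by the integral form of the Taylor remainder, or by direct dominated-convergence analyticity of $\int\Theta(\xi)\e^{z\mcV}\chi(\mcN)\,\dd\mu_0$, a point the paper leaves implicit.
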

\begin{remark}
The main difficulty of the proof of Theorem \ref{bounded_theorem} is proving that
\begin{equation}
\label{convergence_explicit_terms}
\alpha_{\eps,m}^\xi \to \alpha_m^\xi
\end{equation}
uniformly in $\xi \in \mc{C}_p$. The majority of this section is dedicated to the proof of \eqref{convergence_explicit_terms}.
\end{remark}

\subsection{Wigner measures of terms in Duhamel expansion}
In this section, we show that the Wigner measures of the explicit terms of the quantum expansion are the terms in the classical expansion. By combining this with the results in Section \ref{Section:non-compactness}, the convergence of the explicit terms will follow. We write
\begin{equation*}
\mathcal{U} := \{(t_1,\ldots,t_m) : 0 < t_m < \ldots < t_1 < 1\}.
\end{equation*}
We denote elements of $\mc{U}$ by $\mbt$. We also adopt the notation
\begin{equation*}
\rho_{\eps,m}(\mbt) :=  \frac{1}{Z_{\eps,0}} \e^{-(1-t_1) H_{\eps,0}} \mcV_{\eps} \e^{-(t_1-t_2) H_{\eps,0}} \mcV_{\eps} \e^{-(t_2-t_3)H_{\eps,0}} \ldots \e^{-(t_{m-1} -t_m) H_{\eps,0}} \mcV_{\eps} \e^{-t_m H_{\eps,0}}\chi(\mcN_{\eps}),
\end{equation*}
with the convention that
\begin{equation*}
\rho_{\eps,0}(\mbt) = \rho_{\eps,0} = \frac{1}{Z_{\eps,0}} \e^{-H_{\eps,0}}\chi(\mcN_\eps).
\end{equation*}
We begin by proving some estimates.
\begin{lemma}
\label{explicit_terms_number_lemma}
Let $m,p \in \N_0$ and $\mbt \in \mc{U}$. Then one has that
\begin{align}
\label{explicit_terms_number_equation}
\|(\mcN_\eps + 1)^p \rho_{\eps,m}(\mbt)\|_{\mc{L}^1} < C(m,p) < \infty,
\end{align}
uniformly in $\eps$. 
\end{lemma}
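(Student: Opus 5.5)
The key observation is that every factor in $\rho_{\eps,m}(\mbt)$ commutes with $\mcN_\eps$. The operators $H_{\eps,0}$ and $\mcV_\eps$ both preserve particle number, since each is a sum of terms with equally many creation and annihilation operators. Hence $[\e^{-\tau H_{\eps,0}},\mcN_\eps]=0$ and $[\mcV_\eps,\mcN_\eps]=0$, and all operators are block diagonal with respect to $\mcF=\bigoplus_n \mfh^{(n)}$. The cut-off $\chi(\mcN_\eps)$ sits at the right end and projects onto the sectors with $\eps n\le K$. So I would first rewrite
\begin{equation*}
(\mcN_\eps+1)^p\rho_{\eps,m}(\mbt) = \rho_{\eps,m}(\mbt)\,(\mcN_\eps+1)^p\chi(\mcN_\eps),
\end{equation*}
and observe that $\|(\mcN_\eps+1)^p\chi(\mcN_\eps)\|_{\mcL}\le (K+1)^p$.

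This reduces the claim to a uniform bound $\|\rho_{\eps,m}(\mbt)\|_{\mcL^1}\le C(m)$, where we may also insert $\chi(\mcN_\eps)$ at every position, since $\chi(\mcN_\eps)^2=\chi(\mcN_\eps)$ and it commutes with everything.

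For this I use Hölder's inequality for Schatten norms with exponents $q_j=1/(t_{j}-t_{j+1})$, adopting the conventions $t_0=1$, $t_{m+1}=0$, and interpreting a zero-length interval as the exponent $\infty$. Writing $\e^{-\tau H_{\eps,0}}=(\e^{-H_{\eps,0}})^{\tau}$, we get $\|\e^{-\tau H_{\eps,0}}\|_{\mcL^{1/\tau}}=(\mathrm{Tr}\,\e^{-H_{\eps,0}})^{\tau}=Z_{\eps,0}^{\tau}$. Since the $\tau_j$ sum to $1$, the product of these factors is exactly $Z_{\eps,0}$, and this cancels the prefactor. Each interaction factor is estimated in operator norm after sandwiching with the cut-off:
\begin{equation*}
\|\chi(\mcN_\eps)\mcV_\eps\chi(\mcN_\eps)\|_{\mcL}\le \|\mcV_\eps(\mcN_\eps+1)^{-3}\|_{\mcL}(K+1)^3\le C\|v\|_{L^\infty}^2(K+1)^3.
\end{equation*}
This uses Lemma \ref{interaction_number_bound} together with the fact that $\mcV_\eps$ preserves the sectors; in fact one gets the sharper constant $K^3\|v\|^2_{L^\infty}/3$ sector by sector. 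Combining these estimates gives $\|\rho_{\eps,m}(\mbt)\|_{\mcL^1}\le (C\|v\|^2_{L^\infty}(K+1)^3)^m$. This is essentially the argument behind \eqref{non_integrated_bound}.

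The only delicate point is justifying the Hölder step with the interleaved bounded factors. One also has to make sure the exponents remain admissible: when some $\tau_j=0$, that factor is the identity, and it is treated by operator norm. Positivity is not needed here, since only trace-norm bounds are used.

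The case $m=0$ is immediate: $\|\rho_{\eps,0}\|_{\mcL^1}\le 1$, and alternatively one can invoke Lemma \ref{number_operator_powers_lemma}. All constants depend only on $m$, $p$, $K$ and $\|v\|_{L^\infty}$, and they are uniform in $\eps$ and $\mbt$.
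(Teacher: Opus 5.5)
Your proposal is correct and follows the paper's own route. The paper simply defers to the proof of \cite[Lemma 3.8]{RS23} together with the number truncation, which is exactly your reduction $(\mcN_\eps+1)^p\rho_{\eps,m}(\mbt)=\rho_{\eps,m}(\mbt)(\mcN_\eps+1)^p\chi(\mcN_\eps)$, followed by Schatten--H\"older with $\|\e^{-\tau H_{\eps,0}}\|_{\mcL^{1/\tau}}=Z_{\eps,0}^{\tau}$ and the sectorwise operator-norm bound on $\mcV_\eps\chi(\mcN_\eps)$ (the same scheme the paper writes out in the proof of Lemma \ref{explicit_terms_Hamiltonian_lemma}); your remark about zero-length intervals is unnecessary since $\mbt$ lies in the open simplex, but harmless.
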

\begin{proof}
This follows by arguing as in the proof of \cite[Lemma 3.8]{RS23}, and using the truncation in the number operator.
\end{proof}
\begin{remark}
Lemma \ref{explicit_terms_number_lemma} implies that for each $m \in \N_0$ and $\mbt \in \mc{U}$, there exists a Wigner measure for $\rho_{\eps,m}(\mbt)$. Indeed, we will show that this Wigner measure is unique, uniform for $\mbt \in \mc{U}$, and given by the corresponding term in the classical expansion.
\end{remark}
\begin{lemma}
\label{explicit_terms_Hamiltonian_lemma}
Let $m,p \in \N_0$ and $\mbt \in \mc{U}$. Then one has that
\begin{align}
\label{explicit_terms_Hamiltonian_equation_1}
\|(H_{\eps,0}^{(s)} + 1)^p \rho_{\eps,m}(\mbt)\|_{\mc{L}^1} &< C(m,p,\mbt) < \infty, \\
\label{explicit_terms_Hamiltonian_equation_3}
\|\rho_{\eps,m}(\mbt) (H_{\eps,0}^{(s)} + 1)^p\|_{\mc{L}^1} &< C(m,p,\mbt) < \infty, \\
\label{explicit_terms_Hamiltonian_equation_2}
\|(H_{\eps,0}^{(s)} + 1)^\frac{p}{2} \rho_{\eps,m}(\mbt) (H_{\eps,0}^{(s)} + 1)^\frac{p}{2}\|_{\mc{L}^1} &< C(m,p,\mbt) < \infty.
\end{align}
uniformly in $\eps$.
\end{lemma}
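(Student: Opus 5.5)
We need bounds on $\|(H^{(s)}_{\eps,0}+1)^p \rho_{\eps,m}(\mbt)\|_{\mc{L}^1}$, uniform in $\eps$ but allowed to depend on $\mbt$. The plan is to split off a fraction of the free heat semigroup from the end of the product that sits next to the Hamiltonian power, and then use it to absorb that power.

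For \eqref{explicit_terms_Hamiltonian_equation_1} consider the left end. If $m=0$, or if $1-t_1>0$ (always true for $\mbt\in\mc{U}$), write
\begin{equation*}
\e^{-(1-t_1)H_{\eps,0}} = \e^{-\frac{1-t_1}{2}H_{\eps,0}}\e^{-\frac{1-t_1}{2}H_{\eps,0}}
\end{equation*}
and split
\begin{equation*}
(H_{\eps,0}^{(s)}+1)^p\rho_{\eps,m}(\mbt) = \Bigl[(H^{(s)}_{\eps,0}+1)^p \e^{-\frac{1-t_1}{2}H_{\eps,0}}(\mcN_\eps+1)^{-q}\Bigr]\,\Bigl[(\mcN_\eps+1)^{q}\,\e^{-\frac{1-t_1}{2}H_{\eps,0}}\mcV_\eps\cdots\chi(\mcN_\eps)\Bigr]\frac{1}{Z_{\eps,0}}.
\end{equation*}
The first bracket is a function of commuting second-quantised operators: $H^{(s)}_{\eps,0}$, $H_{\eps,0}$ and $\mcN_\eps$ are all diagonal in the eigenbasis of $\dd\Gamma$ of the modes $u_k$. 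On a Fock eigenvector with occupation numbers $n_k$ we have $H^{(s)}_{\eps,0}=\eps\sum n_k((2\pi k)^{2s}+1)\le H_{\eps,0}$. Hence the first bracket is bounded in operator norm by
\begin{equation*}
\sup_{x\ge0}(x+1)^p\e^{-\frac{1-t_1}{2}x}\le C(p,t_1),
\end{equation*}
even with $q=0$. This is the point where the constant depends on $\mbt$, through $(1-t_1)^{-p}$. The problem then reduces to bounding $\frac{1}{Z_{\eps,0}}\|\e^{-\frac{1-t_1}{2}H_{\eps,0}}\mcV_\eps\cdots\chi(\mcN_\eps)\|_{\mc{L}^1}$, which is $\rho_{\eps,m}$ with the first time gap halved.

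For this remaining trace norm, since every factor commutes with $\mcN_\eps$ and $\chi(\mcN_\eps)$ restricts to $\mcN_\eps\le K$, we can bound each $\mcV_\eps$ by $CK^3\|v\|^2_{L^\infty}$ in operator norm on the range of $\chi(\mcN_\eps)$, using Lemma~\ref{interaction_number_bound}. Hölder's inequality for Schatten norms with exponents $1/\tau_j$ given by the time gaps then bounds the product by $\prod_j\|\e^{-H_{\eps,0}}\chi\|_{\mc{L}^1}^{\tau_j}$, as in \cite[Lemma 3.8]{RS23}. The total time now sums to $1-\frac{1-t_1}{2}<1$, not $1$. Hölder therefore produces $\|\e^{-\theta H_{\eps,0}}\|_{\mc{L}^1}$ for some $\theta<1$ in place of $\|\e^{-H_{\eps,0}}\|_{\mc{L}^1}$. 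Dividing by $Z_{\eps,0}$ gives the ratio $Z_{\theta\eps\text{-type}}/Z_{\eps,0}$, which blows up as $\eps\to0$, so this route is the main obstacle.

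To avoid it I would instead keep the full time but borrow a small multiple of $H_{\eps,0}$ differently. Write
\begin{equation*}
(H^{(s)}_{\eps,0}+1)^p\e^{-(1-t_1)H_{\eps,0}} = (H^{(s)}_{\eps,0}+1)^p\e^{-\delta\, \dd\Gamma_\eps(h-h^{(s')})}\cdot\e^{-(1-t_1)H_{\eps,0}+\delta\,\dd\Gamma_\eps(h-h^{(s')})}.
\end{equation*}
Alternatively, and more simply, use that $H^{(s)}_{\eps,0}$ is absorbed by the free state via Corollary~\ref{p-Hamiltonian_uniform_corollary}. Concretely, apply Hölder with the $\mc{L}^{1/(1-t_1)}$ norm of $(H^{(s)}_{\eps,0}+1)^p\e^{-(1-t_1)H_{\eps,0}}$, whose $1/(1-t_1)$-th power is
\begin{equation*}
\mathrm{Tr}\bigl((H^{(s)}_{\eps,0}+1)^{p/(1-t_1)}\e^{-H_{\eps,0}}\bigr)\le C\,Z_{\eps,0}.
\end{equation*}
This follows from Corollary~\ref{p-Hamiltonian_uniform_corollary}, extended to non-integer powers by monotonicity. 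The remaining factors are treated exactly as in \cite[Lemma 3.8]{RS23}. The normalisation $Z_{\eps,0}^{-1}$ is then distributed as $Z_{\eps,0}^{-(1-t_1)}\prod_j Z_{\eps,0}^{-\tau_j}$, which yields a constant $C(m,p,\mbt)$ uniform in $\eps$.

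The bound \eqref{explicit_terms_Hamiltonian_equation_3} follows identically at the right end, using the factor $\e^{-t_mH_{\eps,0}}$, which commutes with $\chi(\mcN_\eps)$. The bound \eqref{explicit_terms_Hamiltonian_equation_2} follows by placing half the power at each end and using both end factors in the same Hölder argument.
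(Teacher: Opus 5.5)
Once you discard the first split and the unfinished $\delta\,\dd\Gamma_\eps(h-h^{(s')})$ detour, your argument is exactly the paper's proof. You correctly observe that the first split leaves the unbounded ratio $\mathrm{Tr}(\e^{-\theta H_{\eps,0}})/Z_{\eps,0}$, and what remains is the same chain of steps. First, H\"older with exponent $1/(1-t_1)$ on the end factor. Second, $\|(H_{\eps,0}^{(s)}+1)^p\e^{-(1-t_1)H_{\eps,0}}\|_{\mc{L}^{1/(1-t_1)}}=[\mathrm{Tr}((H_{\eps,0}^{(s)}+1)^{p/(1-t_1)}\e^{-H_{\eps,0}})]^{1-t_1}\lesssim Z_{\eps,0}^{1-t_1}$ by Corollary~\ref{p-Hamiltonian_uniform_corollary}. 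Third, $\|\e^{-\tau H_{\eps,0}}\|_{\mc{L}^{1/\tau}}=Z_{\eps,0}^{\tau}$ for the remaining gaps. Finally, the operator-norm bound on $\mcV_\eps\chi(\mcN_\eps)$.

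Your monotonicity remark for the non-integer power $p/(1-t_1)$ supplies a step the paper leaves implicit. The other two bounds follow, as you say, by the same argument at the right end (using $t_m>0$) and at both ends.
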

\begin{proof}
For \eqref{explicit_terms_Hamiltonian_equation_1}, we argue similarly to \cite[Proof of Lemma 3.8]{RS23}.  Note that
\begin{equation}
\label{trace_identity}
\left\| \e^{-tH_{\eps,0}} \right\|_{\mc{L}^{\frac{1}{t}}} = \mathrm{Tr}(\e^{-H_{\eps,0}})^{t} = Z_{\eps,0}^t.
\end{equation}
Therefore one writes
\begin{multline*}
\|(H_{\eps,0}^{(s)} + 1)^p \rho_{\eps,m}(\mbt)\|_{\mc{L}^1}  \leq \frac{1}{Z_{\eps,0}}\|(H_{\eps,0}^{(s)} + 1)^p \e^{-(1-t_1)H_{\eps,0}}\|_{\mcL^{\frac{1}{1-t_1}}}  \prod_{j=1}^m \|\e^{-(t_j-t_{j+1})H_{\eps,0}}\|_{\mc{L}^{\frac{1}{t_j-t_{j+1}}}}
\\  \times \|\mcV_\eps \chi(\mcN_\eps)\|_{\mc{L}}^m\lesssim_m \frac{Z_{\eps,0}^{t_1}}{Z_{\eps,0}} \|(H_{\eps,0}^{(s)} + 1)^p \e^{-(1-t_1)H_{\eps,0}}\|_{\mcL^{\frac{1}{1-t_1}}} ,
\end{multline*}
where we adopt the convention $t_{m+1} = 0$. We have also used that every operator in the definition of $\rho_{\eps,m}(\mbt)$ commutes with the number operator. By Corollary \ref{p-Hamiltonian_uniform_corollary}, the one has that
\begin{equation}
\label{intermediate_hamiltonian_estimate}
\|(H_{\eps,0}^{(s)} + 1)^p \e^{-(1-t_1)H_{\eps,0}}\|_{\mcL^{\frac{1}{1-t_1}}} = \left[\mathrm{Tr}\left( (H_{\eps,0}^{(s)} + 1)^{\frac{p}{1-t_1}} \e^{-H_{\eps,0}} \right)\right]^{{1-t_1}} \lesssim C(\mbt,p) Z_{\eps,0}^{1-t_1}.
\end{equation}
By \eqref{trace_identity} and \eqref{intermediate_hamiltonian_estimate}, it follows that 
\begin{equation*}
\|(H_{\eps,0}^{(s)} + 1)^p \rho_{\eps,m}(\mbt)\|_{\mc{L}^1} \lesssim_{m,p,\mbt} \frac{Z_{\eps,0} ^{1-t_1}}{Z_{\eps,0} ^{1-t_1}}.
\end{equation*}
The other two estimates follow similarly, so we omit the details.
\end{proof}
We will use the following commutator bound.
\begin{lemma}
\label{Weyl_commutator_bound_lemma}
Suppose that $f \in L^2(\T)$, $\mbt \in \mathcal{U}$, and let $m \in \N$. Then
\begin{equation}
\label{ugly_commutator}
\lim_{\eps \to 0} \frac{1}{Z_{\eps,0}}\mathrm{Tr}\left(\left[W_{\eps}(f),\e^{-(1-t_1)H_{\eps,0}}\right] \mcV_{\eps} \e^{-(t_1-t_2)H_{\eps,0}} \ldots \mcV_{\eps} \e^{-t_m H_{\eps,0}} \chi(\mcN_{\eps})\right) = 0.
\end{equation}
Moreover convergence is uniform in $\mbt$.
\end{lemma}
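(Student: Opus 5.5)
The plan is to show that the commutator $[W_\eps(f),\e^{-(1-t_1)H_{\eps,0}}]$ is $O(\eps)$ after it is absorbed into the remaining factors. I will do this first for smooth $f$ and then extend to $f \in L^2(\T)$ by density.

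\textbf{Step 1: Duhamel formula for the commutator.} For $f \in C^\infty(\T)$ and $\tau := 1-t_1$ I write
\begin{equation*}
[W_\eps(f),\e^{-\tau H_{\eps,0}}] = -\int_0^\tau \dd s \, \e^{-(\tau-s)H_{\eps,0}}\,[W_\eps(f),H_{\eps,0}]\,\e^{-sH_{\eps,0}}.
\end{equation*}
Lemma \ref{free_number_bound} gives $[W_\eps(f),H_{\eps,0}] = \eps W_\eps(f)(\mcR_1+\eps\mcR_2)^{\WW}$, with $\|[W_\eps(f),H_{\eps,0}](\mcN_\eps+1)^{-1/2}\| \lesssim \eps(\|f\|_{H^2}+\|f\|_{H^1}^2)$.

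\textbf{Step 2: trace bound.} All factors other than $W_\eps(f)$ and the commutator commute with $\mcN_\eps$, and $\chi(\mcN_\eps)$ sits at the right end. By cyclicity I move $\chi(\mcN_\eps)$ next to the commutator. The Weyl operator and the $\mcR_1$ part shift the particle number by at most one sector, so a factor $(\mcN_\eps+1)^{1/2}$ can be placed against the truncation. This uses $\|(\mcN_\eps+1)^{1/2}W_\eps(f)(\mcN_\eps+1)^{-1/2}\| \lesssim 1$, which is \cite[Proposition 2.4]{NierAmmari}, together with $\|(\mcN_\eps+1)^{1/2}\chi(\mcN_\eps)\| \le (K+1)^{1/2}$. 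I then apply Hölder's inequality in Schatten norms exactly as in the proof of Lemma \ref{explicit_terms_Hamiltonian_lemma}. The two pieces $\e^{-(\tau-s)H_{\eps,0}}$ and $\e^{-sH_{\eps,0}}$ are estimated in $\mcL^{1/(\tau-s)}$ and $\mcL^{1/s}$, each $\mcV_\eps$ is controlled through $\|\mcV_\eps\chi(\mcN_\eps)\|\lesssim K^3\|v\|_{L^\infty}^2$, and the trace identity \eqref{trace_identity} collects exactly a factor $Z_{\eps,0}$. There is one subtlety: $\chi(\mcN_\eps)$ is only a single factor, while the $\mcV_\eps$ are spread out. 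This is harmless, because every operator except $W_\eps(f)$ commutes with $\mcN_\eps$, so I can insert $\chi(\mcN_\eps\pm\eps)$-type projections as needed. The result is the bound $\lesssim_{m,K,v} \eps\,\tau(\|f\|_{H^2}+\|f\|_{H^1}^2)$, which is uniform in $\mbt$. So the expression in \eqref{ugly_commutator} tends to $0$ uniformly in $\mbt$ for smooth $f$.

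\textbf{Step 3: density.} For general $f\in L^2$ I choose smooth $g$ with $\|f-g\|_{L^2}<\delta$. By \eqref{Weyl_CR}, $W_\eps(f)-W_\eps(g)=W_\eps(g)\bigl(\e^{\ii\eps\theta}W_\eps(f-g)-1\bigr)$ for a real phase $\theta$, and $\|(W_\eps(h)-1)(\mcN_\eps+1)^{-1/2}\|\lesssim \|h\|_{L^2}$ uniformly in $\eps$. Expanding the commutator into its two products, the difference between the $f$ and $g$ terms is therefore bounded by $C\delta$ uniformly in $\eps$ and $\mbt$. Here I again use the number truncation and the $\mcL^1$ bound of Lemma \ref{explicit_terms_number_lemma}, or more precisely the uniform bound in \eqref{non_integrated_bound} with $\xi=I$ and a power of $\mcN_\eps$ inserted. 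Letting $\eps\to0$ and then $\delta\to0$ gives the claim.

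I expect the main obstacle to be Step 2, specifically the unbounded factor $(\mcR_1)^{\WW}$. It requires a power of $\mcN_\eps$, which has to be absorbed by the truncation $\chi(\mcN_\eps)$ even though that truncation sits at the far end of the product, and this must be done with constants uniform in $\mbt$ (including near $s=0$ or $s=\tau$). If the Schatten exponent bookkeeping fails at these endpoints, I would instead bound $\|\e^{-(\tau-s)H_{\eps,0}}\|\le1$ in operator norm and put all the trace-class weight on $\e^{-sH_{\eps,0}}\cdots\e^{-t_mH_{\eps,0}}$. This is possible because the exponents $s+t_1-t_2+\dots$ still sum to at least $1-\tau$, and the factor $\eps$ then wins.
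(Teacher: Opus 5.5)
Your proposal is correct and follows the paper's proof: Duhamel's formula for $[W_\eps(f),\e^{-(1-t_1)H_{\eps,0}}]$, Lemma \ref{free_number_bound} with $\chi(\mcN_\eps)$ moved to the right of the commutator, and Schatten--H\"older with \eqref{trace_identity} for smooth $f$, then density via \eqref{Weyl_CR} and $\|(W_\eps(h)-I)(\mcN_\eps+1)^{-1/2}\|\lesssim\|h\|_{L^2}$, using cyclicity for the term where the Weyl operator sits to the right of the heat kernel. The endpoint worry is unfounded, because H\"older simply takes an $\mcL^\infty$ factor when $s=0$ or $s=\tau$; drop the fallback in any case, since bounding $\e^{-(\tau-s)H_{\eps,0}}$ in operator norm leaves total exponent $s+t_1<1$, and that gives no trace-class bound.
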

\begin{proof}
First let $f \in C^\infty(\R)$. We recall the formula 
\begin{equation*}
[A,\e^{-B}] = \int_0^1 \dd s\, e^{-s B} [A,B] \e^{-(1-s)B},
\end{equation*}
with $B = (1-t_1)H_{\eps,0}$. We thus rewrite \eqref{ugly_commutator} as
\begin{equation}
\label{ugly_commutator_rewrite_1}
\frac{1}{Z_{\eps,0}}\mathrm{Tr} \bigg( \int_0^1 \dd s  \, e^{-s B} [W_\eps(f),B] \e^{-(1-s)B} \mcV_{\eps} \e^{-(t_1-t_2)H_{\eps,0}} \ldots \mcV_{\eps} \e^{-t_m H_{\eps,0}} \chi(\mcN_{\eps}) \bigg).
\end{equation}
One can bound \eqref{ugly_commutator_rewrite_1} by
\begin{equation*}
\frac{1}{Z_{\eps,0}} \int_0^1 \left\|e^{-s B} [W_\eps(f),B] \e^{-(1-s)B} \chi(\mcN_\eps) \right\|_{\mc{L}^{\frac{1}{1-t_1}}} \|\mcV_\eps \chi(\mcN_\eps)\|_{\mc{L}}^m \prod_{j=1}^m \|\e^{-(t_j-t_{j+1})H_{\eps,0}}\|_{\mc{L}^{\frac{1}{t_j-t_{j+1}}}},
\end{equation*}
where we adopt the convention $t_{m+1} = 0$. Here we have used that the number operator commutes with every non-Weyl operator in \eqref{ugly_commutator_rewrite_1} and that $\chi^p = \chi$ for any $p \in \N$. Then using the support properties of $\chi$ and Lemma \ref{free_number_bound}, one has
\begin{multline*}
\left\|e^{-s B} [W_\eps(f),B] \e^{-(1-s)B} \chi(\mcN_\eps) \right\|_{\mc{L}^{\frac{1}{1-t_1}}} \\
\leq (1-t_1)\|\e^{-s(1-t_1)H_{\eps,0}}\|_{\mc{L}^{\frac{1}{s(1-t_1)}}} \|\e^{-(1-s)(1-t_1)H_{\eps,0})}\|_{\mc{L}^{\frac{1}{(1-s)(1-t_1)}}} \eps  (\|f\|_{H^2} + \|f\|_{H^1}^2) (K+1)^\frac{1}{2}.
\end{multline*}
Recalling \eqref{trace_identity} and using $0 \leq (1-t_1) \leq 1$, one can bound the absolute value of \eqref{ugly_commutator_rewrite_1} by
\begin{equation*}
\eps \frac{Z_{\eps,0}}{Z_{\eps,0}}\frac{C(\|f\|_{H^2})(K+1)^{3m + \frac{1}{2}}\|v\|_{L^\infty}^{2m}}{3^m} \to 0
\end{equation*}
as $\eps \to 0$. In particular, \eqref{ugly_commutator} holds for $f$ smooth. For $f \in \mfh$, we use a density argument. Take $f_n \to f$ with $f_n$ smooth. By \eqref{Weyl_CR}, one has
\begin{equation*}
W_\eps(f_n) - W_\eps(f) = W_\eps(f_n)(\e^{\frac{ \ii \eps}{2} \mathrm{Im} \langle f_n,f \rangle} W_\eps(f-f_n) - I).
\end{equation*}
Therefore, one has 
\begin{equation}
\label{ugly_commutator_2}
[W_\eps(f_n) - W_\eps(f),\e^{-(1-t_1)H_{\eps,0}}] = [\e^{\frac{ \ii \eps}{2} \mathrm{Im} \langle f_n,f \rangle} W_\eps(f_n)W_\eps(f_n-f), \e^{-(1-t_1)H_{\eps,0}}] - [W_\eps(f_n),\e^{-(1-t_1)H_{\eps,0}}].
\end{equation}
For fixed $n$, the second term in \eqref{ugly_commutator_2} will give zero in \eqref{ugly_commutator} since $f_n$ is smooth. One splits the first commutator in \eqref{ugly_commutator_2} into its two parts. For the first term, one needs to bound
\begin{equation}
\label{commutator_intermediate_expression_1}
\frac{ \e^{\frac{ \ii \eps}{2} \mathrm{Im} \langle f_n,f \rangle}}{Z_{\eps,0}}\mathrm{Tr} \bigg( W_\eps(f_n) W_\eps(f_n-f) \e^{-(1-t_1)H_{\eps,0}}  \mcV_{\eps} \e^{-(t_1-t_2)H_{\eps,0}} \ldots \mcV_{\eps} \e^{-t_m H_{\eps,0}} \chi(\mcN_{\eps}) \bigg).
\end{equation}
To estimate this, we argue similarly to the proof of \cite[Lemma 3.8]{RS23}. In particular, we can bound \eqref{commutator_intermediate_expression_1} by
\begin{equation*}
\frac{1}{Z_{\eps,0}} \|W_\eps(f_n){W_\eps(f_n-f)}\chi(\mcN_\eps)\|_{\mc{L}} \|\mcV_\eps \chi(\mcN_\eps)\|_{\mc{L}}^m \prod_{i=1}^m \|\e^{-(t_j-t_{j+1})H_{\eps,0}}\|_{\mc{L}^{\frac{1}{t_j-t_{j+1}}}}.
\end{equation*}
Here we adopt the notation $t_0 = 1$ and $t_{m+1} = 0$. Note that in making this argument, we have used that $\chi(\mcN_\eps)$ commutes with $H_{\eps,0}$ and $\mcV_\eps$. We then use the support properties of $\chi$ and \cite[Lemma 3.1]{AN08} to bound this above by
\begin{equation*}
\|f_n-f\|_{\mfh} \frac{Z_{\eps,0}}{Z_{\eps,0}} \frac{(K+1)^{3m + \frac{1}{2}} \|v\|_{L^\infty}^{2m}}{3^m}.
\end{equation*}
In particular, this term can be made arbitrarily small. We argue similarly for the term corresponding to $\e^{\frac{ \ii \eps}{2} \mathrm{Im} \langle f_n,f \rangle}\e^{-(1-t_1)_{H_{\eps,0}}}W_\eps(f_n)W_\eps(f_n-f)$. In this case, one uses cyclicity of the trace to avoid commuting the number operator and the Weyl operators. We omit the details.
\end{proof}
\begin{remark}
We have made heavy use of the truncation to prove \eqref{ugly_commutator}. We expect that \eqref{ugly_commutator} should also hold in the untruncated (defocusing) setting as well, however one likely needs to appeal to the diagrammatic arguments introduced in \cite{FKSS17}. Since the result is not necessary for the current paper, we do not comment further on the defocusing problem.
\end{remark}

\begin{proposition}
\label{convergence_proposition}
For any $m \in \N_0$ and any $\mbt \in \mathcal{U}$, we have that \begin{equation}
\label{convergence_intermediate}
\mc{M}(\rho_{\eps,m}(\mbt)) = \left\{ \mcV^m \chi(\mcN) \,  \dd \mu_0 \right\}.
\end{equation}
\end{proposition}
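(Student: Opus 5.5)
We argue by induction on $m$; the base case $m=0$ is Lemma \ref{convergence_free_cutoff_measure_lemma}. The main point to note is that the Wigner measure does not depend on $\mbt$. Indeed, since $Z_{\eps,0}^{-1}$ is a fixed normalisation, the different exponentials only redistribute the free evolution, and in the limit they all collapse to the single Gibbs factor $\e^{-H_{\eps,0}}$.

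For the inductive step, suppose \eqref{convergence_intermediate} holds for $m-1$ and every time vector in the corresponding simplex. Fix $\mbt=(t_1,\ldots,t_m)\in\mc{U}$ and $f\in\mfh$. The goal is to compute $\lim_{\eps\to0}\mathrm{Tr}(W_\eps(f)\rho_{\eps,m}(\mbt))$ in three steps.

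\emph{Step 1.} Use Lemma \ref{Weyl_commutator_bound_lemma} to move $W_\eps(f)$ past $\e^{-(1-t_1)H_{\eps,0}}$, up to an error that vanishes as $\eps\to0$. Then use cyclicity of the trace to move $\e^{-(1-t_1)H_{\eps,0}}$ to the far right. This operator commutes with $\chi(\mcN_\eps)$ and merges with $\e^{-t_mH_{\eps,0}}$. The trace becomes
\begin{equation*}
\frac{1}{Z_{\eps,0}}\mathrm{Tr}\big(W_\eps(f)\mcV_\eps \e^{-(t_1-t_2)H_{\eps,0}}\cdots\mcV_\eps\e^{-(t_m+1-t_1)H_{\eps,0}}\chi(\mcN_\eps)\big).
\end{equation*}

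\emph{Step 2.} This is again a product of $m$ free factors whose exponents sum to $1$. We rewrite it as $\mathrm{Tr}(W_\eps(f)\mcV_\eps\varrho_\eps)$, where
\begin{equation*}
\varrho_\eps=Z_{\eps,0}^{-1}\e^{-(t_1-t_2)H_{\eps,0}}\mcV_\eps\cdots\mcV_\eps\e^{-(t_m+1-t_1)H_{\eps,0}}\chi(\mcN_\eps).
\end{equation*}
Up to a cyclic relabelling, $\varrho_\eps$ is of the form $\rho_{\eps,m-1}(\mbt')$ with $\mbt'=(1-(t_1-t_2),\ldots)$ an admissible point of the $(m-1)$-simplex. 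One checks that $1-t_1'=t_1-t_2$ and that the remaining gaps are $t_j-t_{j+1}$, with the last gap equal to $t_m+1-t_1$. These gaps are positive and sum to $1$, so $\mbt'$ lies in the $(m-1)$-dimensional $\mc{U}$.

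\emph{Step 3.} By the induction hypothesis, $\mc{M}(\varrho_\eps)=\{\mcV^{m-1}\chi(\mcN)\,\dd\mu_0\}$. Lemma \ref{explicit_terms_Hamiltonian_lemma} gives the bound $\|\varrho_\eps(H^{(s)}_{\eps,0}+1)^{3/2}\|_{\mcL^1}\le C$, and $\varrho_\eps$ commutes with $\mcN_\eps$ and is truncated. Lemma \ref{Wigner_measure_V_lemma} then yields $\mc{M}(\mcV_\eps\varrho_\eps)=\{\mcV^m\chi(\mcN)\,\dd\mu_0\}$. That lemma is stated for positive states, but its proof only uses the trace-norm bounds, so it applies to the operators here, as the remark after Lemma \ref{Wigner_existence_lemma} allows. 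Since the limit exists along every sequence $\eps\to0$, the Wigner measure is unique, which proves \eqref{convergence_intermediate}.

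The main obstacle is this last step. Lemma \ref{Wigner_measure_V_lemma} requires a uniform bound with $(H^{(s)}_{\eps,0}+1)^{3/2}$ on the side where $\mcV_\eps$ sits, together with the Weyl commutation $[W_\eps(f),\mcV_\eps]=O(\eps)$. The constant in Lemma \ref{explicit_terms_Hamiltonian_lemma} depends on $\mbt$ and may degenerate as the gaps shrink. Uniformity in $\mbt$ is not required for the statement, however. If it is needed later, it should follow from dominated convergence using the uniform trace bound \eqref{non_integrated_bound}. A secondary point is that the cyclic reshuffle in Step 1 must be done with care so that $\chi(\mcN_\eps)$ always sits next to a factor that commutes with $\mcN_\eps$; this holds because $H_{\eps,0}$ and $\mcV_\eps$ both commute with $\mcN_\eps$.
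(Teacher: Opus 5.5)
Your proof is correct and follows the paper's argument essentially step for step: induction starting from Lemma \ref{convergence_free_cutoff_measure_lemma}, then Lemma \ref{Weyl_commutator_bound_lemma} plus cyclicity of the trace to reduce to $\mathrm{Tr}(W_\eps(f)\mcV_\eps\rho_{\eps,m-1}(\mbs))$, then the inductive hypothesis combined with Lemmas \ref{explicit_terms_Hamiltonian_lemma} and \ref{Wigner_measure_V_lemma}. Your $\mbt'$ is exactly the paper's $\mbs$ with $s_i = 1-t_1+t_{i+1}$, so no cyclic relabelling is actually needed, and you are right that uniformity in $\mbt$ is not required for this statement.
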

\begin{proof}
We argue \eqref{convergence_intermediate} by induction. Let us note that the $m=0$ case of the induction is precisely Lemma \ref{convergence_free_cutoff_measure_lemma}. We can write
\begin{multline*}
\mathrm{Tr}(W_{\eps}(f) \rho_{\eps,m}(\mbt)) = \frac{1}{Z_{\eps,0}}\mathrm{Tr}\left(\left[W_{\eps}(f),\e^{-(1-t_1)H_{\eps,0}}\right] \mcV_{\eps} \e^{-(t_1-t_2)H_{\eps,0}} \ldots \mcV_{\eps} \e^{-t_m H_{\eps,0}} \chi(\mcN_{\eps})\right) \\
+ \frac{1}{Z_{\eps,0}} \mathrm{Tr}(\e^{-(1-t_1)H_{\eps,0}} W_{\eps}(f) \mcV_{\eps} \e^{-(t_1-t_2)H_{\eps,0}} \mcV_{\eps} \ldots \e^{-t_m H_{\eps,0}}\chi(\mcN_{\eps})).
\end{multline*}
By Lemma \ref{Weyl_commutator_bound_lemma}, the first term will vanish in the limit $\eps \to 0$. We use cyclicity of the trace and the fact that the free Hamiltonian is particle number conserving to write
\begin{equation*}
\lim_{\eps \to 0} \mathrm{Tr}(W_{\eps}(f) \rho_{\eps,m}(\mbt)) = \lim_{\eps \to 0} \mathrm{Tr} (W_{\eps}(f)  \mcV_{\eps} \rho_{\eps,m-1}(\mbs)).
\end{equation*}
Here we have written $s_{i} := 1 - t_1 + t_{i+1}$. Using the inductive hypothesis, Lemma \ref{explicit_terms_Hamiltonian_lemma}, and Lemma \ref{Wigner_measure_V_lemma} 
\begin{equation*}
\lim_{\eps \to 0} \mathrm{Tr}(W_\eps(f)\rho_{\eps,m}(\mbt)) = \int \dd \mu_0  \, \e^{\sqrt{2} \ii \mathrm{Re}\langle f,u \rangle} \mc{V}^m \chi(\mcN),
\end{equation*}
so \eqref{convergence_intermediate} follows.
\end{proof}
\subsection{Convergence of the explicit terms}
In this section, we prove the convergence of the quantum explicit terms. We first prove the following pointwise convergence result.
\begin{lemma}[Pointwise convergence]
Suppose that $m,p \in \N_0$. Then for all $\xi \in \mcCp$ and any $\mbt \in \mc{U}$, one has
\begin{equation*}
\mathrm{Tr}(\Theta_\eps(\xi)\rho_{\eps,m}(\mbt)) \to \int \dd \mu_0 \, \mc{V}^m \Theta(\xi) {\chi(\mcN)}, 
\end{equation*}
as $\eps \to 0$. In particular, one has $\alpha_{\eps,m}^\xi \to \alpha_m^\xi$ as $\eps \to 0$.
\end{lemma}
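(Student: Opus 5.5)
The plan is to combine Proposition \ref{convergence_proposition} with Lemma \ref{theta_convergence_lemma}, applied to the family $\varrho_\eps := \rho_{\eps,m}(\mbt)$ for fixed $\mbt \in \mc{U}$. First I check the hypotheses of Lemma \ref{theta_convergence_lemma}. Since $\chi$ is an indicator, $\chi(\mcN_\eps)^2 = \chi(\mcN_\eps)$. Moreover $H_{\eps,0}$, $\mcV_\eps$ and $\chi(\mcN_\eps)$ all commute with $\mcN_\eps$. Together these give $\varrho_\eps = \varrho_\eps \chi(\mcN_\eps)$ and $[\varrho_\eps,\mcN_\eps]=0$. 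The bound $\|\varrho_\eps (H^{(s)}_{\eps,0}+1)^{p/2}\|_{\mc{L}^1} < C(m,p,\mbt)$, uniform in $\eps$, is exactly \eqref{explicit_terms_Hamiltonian_equation_3}. Proposition \ref{convergence_proposition} identifies $\mc{M}(\varrho_\eps) = \{\mcV^m\chi(\mcN)\,\dd\mu_0\}$.

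This is a possibly signed finite measure rather than a probability measure, since $\rho_{\eps,m}$ is only trace class. Lemma \ref{theta_convergence_lemma} was stated for trace-class operators, so I accept this, relying on the remark after Lemma \ref{Wigner_existence_lemma}. Lemma \ref{theta_convergence_lemma} then gives, for each $\xi \in \mcCp$,
\begin{equation*}
\mathrm{Tr}(\Theta_\eps(\xi)\rho_{\eps,m}(\mbt)) \to \int \dd\mu_0\, \Theta(\xi)\mcV^m\chi(\mcN).
\end{equation*}
The integrand is integrable by Lemma \ref{random_variable_bound_lemma} together with $|\mcV| \le \frac13\|v\|_{L^\infty}^2\mcN^3$ on the support of $\chi$. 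This proves the first claim.

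For the explicit terms, I note that
\begin{equation*}
\alpha^\xi_{\eps,m} = \int_{\mc{U}} \dd\mbt\, \mathrm{Tr}(\Theta_\eps(\xi)\rho_{\eps,m}(\mbt)).
\end{equation*}
The integrand converges pointwise in $\mbt$ to a limit independent of $\mbt$. By \eqref{non_integrated_bound} it is bounded uniformly in $\eps$ and $\mbt$ by $K^p(K^3\|v\|^2_{L^\infty})^m\|\xi\|/3^m$. Dominated convergence on $\mc{U}$, which has volume $1/m!$, then yields
\begin{equation*}
\alpha^\xi_{\eps,m} \to \frac{1}{m!}\int\dd\mu_0\,\Theta(\xi)\mcV^m\chi(\mcN) = \alpha^\xi_m.
\end{equation*}

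The main obstacle is the step invoking Lemma \ref{theta_convergence_lemma}. The constant in \eqref{explicit_terms_Hamiltonian_equation_3} depends on $\mbt$ and degenerates as $t_1 \to 1$, since it uses Corollary \ref{p-Hamiltonian_uniform_corollary} with exponent $p/(1-t_1)$. This is harmless for pointwise convergence at fixed $\mbt$, but the domination on $\mc{U}$ must instead come from the $\mbt$-uniform bound \eqref{non_integrated_bound}.

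A second technical point is the compatibility of the Wigner-measure convergence, stated via Weyl operators, with the non-compact symbols $\Theta(\xi)$. This is handled inside Lemma \ref{theta_convergence_lemma} through the cutoff $\Xi_R(H^{(s)}_{0,p})$ and Corollary \ref{quantum_tail_bound}. In that argument, the $\mc{L}^1$-bound with $(\mcN_\eps+1)^{p/2}$ on one side and $(H^{(s)}_{\eps,0}+1)^{p/2}$ on the other is obtained by moving $(\mcN_\eps+1)^{p/2}$ through $\rho_{\eps,m}(\mbt)$ and bounding it by $(K+1)^{p/2}$ via the truncation.
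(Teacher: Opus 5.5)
Your proposal is correct and follows the paper's proof: the paper likewise applies Lemma \ref{theta_convergence_lemma} to $\rho_{\eps,m}(\mbt)$, using the bounds of Lemma \ref{explicit_terms_Hamiltonian_lemma} and the Wigner measure identified in Proposition \ref{convergence_proposition}, and then passes to $\alpha^\xi_{\eps,m}$ by dominated convergence over $\mc{U}$ with the $\mbt$-uniform bound \eqref{non_integrated_bound}. One small slip: in \eqref{explicit_terms_Hamiltonian_equation_3} the Hamiltonian sits on the right, so it pairs with $\e^{-t_m H_{\eps,0}}$ and the constant degenerates as $t_m \to 0$, not as $t_1 \to 1$; this does not affect the argument.
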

\begin{proof}
This is a consequence of Lemma \ref{theta_convergence_lemma} and Lemma \ref{explicit_terms_Hamiltonian_lemma}. Convergence of the explicit terms then follows by the dominated convergence theorem.
\end{proof}
We have now established the pointwise convergence of the explicit terms for $\xi \in \mcCp$. To prove Theorem \ref{bounded_theorem}, we need this convergence to be uniform in $\mcCp$. To extend to uniform convergence, we first prove the following equicontinuity result.
\begin{lemma}
\label{equicontinuity_lemma}
Suppose that $\xi_1,\xi_2 \in \mfBp$. Then
\begin{equation*}
\left| \mathrm{Tr}\left(\Theta_\eps(\xi_1) \rho_{\eps,m}(\mbt)\right) -  \mathrm{Tr}\left(\Theta_\eps(\xi_2) \rho_{\eps,m}(\mbt)\right)\right| \lesssim_{m,p} \|\xi_1-\xi_2\|_{L^2}.
\end{equation*}
\end{lemma}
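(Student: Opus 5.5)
By linearity of $\xi \mapsto \Theta_\eps(\xi)$, the difference equals $\mathrm{Tr}(\Theta_\eps(\xi_1-\xi_2)\rho_{\eps,m}(\mbt))$. It therefore suffices to show that for $\zeta \in \mcL^2(\mfhp)$ one has
\begin{equation*}
|\mathrm{Tr}(\Theta_\eps(\zeta)\rho_{\eps,m}(\mbt))| \lesssim_{m,p} \|\zeta\|_{\mcL^2(\mfhp)},
\end{equation*}
uniformly in $\eps$. We may also want the bound uniform in $\mbt \in \mc U$, which is what is needed later for uniform convergence of $\alpha^\xi_{\eps,m}$.

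The approach is to reproduce the argument behind \eqref{non_integrated_bound}, that is, \cite[Lemma 3.8]{RS23}, which only uses operator norms of the inserted observable. We write $\rho_{\eps,m}(\mbt)$ as the product of the factors $\e^{-(t_j-t_{j+1})H_{\eps,0}}$ and $\mcV_\eps$. Every factor commutes with $\mcN_\eps$ and $\chi(\mcN_\eps)=\chi(\mcN_\eps)^{m+2}$, so we can attach one copy of $\chi(\mcN_\eps)$ to $\Theta_\eps(\zeta)$ and one to each $\mcV_\eps$. Since $\Theta_\eps(\zeta)$ also preserves particle number, this is legitimate.

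Applying Hölder's inequality in Schatten classes with exponents $\frac{1}{t_j-t_{j+1}}$ (with $t_0=1$ and $t_{m+1}=0$) and using \eqref{trace_identity}, we obtain
\begin{equation*}
|\mathrm{Tr}(\Theta_\eps(\zeta)\rho_{\eps,m}(\mbt))| \le \frac{1}{Z_{\eps,0}} \|\Theta_\eps(\zeta)\chi(\mcN_\eps)\|_{\mcL}\, \|\mcV_\eps\chi(\mcN_\eps)\|_{\mcL}^m\, Z_{\eps,0}.
\end{equation*}
Lemma \ref{interaction_number_bound} together with the support of $\chi$ gives $\|\mcV_\eps\chi(\mcN_\eps)\|_{\mcL}\lesssim (K+1)^3\|v\|_{L^\infty}^2$.

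It remains to bound $\|\Theta_\eps(\zeta)\chi(\mcN_\eps)\|_{\mcL}$. Lemma \ref{interaction_number_bound} (equivalently the standard Wick bound $\|\Theta_\eps(\zeta)(\mcN_\eps+1)^{-p}\|_{\mcL}\le C(p)\|\zeta\|_{\mcL(\mfhp)}$, which is linear in $\zeta$) gives a bound by $C(p)(K+1)^p\|\zeta\|_{\mcL}$. Since the operator norm is dominated by the Hilbert--Schmidt norm, this is at most $C(p)(K+1)^p\|\zeta\|_{\mcL^2}$. Combining the estimates yields the claim with constant $C(p)(K+1)^{p+3m}\|v\|_{L^\infty}^{2m}$, independent of $\eps$ and $\mbt$.

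The only delicate point is to make sure that the constant in Lemma \ref{interaction_number_bound} scales linearly in $\|\zeta\|$. As stated, that lemma hides this dependence in $C(p)$. If needed, I would verify it directly from the sector-wise formula: on $\mfh^{(n)}$ with $n\le K/\eps$,
\begin{equation*}
\|\Theta_\eps(\zeta)|_{\mfh^{(n)}}\| \le \eps^p\, n(n-1)\cdots(n-p+1)\,\|\zeta\| \le K^p\|\zeta\|,
\end{equation*}
exactly as in the proof of Lemma \ref{Hamiltonian_bound}.
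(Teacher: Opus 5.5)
Your proposal is correct and is essentially the paper's argument. After using linearity, the paper bounds $|\mathrm{Tr}(\Theta_\eps(\xi_1-\xi_2)\rho_{\eps,m}(\mbt))|$ by $\|\Theta_\eps(\xi_1-\xi_2)(\mcN_\eps+1)^{-p}\|_{\mcL}\,\|(\mcN_\eps+1)^{p}\rho_{\eps,m}(\mbt)\|_{\mcL^1}$ via Lemmas \ref{interaction_number_bound} and \ref{explicit_terms_number_lemma}. You instead unfold the H\"older argument behind Lemma \ref{explicit_terms_number_lemma} and bound $\|\Theta_\eps(\zeta)\chi(\mcN_\eps)\|_{\mcL}$ directly; your sector-wise check makes explicit the linearity in $\|\zeta\|_{\mcL}\le\|\zeta\|_{\mcL^2}$ and the uniformity in $\mbt$, which the paper leaves implicit.
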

\begin{proof}
We note that by linearity of the trace and the definition of $\Theta_\eps(\xi)$, one has
\begin{equation*}
\left|\mathrm{Tr}\left(\Theta_\eps(\xi_1) \rho_{\eps,m}(\mbt)\right) -  \mathrm{Tr}\left(\Theta_\eps(\xi_2) \rho_{\eps,m}(\mbt)\right)\right| = \left| \mathrm{Tr}\left(\Theta_\eps(\xi_1 - \xi_2) \rho_{\eps,m}(\mbt)\right) \right| \lesssim K^{3m+p} \|v\|_{L^\infty}^{2m} \|\xi_1-\xi_2\|_{L^2},
\end{equation*}
where the final inequality follows from inserting powers of the number operator and using Lemma \ref{interaction_number_bound} and Lemma \ref{explicit_terms_number_lemma}.
\end{proof}
Similarly, one has the corresponding result for the classical object.
\begin{lemma}
\label{continuity_classucal_lemma}
Suppose that $\xi_1,\xi_2 \in \mfBp$. Then
\begin{equation*}
\left| \int \dd \mu_0 \, \left(\Theta(\xi_1) -\Theta(\xi_2)\right)\mcV^m \chi(\mcN) \right| \lesssim_{m,p} \|\xi_1-\xi_2\|_{L^2}.
\end{equation*}
\end{lemma}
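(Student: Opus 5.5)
By linearity of $\Theta$ in $\xi$ we have $\Theta(\xi_1)-\Theta(\xi_2)=\Theta(\xi_1-\xi_2)$. It therefore suffices to bound
\begin{equation*}
\left|\int \dd \mu_0 \, \Theta(\eta)\mcV^m \chi(\mcN)\right|, \qquad \eta := \xi_1-\xi_2 \in \mcL^2(\mfhp),
\end{equation*}
by a constant depending only on $m,p$ (and on the fixed $K$ and $\|v\|_{L^\infty}$) times $\|\eta\|_{\mcL^2}$. This is the classical counterpart of the argument in Lemma \ref{equicontinuity_lemma}. Here the role of inserting powers of the number operator is played by the pointwise bound $\mcN \le K$ on the support of $\chi(\mcN)$.

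The key steps are as follows. First, by Lemma \ref{random_variable_bound_lemma} we have
\begin{equation*}
|\Theta(\eta)| \le \|\eta\|_{\mcL(\mfhp)}\|\vph\|_{\mfh}^{2p} \le \|\eta\|_{\mcL^2(\mfhp)}\,\mcN^p,
\end{equation*}
since the operator norm is dominated by the Hilbert--Schmidt norm. Second, the H\"older estimate in the proof of Lemma \ref{nonlocal_normalisability_lemma} gives $|\mcV| \le \frac{1}{3}\|v\|_{L^\infty}^2\mcN^3$. Third, on the support of $\chi(\mcN)$ we have $\mcN\le K$ and $0\le \chi\le 1$. Combining these three bounds, the integrand is dominated pointwise by
\begin{equation*}
\|\eta\|_{\mcL^2}\, K^{p}\left(\tfrac{1}{3}K^3\|v\|_{L^\infty}^2\right)^m \chi(\mcN).
\end{equation*}
Since $\mu_0$ is a probability measure, integrating yields
\begin{equation*}
\left|\int \dd \mu_0 \, \left(\Theta(\xi_1) -\Theta(\xi_2)\right)\mcV^m \chi(\mcN) \right| \le \frac{K^{p+3m}\|v\|_{L^\infty}^{2m}}{3^m}\,\|\xi_1-\xi_2\|_{L^2}.
\end{equation*}

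There is no real obstacle in this argument. The only point requiring care is that $\Theta(\eta)$ is a well-defined measurable random variable for $\eta\in\mcL^2$, so that the integral makes sense. This holds because $\vph\in\mfh$ holds $\mu_0$-almost surely and the map $\eta\mapsto\Theta(\eta)$ is continuous from $\mcL(\mfhp)$ for fixed $\vph$. Integrability then follows from the same domination. Note also that the roughness of $\chi$ plays no role, since only $0\le\chi\le\chi_{[0,K]}$ is used.
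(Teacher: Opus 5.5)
Your proposal is correct and follows the paper's proof: linearity, Lemma \ref{random_variable_bound_lemma} with the operator norm dominated by the Hilbert--Schmidt norm, the H\"older bound $|\mcV|\lesssim \|v\|_{L^\infty}^2\mcN^3$, and $\mcN\le K$ on the support of $\chi$. This gives the bound $K^{3m+p}\|v\|_{L^\infty}^{2m}\|\xi_1-\xi_2\|_{L^2}$, and your version additionally keeps the harmless factor $3^{-m}$.
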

\begin{proof}
By H\"older's inequality, one has
\begin{equation*}
|\mcV(u)| \leq \|v\|_{L^\infty}^2 \|u\|_{L^2}^6.
\end{equation*}
Using Lemma \ref{random_variable_bound_lemma} and the fact any power of the $L^2$ norm is finite because of the truncation, it follows that 
\begin{equation*}
\left| \int \dd \mu_0 \, \left(\Theta(\xi_1) -\Theta(\xi_2)\right)\mcV^m \chi(\mcN) \right| \lesssim K^{3m+p} \|v\|_{L^\infty}^{2m}\|\xi_1-\xi_2\|_{L^2}.
\end{equation*}
\end{proof}
We are now able to prove the uniform convergence of the quantum explicit terms to the classical explicit terms.
\begin{lemma}[Uniform convergence]
Recall the definition of $\mcCp$ from \eqref{curly_C_definition}. Then for any $p \in \N_0$ and $m \in \N$, one has $\alpha_{\eps,m}^\xi \to \alpha^\xi_m$ uniformly in $\xi \in \mcCp$ as $\eps \to 0$.
\end{lemma}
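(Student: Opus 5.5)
The plan is the standard compactness argument: pointwise convergence together with equicontinuity on a totally bounded set. The difficulty is that $\mfBp$, the unit ball of $\mcL^2(\mfh^{(p)})$, is not compact in the $\mcL^2$ norm. So I would not use compactness directly. Instead I would approximate $\xi$ by a finite-rank or finite-dimensional truncation, uniformly over the ball, using the extra regularity carried by the states. The element $I_p$ is a single operator, so for it the pointwise convergence lemma already gives the claim. It therefore suffices to treat $\xi \in \mfBp$.

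Fix $s \in (0,1/2)$ and let $P_N$ be the spectral projection of $h$ onto modes $|k|\le N$, with $P_N^{\otimes p}$ acting on $\mfh^{(p)}$. Split
\begin{equation*}
\xi = P_N^{\otimes p}\xi P_N^{\otimes p} + \bigl(\xi - P_N^{\otimes p}\xi P_N^{\otimes p}\bigr) =: \xi_N + \xi_N^{c}.
\end{equation*}
For the tail I would use Corollary \ref{quantum_tail_bound}. Write
\begin{equation*}
\mathrm{Tr}(\Theta_\eps(\xi_N^c)\rho_{\eps,m}(\mbt)) = \mathrm{Tr}\Bigl((H^{(s')}_{\eps,0}+1)^{-\frac p2}\Theta_\eps(\xi_N^c)(H^{(s')}_{\eps,0}+1)^{-\frac p2}\,(H^{(s')}_{\eps,0}+1)^{\frac p2}\rho_{\eps,m}(\mbt)(H^{(s')}_{\eps,0}+1)^{\frac p2}\Bigr)
\end{equation*}
with some $s' \in (0,1/2)$. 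The trace-class factor is bounded uniformly in $\eps$ by \eqref{explicit_terms_Hamiltonian_equation_2}. The operator factor is bounded by $\|\xi_N^c\|_{H^{-s'}} \lesssim N^{-s'}\|\xi\|_{\mcL^2} \le N^{-s'}$, uniformly over $\mfBp$. The corresponding classical tail $\int \Theta(\xi_N^c)\mcV^m\chi(\mcN)\,\dd\mu_0$ is handled the same way. Since $\vph \in H^{s'}$ almost surely, one has $|\Theta(\xi_N^c)| \le \|\xi_N^c\|_{H^{-s'}}\|\vph\|_{H^{s'}}^{2p}$. Combined with $|\mcV| \le \|v\|_\infty^2 K^3$ on the support of $\chi$ and the finiteness of Gaussian moments of $\|\vph\|_{H^{s'}}$, this gives a bound $\lesssim N^{-s'}$.

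On the truncated part, $\xi_N$ ranges over the unit ball of a finite-dimensional space, which is compact. Lemma \ref{equicontinuity_lemma} and Lemma \ref{continuity_classucal_lemma} give Lipschitz bounds uniform in $\eps$. Together with the pointwise convergence lemma on a finite $\delta$-net, this yields $\sup_{\xi}|\mathrm{Tr}(\Theta_\eps(\xi_N)\rho_{\eps,m}(\mbt)) - \int\Theta(\xi_N)\mcV^m\chi(\mcN)\dd\mu_0| \to 0$ for each fixed $N$ and $\mbt$. Choosing $N$ large first and then $\eps$ small gives uniform convergence in $\xi$ for each fixed $\mbt$.

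Finally, I would integrate over $\mbt \in \mathcal{U}$. The difference is bounded uniformly in $\mbt$ and $\xi$ by \eqref{non_integrated_bound} together with the classical bound. So dominated convergence applied to $\sup_{\xi}|\cdots|$ passes the uniform convergence to $\alpha_{\eps,m}^\xi \to \alpha_m^\xi$. This requires measurability of the supremum, which holds because it can be taken over a countable dense subset of $\mfBp$.

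The main obstacle is the tail step. The constant in \eqref{explicit_terms_Hamiltonian_equation_2} depends on $\mbt$ and may blow up as $t_1 \to 1$, so the tail bound is not uniform in $\mbt$. This is why I take the $N$ and $\eps$ limits pointwise in $\mbt$ and only then apply dominated convergence to the supremum, which is uniformly bounded by \eqref{non_integrated_bound}. If $\mbt$-dependence is an issue even pointwise, I would fall back on applying the truncation argument on a full-measure set of $\mbt$.
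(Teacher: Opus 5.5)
Your proposal is correct and follows essentially the same route as the paper: you truncate $\xi$ in frequency, bound the quantum tail by Corollary \ref{quantum_tail_bound} together with \eqref{explicit_terms_Hamiltonian_equation_2}, bound the classical tail by the $H^{s}$-regularity of the free field, get uniformity on the truncated finite-dimensional ball from Lemmas \ref{equicontinuity_lemma} and \ref{continuity_classucal_lemma}, and pass to the integral over the simplex by dominated convergence using \eqref{non_integrated_bound}. Your explicit $N^{-s'}$ decay of the tail, the pointwise-in-$\mbt$ handling of the $\mbt$-dependent constant, and the measurability remark for the supremum over $\xi$ spell out points the paper leaves implicit.
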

\begin{proof}
We note that it suffices to prove that convergence is uniform for $\xi \in \mfBp$. Let us note that the convergence of $\alpha^{\xi}_{ \eps,m} \to \alpha^{\xi}_m$ on the set
\begin{equation*}
\{\xi \in \mfBp : \mathrm{supp}(\hat{\xi}) \subset {[-N,N]^{2p}}\}
\end{equation*}
is uniform. Here we have used Lemma \ref{equicontinuity_lemma} and the fact that bounded finite dimensional subsets are compact. We now write
\begin{multline}
\label{ugly_uniform_convergence}
\left| \mathrm{Tr}(\Theta_\eps(\xi) \rho_{\eps,m}(\mbt)) - \int \dd \mu_0 \,  \Theta(\xi)\mcV^m\chi(\mcN)\right| \leq \left| \mathrm{Tr}(\Theta_\eps(\xi_N) \rho_{\eps,m}(\mbt)) - \int \dd \mu_0 \,  \Theta(\xi_N)\mcV^m\chi(\mcN)\right| \\ 
+ \left| \mathrm{Tr}(\Theta_\eps(\xi - \xi_N) \rho_{\eps,m}(\mbt)) \right| +  \left| \int \dd \mu_0 \,  \Theta(\xi-\xi_N)\mcV^m\chi(\mcN)\right|,
\end{multline}
where $\xi_N := \sum_{k=-N}^N \hat{\xi}(k) \e^{2 \pi \ii k x}$. From the earlier discussion, we have that the first term converges uniformly in $\xi \in \mfBp$. We fix $s \in (0,\frac{1}{2})$. For the third term on the right hand side of \eqref{ugly_uniform_convergence}, one notes that
\begin{multline*}
\left| \int \dd \mu_0 \,  \Theta(\xi-\xi_N)\mcV^m\chi(\mcN)\right| \\
\leq \int \dd \mu_0 \bigg| \int \prod_{j=1}^p \dd x_j \, \dd y_j \, (\xi-\xi_N)(\mbx;\mby) \prod_{j=1}^p ((-\Delta_{x_j})^s + 1)^{-\frac{1}{2}}\prod_{j=1}^p  ((-\Delta_{y_j})^s + 1)^{-\frac{1}{2}} \\
\prod_{j=1}^p  ((-\Delta_{x_j})^s + 1)^{\frac{1}{2}} \prod_{j=1}^p ((-\Delta_{y_j})^s + 1)^{\frac{1}{2}} \prod_{j=1}^p \overline{\vph(x_j)}\vph(y_j)  \mcV^m \chi(\mcN) \bigg| \\
\lesssim \|v\|_{L^\infty}^{2m} K^{3m} \mathbb{E}_{\mu_0}[\|\vph\|^{2p}_{H^s(\T)}] \|\xi - \xi_N\|_{H^{-s}(\T^{2p})}.
\end{multline*}
Here we use that elements of the free field are in $H^{s}(\T)$ almost surely. This converges uniformly to $0$ for $\xi \in \mfBp$. Finally, for the second term on the right hand side of \eqref{ugly_uniform_convergence}, we write 
\begin{equation*}
\Theta_\eps(\xi - \xi_N) \rho_{\eps,m}(\mbt) = \left(H_{\eps,0}^{(s)} + 1\right)^{\frac{p}{2}} \left(H_{\eps,0}^{(s)} + 1\right)^{-\frac{p}{2}} \Theta_{\eps}(\xi-\xi_N) \left(H_{\eps,0}^{(s)} + 1\right)^{-\frac{p}{2}} \left(H_{\eps,0}^{(s)} + 1\right)^{\frac{p}{2}} \rho_{\eps,m}(\mbt).
\end{equation*}
Using cyclicity of the trace, we have that the second term on the right hand side of \eqref{ugly_uniform_convergence} is
\begin{equation*}
\lesssim_{m,p,\mbt} \left\| \left(H_{\eps,0}^{(s)} + 1\right)^{-\frac{p}{2}} \Theta_{\eps}(\xi-\xi_N) \left(H_{\eps,0}^{(s)} + 1\right)^{-\frac{p}{2}}\right \|_{\mc{L}} 
\left\|(H_{\eps,0}^{(s)} + 1)^{\frac{p}{2}}\rho_{\eps,m}(\mbt)(H_{\eps,0}^{(s)} + 1)^{\frac{p}{2}}\right\|_{\mc{L}^1}.
\end{equation*}
Recalling Corollary \ref{quantum_tail_bound} and using Lemma \ref{explicit_terms_Hamiltonian_lemma}, one has that
\begin{equation}
\label{uniform_convergence_equation}
\left| \mathrm{Tr}(\Theta_\eps(\xi - \xi_N) \rho_{\eps,m}(\mbt)) \right| \lesssim_{m,p,\mbt}  \|\xi - \xi_{N}\|_{H^{-s}(\T^{2p})} \to 0
\end{equation}
as $N \to \infty$ uniformly on $\mfBp$. It follows that one has that $\alpha_{\eps,m}^\xi \to \alpha_m^\xi$ uniformly in $\mfBp$ as $\eps \to 0$.  Here we use the dominated convergence theorem. {Here one interpolates using \eqref{non_integrated_bound} to ensure integrability in the simplex.}
\end{proof}

\subsection{Proof of Theorem \ref{bounded_theorem}}
\label{Section:proof_of _bounded_theorem}
We first show the convergence of the quantum state to the classical state. Recall the definition of $\mc{C}_p$ in \eqref{curly_C_definition}. We have the following convergence result.
\begin{lemma}
$A_\eps^\xi(z) \to A^\xi(z)$ as $\eps \to 0$ uniformly in $\xi \in \mc{C}_p$.
\end{lemma}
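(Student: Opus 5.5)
We prove the claim by a standard $\epsilon/3$ argument combining the Duhamel expansions of Lemmas \ref{quantum_Duhamel_expansion_lemma} and \ref{classical_expansion_lemma}, the uniform remainder bounds of Lemmas \ref{bounds_quantum_lemma} and \ref{bounds_classical_lemma}, and the uniform convergence of the explicit terms established in the previous lemma. Fix $z \in \C$ and $p \in \N_0$. For any $M \in \N$ we write
\begin{equation*}
\left|A_\eps^\xi(z) - A^\xi(z)\right| \leq \sum_{m=0}^{M-1} |z|^m \left|\alpha^\xi_{\eps,m} - \alpha^\xi_m\right| + \left|R^\xi_{\eps,M}(z)\right| + \left|R^\xi_M(z)\right|.
\end{equation*}

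First we control the remainders uniformly. For $\xi \in \mc{C}_p$ we have $\|\xi\| \leq 1$, since the operator norm is bounded by the Hilbert--Schmidt norm on $\mfBp$ and $\|I_p\| = 1$. Lemmas \ref{bounds_quantum_lemma} and \ref{bounds_classical_lemma} then give
\begin{equation*}
\left|R^\xi_{\eps,M}(z)\right| + \left|R^\xi_M(z)\right| \leq 2\,\e^{\frac{1}{3}K^3|\mathrm{Re}(z)|\|v\|_{L^\infty}^2} \frac{K^p (K^3\|v\|_{L^\infty}^2 |z|)^M}{M!\,3^M}.
\end{equation*}
This bound is independent of both $\eps$ and $\xi \in \mc{C}_p$, and it tends to zero as $M \to \infty$ because of the factorial. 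Given $\delta > 0$, we therefore choose $M = M(\delta,z,p,K,v)$ so that the remainder contribution is below $\delta/2$ for all $\eps \in (0,1)$ and all $\xi \in \mc{C}_p$.

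With $M$ fixed, only the finitely many explicit terms $m = 0,\ldots,M-1$ remain. The $m = 0$ term is covered by the previous lemma's argument with $m=0$; if it is not literally included there, it follows in the same way from Lemma \ref{convergence_free_cutoff_measure_lemma}, Lemma \ref{theta_convergence_lemma}, and the tail bound built from Corollary \ref{quantum_tail_bound}. By the uniform convergence lemma, each $\alpha^\xi_{\eps,m} \to \alpha^\xi_m$ uniformly in $\xi \in \mc{C}_p$. Uniformity on $\mfBp$ is what that lemma proves, and the single extra element $I_p$ follows from the pointwise convergence lemma. Hence there is $\eps_0$ such that, for $\eps < \eps_0$, the finite sum is below $\delta/2$ uniformly in $\xi$. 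Combining the two estimates gives the claim.

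The only delicate point is the uniformity in $\xi$ of the explicit-term convergence. The uniform convergence lemma supplies this for each $\mbt$ with constants $C(m,p,\mbt)$, and the passage to the simplex integral uses dominated convergence via \eqref{non_integrated_bound}. To upgrade this to uniformity after integrating over $\mbt$, I would bound $\sup_{\xi}|\mathrm{Tr}(\Theta_\eps(\xi)\rho_{\eps,m}(\mbt)) - \int \dd\mu_0\,\Theta(\xi)\mcV^m\chi(\mcN)|$ pointwise in $\mbt$, noting that it tends to zero for each $\mbt$. This supremum is also dominated uniformly in $\eps$ by a constant via \eqref{non_integrated_bound} together with Lemma \ref{bounds_classical_lemma}. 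Dominated convergence on $\mc{U}$, applied to the supremum, then yields $\sup_\xi |\alpha^\xi_{\eps,m} - \alpha^\xi_m| \to 0$. Given this, the rest of the argument is routine.
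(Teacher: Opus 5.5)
Your proof is correct and is essentially the paper's argument: the paper bounds $\sup_{\xi}|A^\xi_\eps(z)-A^\xi(z)|$ by the full series $\sum_m \sup_{\xi}|\alpha^\xi_{\eps,m}-\alpha^\xi_m|\,|z|^m$ and applies dominated convergence in $m$ using Lemmas \ref{bounds_quantum_lemma} and \ref{bounds_classical_lemma}, which is equivalent to your truncation at $M$ with uniform remainder bounds. Your remark on applying dominated convergence over the simplex to $\sup_\xi$, with the $\mbt$-independent bound \eqref{non_integrated_bound}, is a clean way to make precise the paper's terse ``interpolation'' step in the uniform convergence lemma.
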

\begin{proof}
We have that $A_\eps^\xi$ and $A^\xi$ are both analytic by Corollary \ref{analytic_corollary}. In particular, for any $z \in \C$, one has
\begin{equation*}
\sup_{\xi \in \mc{C}_p}\left|A_\eps^\xi(z) - A^\xi(z)\right| \leq \sum_m \sup_{\xi \in \mc{C}_p}\left|\alpha^\xi_{\eps,m} - \alpha^\xi_{m} \right| |z|^m \to 0
\end{equation*}
as $\eps \to 0$. Here we have applied Lemmas \ref{bounds_quantum_lemma} and \ref{bounds_classical_lemma} along with the dominated convergence theorem.
\end{proof}
We obtain the following corollary by recalling \eqref{quantum_state_rewrite} and \eqref{classical_state_rewrite}, and by setting $z=1$.
\begin{corollary}
\label{state_convergence_corollary}
For any $p \in \N_0$, one has $\rho_\eps(\Theta_\eps(\xi))$ converges to $\rho(\Theta(\xi))$ as $\eps \to 0$ uniformly in $\xi \in \mc{C}_p$.
\end{corollary}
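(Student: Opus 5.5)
The plan is to combine the preceding lemma, which gives $A_\eps^\xi(z)\to A^\xi(z)$ uniformly in $\xi\in\mc{C}_p$, with the quotient representations \eqref{quantum_state_rewrite} and \eqref{classical_state_rewrite}, both evaluated at $z=1$. The key observation is that the identity $I_p$ belongs to $\mc{C}_p$, and $\Theta_\eps(I_0)=1$ and $\Theta(I_0)=1$ when $p=0$. Hence the denominators $\tilde\rho_{\eps,1}(1)=A_\eps^{I_0}(1)=\mc{Z}_\eps$ and $\tilde\rho_1(1)=A^{I_0}(1)=z$ are special cases of the lemma. In particular $\mc{Z}_\eps\to z$, which also gives the partition function statement of Theorem \ref{bounded_theorem}.

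I will first check that the classical denominator is strictly positive. We have $z=\int \e^{\mcV}\chi(\mcN)\,\dd\mu_0\ge \mu_0(\mcN\le K)>0$, using $\mcV\ge 0$ (since $v\ge 0$) and the fact that the Gaussian measure charges every $L^2$-ball, because $\mcN$ has a density that is positive near $0$. Since $\mc{Z}_\eps\to z$, there is $\eps_0$ such that $\mc{Z}_\eps\ge z/2$ for all $\eps<\eps_0$.

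Next I write
\begin{equation*}
\rho_\eps(\Theta_\eps(\xi))-\rho(\Theta(\xi)) = \frac{A_\eps^\xi(1)-A^\xi(1)}{\mc{Z}_\eps} + A^\xi(1)\,\frac{z-\mc{Z}_\eps}{\mc{Z}_\eps\, z}.
\end{equation*}
The first term is bounded by $\frac{2}{z}\sup_{\xi\in\mc{C}_p}|A_\eps^\xi(1)-A^\xi(1)|$, which tends to $0$ by the lemma. For the second term I need $\sup_{\xi\in\mc{C}_p}|A^\xi(1)|<\infty$. This follows from Lemma \ref{random_variable_bound_lemma}, which gives $|\Theta(\xi)|\le \|\xi\|\,\mcN^p\le K^p$ on the support of $\chi$, together with $\|\xi\|\le\|\xi\|_{\mcL^2}\le 1$ or $\|I_p\|=1$. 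Alternatively, it follows by summing the bounds of Lemma \ref{bounds_classical_lemma}. The second term is then $\lesssim K^p z^{-2}|z-\mc{Z}_\eps|\to0$, uniformly in $\xi$.

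There is no serious obstacle here. The only point needing care is the uniform lower bound on $\mc{Z}_\eps$, and it comes from the positivity of $z$ together with the convergence of the denominator, which is itself an instance of the uniform-convergence lemma at $p=0$.
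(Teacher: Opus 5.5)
Your proposal is correct and takes the same route as the paper. The paper obtains the corollary in one line, from the uniform convergence $A^\xi_\eps(1)\to A^\xi(1)$ and the quotient representations \eqref{quantum_state_rewrite} and \eqref{classical_state_rewrite} at $z=1$; you have written out what it leaves implicit, namely the denominator as the case $p=0$, $\xi=I_0$, the positivity of $z$, and the uniform bound $|A^\xi(1)|\le K^p z$.
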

\begin{proof}[Proof of Theorem \ref{bounded_theorem}]
Theorem \ref{bounded_theorem} now follows exactly as in the proof of \cite[Theorem 1.4]{RS23}, since the smoothness of the cut-off is not used. The proof is based on the observation that for $p \in \N_0$, one has
\begin{equation*}
\mathrm{Tr}(\gamma_\eps^p\xi) = \rho_\eps(\Theta_\eps(\xi)), \quad \mathrm{Tr}(\gamma^p\xi) = \rho(\Theta(\xi)),
\end{equation*}
Corollary \ref{state_convergence_corollary}, and a duality argument. The convergence of the quantum relative partition function follows similarly. For full details, we direct the reader to  \cite[Section 3.7 and the proof of Theorem 1.4]{RS23}
\end{proof}

\section{Unbounded interaction potentials}
\label{Section:unbounded_potentials}
In this section, we consider the sequence of interaction potentials defined in \eqref{approximation_identity}. In particular, we give a proof of Theorem \ref{delta_theorem}. We start with the following lemma.
\begin{lemma}
\label{state_convergence_delta}
Suppose that $v^N$ is defined as in \eqref{approximation_identity} and $v = \delta$. Suppose further that the cut-off function satisfies Assumption \ref{cut-off_assumption} with $K$ sufficiently small that the Gibbs measure for the focusing local quintic NLS is well defined. Then there is a sequence $N_\eps$ tending to $\infty$ as $\eps \to 0$ such that for any $p \in \N_0$, one has
\begin{equation}
\label{delta_state_convergence}
\lim_{\eps \to 0} \rho_\eps^{N_\eps}(\Theta_\eps(\xi)) = \rho(\Theta(\xi))
\end{equation}
uniformly in $\mc{C}_p$, where we recall \eqref{curly_C_definition}.
\end{lemma}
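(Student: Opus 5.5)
The plan is a diagonal argument that combines Theorem \ref{bounded_theorem}, or more precisely Corollary \ref{state_convergence_corollary}, for each fixed $N$ with convergence of the classical states $\rho^N \to \rho$ as $N \to \infty$. Write $\rho^N$ for the classical Gibbs state built from $v^N$. We split
\begin{equation*}
\bigl|\rho_\eps^{N}(\Theta_\eps(\xi)) - \rho(\Theta(\xi))\bigr| \leq \bigl|\rho_\eps^{N}(\Theta_\eps(\xi)) - \rho^N(\Theta(\xi))\bigr| + \bigl|\rho^N(\Theta(\xi)) - \rho(\Theta(\xi))\bigr|.
\end{equation*}
For each fixed $N$, $v^N$ is bounded and satisfies Assumption \ref{interaction_potential_assumption}. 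Corollary \ref{state_convergence_corollary} therefore shows that the first term tends to $0$ as $\eps \to 0$, uniformly in $\xi \in \mc{C}_p$, for each $p$.

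The second step is to show $\sup_{\xi \in \mc{C}_p} |\rho^N(\Theta(\xi)) - \rho(\Theta(\xi))| \to 0$ as $N \to \infty$. We write $\rho^N(\Theta(\xi)) = \tilde{\rho}^N_1(\Theta(\xi))/z^N$ and handle numerator and denominator separately.

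Pointwise in $\vph$ (with $\vph \in H^{1/2-\eta}$ almost surely, hence in $L^6$ and continuous after the embedding), we have $\mcV^N(\vph) \to \mcV(\vph) = \frac13\|\vph\|_{L^6}^6$. This holds because $(v^N * |\vph|^2) \to |\vph|^2$ in $L^3$ by the approximation of the identity, and the trilinear form is continuous on $L^3 \times L^3 \times L^3$.

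For domination, Young's inequality gives $\mcV^N \leq \frac13\|\vph\|_{L^6}^6$ as in Lemma \ref{uniform_partition_function_corollary}. Lemma \ref{random_variable_bound_lemma} gives $|\Theta(\xi)| \leq \|\vph\|_{L^2}^{2p} \leq K^p$ on the support of $\chi$, for $\xi \in \mc{C}_p$. Hence
\begin{equation*}
\bigl|\Theta(\xi)\bigl(\e^{\mcV^N} - \e^{\mcV}\bigr)\chi(\mcN)\bigr| \leq K^p\,\bigl|\e^{\mcV^N} - \e^{\mcV}\bigr|\chi(\mcN),
\end{equation*}
where the right-hand side is independent of $\xi$ and dominated by $2\e^{\frac13\|\vph\|_{L^6}^6}\chi(\mcN) \in L^1(\dd\mu_0)$ by Proposition \ref{local_normalisability_lemma}. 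Dominated convergence then gives $\int |\e^{\mcV^N}-\e^{\mcV}|\chi(\mcN)\,\dd\mu_0 \to 0$. This yields uniform convergence of the numerators over $\mc{C}_p$ and $z^N \to z$. Since $z > 0$, the ratios converge uniformly.

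The expected main obstacle is the endpoint $K = K_{\mathrm{max}}$. There, the domination relies on integrability of $\e^{\frac13\|\vph\|^6_{L^6}}\chi(\mcN)$ exactly at the critical mass, which Proposition \ref{local_normalisability_lemma} grants under the standing hypothesis. The key point is that domination must use the $N$-uniform bound $\mcV^N \leq \mcV$ from Young's inequality rather than any $L^\infty$ bound on $v^N$.

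Finally, we choose $N_\eps$ diagonally. Pick $\eps_1 > \eps_2 > \dots \to 0$ such that for all $\eps < \eps_N$ and all $p \leq N$,
\begin{equation*}
\sup_{\xi \in \mc{C}_p}\bigl|\rho^N_\eps(\Theta_\eps(\xi)) - \rho^N(\Theta(\xi))\bigr| < 1/N.
\end{equation*}
Set $N_\eps := N$ for $\eps \in [\eps_{N+1}, \eps_N)$. Then $N_\eps \to \infty$, and for each fixed $p$ both error terms vanish uniformly in $\mc{C}_p$, which gives \eqref{delta_state_convergence}. The same diagonal choice works simultaneously for the partition functions, using $\mc{Z}^N_\eps \to z^N$ and $z^N \to z$.
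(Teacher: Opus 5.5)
Your proposal is correct and follows the paper's proof essentially step for step. The paper uses the same diagonal argument: Corollary \ref{state_convergence_corollary} for each fixed $N$, combined with the classical limit $\rho^N \to \rho$, which comes from almost sure convergence $\mcV^N \to \mcV$, the $N$-uniform domination $|\e^{\mcV^N}-\e^{\mcV}|\chi(\mcN) \leq 2\e^{\frac13\|\vph\|_{L^6}^6}\chi(\mcN)$ (via Young's inequality and Proposition \ref{local_normalisability_lemma}), the bound $|\Theta(\xi)|\chi(\mcN)\leq K^p$, and $z^N \to z$. You only make explicit what the paper delegates, namely the $L^3$ approximate-identity argument for $\mcV^N \to \mcV$ (the paper cites \cite{RS23}) and the construction of $N_\eps$; your aside that $H^{1/2-\eta}$ embeds into continuous functions is false, but you never use it.
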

\begin{proof}
To prove \eqref{delta_state_convergence}, we note that by a diagonal argument, it suffices to prove that for fixed $N$, one has
\begin{equation}
\label{delta_1}
\lim_{\eps \to 0}\rho_\eps^N(\Theta_\eps(\xi)) = \rho^N(\Th(\xi))
\end{equation}
uniformly in $\xi \in \mc{C}_p$, and
\begin{equation}
\label{delta_2}
\lim_{N \to \infty} \rho^N(\Th(\xi)) = \rho(\Theta(\xi)).
\end{equation}
We note that \eqref{delta_1} is a consequence of $v^N$ being in $L^\infty$, so we can apply Theorem \ref{bounded_theorem}. By arguing as in \cite[(4.14)--(4.16)]{RS23}, one has that
\begin{equation}
\label{local_interaction_convergence}
\lim_{N \to \infty} \mcV^N = \mcV
\end{equation}
almost surely. We have that $\chi^2=\chi$. By Corollary \ref{uniform_partition_function_corollary} and Proposition \ref{local_normalisability_lemma}, one has that
\begin{equation*}
\left|\e^{\mc{V}^N} - \e^{\mc{V}}\right|\chi(\mcN) \leq 2 \e^{\frac{1}{3}\|\varphi\|_{L^6}^6} \chi(\mcN) \in L^1(\dd \mu_0).
\end{equation*}
Moreover, by Lemma \ref{random_variable_bound_lemma} and the support properties of $\chi$, it follows that
\begin{equation*}
\Theta(\xi) \chi(\mcN) \in L^\infty(\dd \mu_0)
\end{equation*}
uniformly in $\xi \in \mc{C}_p$. So it follows that 
\begin{equation}
\label{classical_Gibbs_convergence}
\lim_{N\to \infty} \int \dd \mu_0 \left|\e^{\mc{V}^N} - \e^{\mc{V}}\right||\Theta(\xi)|\chi(\mcN) = 0.
\end{equation}
Similarly, one can argue that
\begin{equation}
\label{classical_partition_function_convergence}
\lim_{N \to \infty} z^N = z.
\end{equation}
Writing
\begin{equation*}
\rho^N(\Theta(\xi)) -\rho(\Theta(\xi)) = \frac{1}{z} \int \dd \mu_0 \left(\frac{z}{z^N} \e^{\mcV^N} -e^{\mc{V}}  \right) \Theta(\xi) \chi(\mcN),
\end{equation*}
we deduce that \eqref{delta_2} holds. Here we once again use Corollary \ref{uniform_partition_function_corollary}.
\end{proof}
We are now able to prove Theorem \ref{delta_theorem}.
\begin{proof}[Proof of Theorem \ref{delta_theorem}]
By proving Lemma \ref{state_convergence_delta}, we now have the local analogue of Corollary \ref{state_convergence_corollary}. We can thus proceed via the same duality argument mentioned in the proof of Theorem \ref{bounded_theorem}. We direct the reader to the proofs of \cite[Theorems 1.9 and 1.10]{RS23} for full details, since the proofs are not dependent on the smoothness of the cut-off function.
\end{proof}

\appendix
\section{Proof of Proposition \ref{p-Hamiltonian_uniform_proposition}}
\label{estimates_appendix}
In this section, we prove Proposition \ref{p-Hamiltonian_uniform_proposition}. We will proceed via Wick's theorem. We recall the following lemma, see \cite[Lemma B.1]{FKSS17}.
\begin{lemma}[Quantum Wick theorem]
\label{quantum_Wick_theorem_lemma}
Let $h$ be as in Assumption \ref{free_Hamiltonian_assumption}. Suppose that $\mcA_1,\ldots,\mcA_n$ are operators of the form $\mcA_j = \vph_\eps(f_j)$ or $\mcA_j = \vph^*_\eps(f_j)$ for $f_j \in \mfh$. Then one has that
\begin{equation*}
\frac{1}{Z_{\eps,0}}\mathrm{Tr}(\mcA_1 \ldots \mcA_n \e^{-H_{\eps,0}}) = \sum_{\Pi} \prod_{(i,j) \in \Pi} \frac{1}{Z_{\eps,0}}\mathrm{Tr}(\mcA_i \mcA_j \e^{-H_{\eps,0}}), 
\end{equation*}
where the sum is taken over the set of all pairings of $\{1,\ldots,n\}$ and the edges of $\Pi$ are ordered pairs $(i,j)$ with $i < j$.
\end{lemma}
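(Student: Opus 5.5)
The statement to address is the quantum Wick theorem (Lemma \ref{quantum_Wick_theorem_lemma}): for a product of rescaled creation and annihilation operators, the quasi-free expectation factorises into a sum over pairings of two-point functions. I would reduce it to the single-mode case by diagonalising $h$, and then use an inductive commutation argument.

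\textbf{Reduction to modes.} Since $h = \sum_k \lambda_k u_k u_k^*$, write $a_k := \vph_\eps(u_k)$. Then
\begin{equation*}
H_{\eps,0} = \sum_k \lambda_k a_k^* a_k, \qquad [a_k,a_l^*] = \eps\delta_{kl}.
\end{equation*}
Each $\vph_\eps(f_j)$ is the linear combination $\sum_k \overline{\langle u_k, f_j\rangle}\,\cdots$ of the $a_k$ (with the conjugation matching the paper's convention), and each $\vph^*_\eps(f_j)$ is the corresponding combination of the $a_k^*$. Both sides of the claimed identity are multilinear in $(\mcA_1,\ldots,\mcA_n)$. It therefore suffices to prove the formula when each $\mcA_j$ is a single $a_{k}$ or $a_{k}^*$, first with finitely many modes, and then pass to the limit in the expansion.

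To justify that limit, I would check absolute convergence of the expansion. One route is the number-operator bounds of Lemma \ref{number_operator_powers_lemma}, which are independent of $n$-type issues. Alternatively one can verify convergence directly, using that the two-point functions are $\eps(\e^{\eps h}-1)^{-1}$-type kernels, which are bounded operators.

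\textbf{Key algebraic step.} The core is the identity
\begin{equation*}
\e^{-H_{\eps,0}}\, a_k^* = \e^{-\eps\lambda_k}\, a_k^*\, \e^{-H_{\eps,0}}, \qquad \e^{-H_{\eps,0}}\, a_k = \e^{\eps\lambda_k}\, a_k\, \e^{-H_{\eps,0}},
\end{equation*}
which follows from $[H_{\eps,0},a_k^*] = \eps\lambda_k a_k^*$. Take $\mcA_1 = a_k$ or $a_k^*$. I commute $\mcA_1$ successively to the right through $\mcA_2,\ldots,\mcA_n$; each passage produces a commutator, which is either $0$ or $\pm\eps\delta$. This gives
\begin{equation*}
\langle \mcA_1\cdots\mcA_n\rangle = \sum_{j\ge 2} [\mcA_1,\mcA_j]\,\langle \cdots\widehat{\mcA_1}\cdots\widehat{\mcA_j}\cdots\rangle + \langle \mcA_2\cdots\mcA_n \mcA_1\rangle.
\end{equation*}
Cyclicity of the trace, together with the identity above, converts the last term into $c\,\langle \mcA_1\cdots\mcA_n\rangle$, where $c = \e^{\pm\eps\lambda_k}$. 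Solving this equation gives
\begin{equation*}
\langle \mcA_1\cdots\mcA_n\rangle = (1-c)^{-1}\sum_{j\ge 2}[\mcA_1,\mcA_j]\,\langle \cdots\rangle.
\end{equation*}
Applying the same computation with $n=2$ identifies $(1-c)^{-1}[\mcA_1,\mcA_j]$ as exactly $\langle \mcA_1\mcA_j\rangle$. Induction on $n$ then yields the sum over pairings, with the ordered pairs $(1,j)$, $1<j$.

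The odd case gives zero, either through the same recursion or by gauge invariance under $a_k \mapsto \e^{\ii\theta}a_k$.

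\textbf{Main obstacle.} The main obstacle is justifying cyclicity and the manipulations with unbounded operators on infinite-dimensional Fock space. I would first prove everything for finitely many modes, where $\e^{-H_{\eps,0}}$ times polynomials in $a_k, a_k^*$ is trace class by explicit computation. I would then remove the finite-dimensional restriction by truncating $h$ to its first $L$ eigenmodes and letting $L\to\infty$, using dominated convergence. The factorisation $\e^{-H_{\eps,0}} = \bigotimes_k \e^{-\eps\lambda_k a_k^* a_k/\eps}$ over modes makes the normalisation by $Z_{\eps,0}$ consistent throughout this limit.
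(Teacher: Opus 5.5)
The paper does not prove this lemma; it imports it from \cite[Lemma B.1]{FKSS17} without proof. Your proposal supplies a self-contained proof by the standard commutation--cyclicity (Gaudin) argument, and it is correct. With $a_k=\vph_\eps(u_k)$ one has $a_k^*\e^{-H_{\eps,0}}=\e^{\eps\lambda_k}\e^{-H_{\eps,0}}a_k^*$ and $a_k\e^{-H_{\eps,0}}=\e^{-\eps\lambda_k}\e^{-H_{\eps,0}}a_k$. Hence $c\neq 1$ because $\lambda_k\geq 1$, and the recursion closes. The $n=2$ case reproduces e.g.\ $\langle a_k^*a_k\rangle=\eps(\e^{\eps\lambda_k}-1)^{-1}$, which matches the two-point function the paper quotes in the appendix. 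The citation buys brevity. Your route gives an explicit derivation, including the two-point functions, at the cost of the trace and domain bookkeeping, which you correctly identify as the main obstacle.

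Two details in that bookkeeping should be fixed.

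First, do not use Lemma \ref{number_operator_powers_lemma} to control the limit. The paper derives that lemma from the Wick theorem, so using it here is circular. It is also unnecessary, because the identity is at fixed $\eps$. Since $h\geq 1$ gives $H_{\eps,0}\geq\mcN_\eps$, you get $\mathrm{Tr}((\mcN_\eps+1)^q\e^{-H_{\eps,0}})\leq C_q\,\mathrm{Tr}(\e^{-H_{\eps,0}/2})<\infty$. Combined with the standard estimate $\|\mcA_1\cdots\mcA_n(\mcN_\eps+1)^{-n/2}\|\lesssim_n\prod_j\|f_j\|$, this makes the left-hand side continuous in $(f_1,\ldots,f_n)\in\mfh^n$. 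The right-hand side is continuous because the two-point kernels are bounded.

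Second, you cannot literally truncate $h$: on the full Fock space $\e^{-\dd\Gamma_\eps(h_L)}$ is not trace class. Instead, take $f_j$ in the span $\mfh_L$ of finitely many eigenfunctions. Then $\mcF(\mfh)\cong\mcF(\mfh_L)\otimes\mcF(\mfh_L^\perp)$ and $\e^{-H_{\eps,0}}$ factorises accordingly, so the identity holds exactly once the $\mfh_L^\perp$ factor cancels against $Z_{\eps,0}$. The only limit you need is $f_j^{(L)}\to f_j$, which the continuity above handles.
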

Before proceeding with the proof of Proposition \ref{p-Hamiltonian_uniform_proposition}, we first prove that following lemma.
\begin{lemma}
\label{trace_class_p_lift_lemma}
Suppose that $\eps > 0$, $p \in \N$, and that $s \in (0,1/2)$. Then one has that
\begin{equation*}
\frac{1}{Z_{\eps,0}} \mathrm{Tr}(\dd \Gamma_\eps((-\Delta)^s)^p \e^{-H_{\eps,0}}) < \infty.
\end{equation*}
\end{lemma}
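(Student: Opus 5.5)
We prove the bound for fixed $\eps>0$ only; uniformity in $\eps$ is not claimed and is left to Proposition \ref{p-Hamiltonian_uniform_proposition}. The approach is to diagonalise everything in the Fourier basis $u_k$, which is an eigenbasis of both $h$ and $(-\Delta)^s$.

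\textbf{Step 1: reduce to a number-operator bound.} Write $a_k^*, a_k$ for the unscaled creation and annihilation operators of $u_k$, and $n_k := a_k^*a_k$. Then
\begin{equation*}
\dd\Gamma_\eps((-\Delta)^s)=\eps\sum_k (4\pi^2k^2)^s n_k, \qquad H_{\eps,0}=\eps\sum_k \lambda_k n_k .
\end{equation*}
Since $(4\pi^2 k^2)^s\le \lambda_k^s\le \lambda_k$, we have the operator inequality
\begin{equation*}
0\le \dd\Gamma_\eps((-\Delta)^s)\le H_{\eps,0}.
\end{equation*}
Both operators are diagonal in the occupation-number basis, so they commute. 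Hence $\dd\Gamma_\eps((-\Delta)^s)^p\le H_{\eps,0}^p$, and it suffices to show
\begin{equation*}
\mathrm{Tr}(H_{\eps,0}^p\e^{-H_{\eps,0}})<\infty .
\end{equation*}

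\textbf{Step 2: absorb the power into the Gibbs weight.} We use the elementary inequality $x^p\e^{-x}\le C_p\e^{-x/2}$ for $x\ge0$. It gives
\begin{equation*}
\mathrm{Tr}(H_{\eps,0}^p\e^{-H_{\eps,0}})\le C_p\,\mathrm{Tr}(\e^{-H_{\eps,0}/2}).
\end{equation*}

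\textbf{Step 3: the free partition function at half temperature is finite.} The trace factorises over the modes:
\begin{equation*}
\mathrm{Tr}(\e^{-H_{\eps,0}/2})=\prod_k\bigl(1-\e^{-\eps\lambda_k/2}\bigr)^{-1}.
\end{equation*}
This product is finite if and only if $\sum_k \e^{-\eps\lambda_k/2}<\infty$. That sum converges for each fixed $\eps>0$ because $\lambda_k=4\pi^2k^2+1$ by \eqref{eigenvalues}. Finally, $Z_{\eps,0}\ge 1>0$ since the vacuum contributes $1$ to the trace, so dividing by $Z_{\eps,0}$ preserves finiteness, which proves the lemma.

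\textbf{Expected obstacle.} The only point needing care is justifying the functional-calculus comparison $\dd\Gamma_\eps((-\Delta)^s)^p\le H_{\eps,0}^p$ for unbounded operators. This is settled by working on the common eigenbasis of occupation-number vectors: both operators act there as multiplication by explicit eigenvalues, and the inequality holds eigenvalue by eigenvalue.
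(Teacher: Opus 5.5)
Your argument is correct, but it reaches the lemma by a different route from the paper.

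\textbf{The paper's route.} The paper never compares $\dd\Gamma_\eps((-\Delta)^s)$ with $H_{\eps,0}$. It first applies Lemma \ref{free_Hamiltonian_1_lemma} with $m=p$, which computes through the one-particle density matrix $(\e^{\eps h/p}-1)^{-1}$ and uses $\mathrm{Tr}(h^{s-1})<\infty$. This gives finiteness of $\mathrm{Tr}(T)$ for the positive operator $T=\dd\Gamma_\eps((-\Delta)^s)\e^{-\frac{1}{p}H_{\eps,0}}$. It then uses the Schatten embedding $\|T\|_{\mathcal{L}^p}\le\|T\|_{\mathcal{L}^1}$ to get $\mathrm{Tr}(\dd\Gamma_\eps((-\Delta)^s)^p\e^{-H_{\eps,0}})=\|T\|_{\mathcal{L}^p}^p\le(\mathrm{Tr}\,T)^p$. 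The dependence on $\eps$ enters only through the factor $Z_{\eps,0}^{p-1}$, which is finite for fixed $\eps$.

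\textbf{Your route.} You bound $\dd\Gamma_\eps((-\Delta)^s)\le H_{\eps,0}$ eigenvalue by eigenvalue on the joint occupation-number basis. You absorb the power via $x^p\e^{-x}\le C_p\e^{-x/2}$, and finish with the explicit product formula for $\mathrm{Tr}(\e^{-H_{\eps,0}/2})$.

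\textbf{Comparison.} Your argument is completely self-contained: it needs no reduced density matrices, no Schatten-norm manipulations and no earlier lemma. It only needs $s\le 1$, not $s<1/2$. The paper's argument is shorter given Lemma \ref{free_Hamiltonian_1_lemma}, and its first step is already uniform in $\eps$ after normalising by $Z_{\frac{1}{p}\eps,0}$. Both arguments are non-uniform in $\eps$ in the end. That is all the lemma requires, since it only justifies the $N\to\infty$ truncation in the proof of Proposition \ref{p-Hamiltonian_uniform_proposition}; the uniform bound there comes from the Wick-theorem computation.
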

\begin{proof}
{By Lemma \ref{free_Hamiltonian_1_lemma}, one has that
\begin{equation*}
\frac{1}{Z_{\eps,0}}\mathrm{Tr(\dd \Gamma_\eps((-\Delta)^s) \e^{-\frac{1}{p}H_{\eps,0}})} < \infty.
\end{equation*}
It follows {from the embedding of Schatten $p$-spaces and the fact that
\begin{equation*}
Z_{\eps,0}^{p-1} < C(\eps) < \infty
\end{equation*}
for fixed $\eps \in (0,1)$} that
\begin{equation*}
\frac{1}{Z_{\eps,0}^p}\mathrm{Tr}(\dd \Gamma_\eps((-\Delta)^s)^p \e^{-H_{\eps,0}})) < \infty.
\end{equation*}
Here we use that the two operators in the trace commute and are positive. Then we write
\begin{equation*}
\frac{1}{Z_{\eps,0}} \mathrm{Tr}((\dd \Gamma_\eps(-\Delta)^s)^p\e^{-H_{\eps,0}}) = C(\eps) \frac{1}{Z_{\eps,0}^p}\mathrm{Tr}(\dd \Gamma_\eps((-\Delta)^s)^p \e^{-H_{\eps,0}})).
\end{equation*}}
\end{proof}
We are now able to prove Proposition \ref{p-Hamiltonian_uniform_proposition}.

{\begin{proof}[Proof of Proposition \ref{p-Hamiltonian_uniform_proposition}]
By Lemma \ref{trace_class_p_lift_lemma} and the dominated convergence theorem, it suffices to show that
\begin{equation}
\label{power_p_equation}
\frac{1}{Z_{\eps,0}}\mathrm{Tr}\left(\left[\sum_{|k| \leq N} \lambda_k^s \vph_\eps^*(u_k) \vph_\eps(u_k)\right]^p \e^{-H_{\eps,0}}\right)
\end{equation}
is bounded uniformly in $\eps$. Here we recall $\eqref{eigenvalues}$ and use that $(4\pi^2|k^2|)^{s} + 1 \leq C(s) \lambda_k^{s}$.  We denote by $n_k = a^*(u_k)a(u_k)$, where $a^*$ and $a$ are the non-scaled creation and annihilation operators. Using Lemma \ref{quantum_Wick_theorem_lemma}, \eqref{power_p_equation} can be reduced to the sums (with combinatorial constants that depend on $p)$ of products of terms of the form of
\begin{equation*}
\sum_{|k| \leq N}  \mathrm{Tr}\left((\eps \lambda_k^s)^j (n_k)^j\frac{\e^{-H_\eps,0}}{Z_{\eps,0}}\right).
\end{equation*}
Here we have used that the $u_k$ are orthogonal. So it suffices to bound such terms uniformly in $N$ and $\eps \in (0,1)$. One computes with Lemma \ref{quantum_Wick_theorem_lemma} that up to combinatorial constants that depend on $p$, it suffices to bound
\begin{equation*}
\sum_k (\eps \lambda_k^s)^{i}\nu_{k,\eps}^j < C < \infty
\end{equation*}
uniformly in $\eps$, where $1 \leq j \leq i \leq p$. Here we have defined
\begin{equation*}
\nu_{k,\eps} := \frac{1}{\e^{\eps\lambda_k} -1},
\end{equation*}
{and we have used that
\begin{equation*}
\frac{1}{Z_{\eps,0}}\mathrm{Tr}\left( a^*(g)a(f)\e^{-H_{\eps,0}}\right) = \left\langle f, \frac{1}{\e^{\eps h} -1}g\right\rangle,
\end{equation*}}
see \cite[Lemma B.1]{FKSS17}. We split the sum into high and low frequencies. Recall that $\R \ni \lambda_k \sim |k|^2 + 1$. We consider frequencies such that $\eps \lambda_k \leq p$. We bound 
\begin{multline*}
\sum_{\eps \lambda_k \leq p} (\eps \lambda_k^s)^{i}\nu_{k,\eps}^j \leq \sum_{\eps \lambda_k \leq p} \eps^{i-j} \frac{\lambda_k^{is}}{\lambda_k^j} \lesssim  \eps^{i-j} \sum_{\eps \lambda_k \leq p} \frac{\lambda_k^{is}}{\lambda_k^j} \\
= \eps^{i-j} \sum_{\eps \lambda_k {\leq p}} (1+|k|^2)^{is - j} \lesssim \eps^{i-j}\left(1 + \sum_{\eps k^2 \leq p} k^{2is -2j}\right).
\end{multline*}
If $2is - 2j < -1$, the untruncated sum is finite. Otherwise, one bounds (up to a constant depending on $p$) by
\begin{equation*}
\eps^{i-j} \eps^{-\frac{1}{2} - is +j} \leq \eps^0.
\end{equation*}
Here we have compared with the integral of $1/x^{\alpha}$ up to $K = p\eps^{-\frac{1}{2}}$. The final inequality holds because $s \in (0,1/2)$ and $i \geq 1$. For high frequencies, we consider $\eps \lambda_k > p$. In this case, one bounds
\begin{equation*}
    \sum_{\eps \lambda_k > p} (\eps \lambda_k^s)^{i}\nu_{k,\eps}^j \lesssim \eps^i \sum_{\eps \lambda_k > p} |k|^{2si}\e^{-\frac{j\eps k^2}{2}}.
\end{equation*}
Using that the maximum of the function occurs at
\begin{equation*}
\sqrt{\frac{is}{j\eps}} \leq \eps^{-\frac{1}{2}}\sqrt{\frac{p}{2}}
\end{equation*}
and that
\begin{equation*}
\int x^{m} \e^{-ax^2} \lesssim_{m} \frac{1}{a^{(m+1)/2}},
\end{equation*}
we can bound, up to a constant depending on $j \leq p$, by $\eps^{i}\eps^{-is -\frac{1}{2}}$, which is finite because $s < \frac{1}{2}$ and $i \geq 1$.
\end{proof}}

\subsection*{Acknowledgements}
The authors thank Zied Ammari for pointing them to \cite[Corollary 3.8]{AN11}. S.F.~ acknowledges the support of the Deutsche Forschungsgemeinschaft (DFG, German Research Foundation) through CRC/TRR 352 and the University of Tübingen during her Distinguished Postdoctoral Fellowship. A.R.~ acknowledges the support by the Italian Ministry of University and Research (MUR) through
the PRIN 2022 grant ``OpeN and Effective quantum Systems (ONES)''.
\bibliographystyle{abbrv}
\bibliography{refs}

@article{NierAmmari,
author = {Ammari, Zied and Nier, Francis},
year = {2011},
pages = {},
title = {Mean field propagation of infinite dimensional Wigner measures with a
singular two-body interaction potential},
volume = {14},
journal = {Annali della Scuola normale superiore di Pisa, Classe di scienze},
doi = {10.2422/2036-2145.201112_004}
}

@article{RS22,
 author = {Rout, Andrew and Sohinger, Vedran},
 title = {A microscopic derivation of {Gibbs} measures for the 1D focusing cubic nonlinear {Schr{\"o}dinger} equation},
 fjournal = {Communications in Partial Differential Equations},
 journal = {Commun. Partial Differ. Equations},
 issn = {0360-5302},
 volume = {48},
 number = {7-8},
 pages = {1008--1055},
 year = {2023},
 language = {English},
 doi = {10.1080/03605302.2023.2243491},
 zbMATH = {7755399},
 Zbl = {1528.35168}
}

@article{RS23,
 author = {Rout, Andrew and Sohinger, Vedran},
 title = {A microscopic derivation of {Gibbs} measures for the 1D focusing quintic nonlinear {Schr{\"o}dinger} equation},
 fjournal = {SIAM Journal on Mathematical Analysis},
 journal = {SIAM J. Math. Anal.},
 issn = {0036-1410},
 volume = {57},
 number = {5},
 pages = {4680--4755},
 year = {2025},
 language = {English},
 doi = {10.1137/24M1681227},
 zbMATH = {8096463}
}

@article{AN08,
 author = {Ammari, Zied and Nier, Francis},
 title = {Mean field limit for bosons and infinite dimensional phase-space analysis},
 fjournal = {Annales Henri Poincar{\'e}},
 journal = {Ann. Henri Poincar{\'e}},
 issn = {1424-0637},
 volume = {9},
 number = {8},
 pages = {1503--1574},
 year = {2008},
 language = {English},
 doi = {10.1007/s00023-008-0393-5},
 zbMATH = {5519128},
 Zbl = {1171.81014}
}

@article{FKSS17,
 author = {Fr{\"o}hlich, J{\"u}rg and Knowles, Antti and Schlein, Benjamin and Sohinger, Vedran},
 title = {Gibbs measures of nonlinear {Schr{\"o}dinger} equations as limits of many-body quantum states in dimensions {{\({d {{\leqslant}} 3}\)}}},
 fjournal = {Communications in Mathematical Physics},
 journal = {Commun. Math. Phys.},
 issn = {0010-3616},
 volume = {356},
 number = {3},
 pages = {883--980},
 year = {2017},
 language = {English},
 doi = {10.1007/s00220-017-2994-7},
 zbMATH = {6820551},
 Zbl = {1381.81177}
}

@article{LNR15,
     author = {Lewin, Mathieu and Nam, Phan Th\`anh and Rougerie, Nicolas},
     title = {Derivation of nonlinear {Gibbs} measures from many-body quantum mechanics},
     journal = {Journal de l{\textquoteright}\'Ecole polytechnique - Math\'ematiques},
     pages = {65--115},
     publisher = {Ecole polytechnique},
     volume = {2},
     year = {2015},
     doi = {10.5802/jep.18},
     language = {en},
     url = {https://www.numdam.org/articles/10.5802/jep.18/}
}

@article{AN11,
 author = {Ammari, Zied and Nier, Francis},
 title = {Mean field propagation of {Wigner} measures and {BBGKY} hierarchies for general bosonic states},
 fjournal = {Journal de Math{\'e}matiques Pures et Appliqu{\'e}es. Neuvi{\`e}me S{\'e}rie},
 journal = {J. Math. Pures Appl. (9)},
 issn = {0021-7824},
 volume = {95},
 number = {6},
 pages = {585--626},
 year = {2011},
 language = {English},
 doi = {10.1016/j.matpur.2010.12.004},
 zbMATH = {5909638},
 Zbl = {1251.81062}
}

@article{Soh19,
 author = {Sohinger, Vedran},
 title = {A microscopic derivation of {Gibbs} measures for nonlinear {Schr{\"o}dinger} equations with unbounded interaction potentials},
 fjournal = {IMRN. International Mathematics Research Notices},
 journal = {Int. Math. Res. Not.},
 issn = {1073-7928},
 volume = {2022},
 number = {19},
 pages = {14964--15063},
 year = {2022},
 language = {English},
 doi = {10.1093/imrn/rnab132},
 zbMATH = {7597130},
 Zbl = {1501.35379}
}

@misc{DR24,
 author = {Dinh, Van Duong and Rougerie, Nicolas},
 title = {From bosonic canonical ensembles to non-linear {Gibbs} measures},
 year = {2024},
 howpublished = {Preprint, {arXiv}:2412.13597 [math.{AP}] (2024)},
 url = {https://arxiv.org/abs/2412.13597},
 arXiv = {arXiv:2412.13597}
}

@Article{LNR21,
 Author = {Lewin, Mathieu and Phan Th{\`a}nh Nam and Rougerie, Nicolas},
 Title = {Classical field theory limit of many-body quantum {Gibbs} states in 2D and 3D},
 FJournal = {Inventiones Mathematicae},
 Journal = {Invent. Math.},
 ISSN = {0020-9910},
 Volume = {224},
 Number = {2},
 Pages = {315--444},
 Year = {2021},
 Language = {English},
 DOI = {10.1007/s00222-020-01010-4},
 zbMATH = {7355936},
 Zbl = {1467.82008}
}

@misc{LNR18,
     author = {Lewin, Mathieu and Nam, Phan Th\`anh and Rougerie, Nicolas},
     title = {Classical field theory limit of {2D} many-body quantum {Gibbs} states},
     year = {2015},
      url = {https://arxiv.org/abs/2412.05354},
 arXiv = {arXiv:2412.05354}
}

@article{AFP25,
 author = {Ammari, Zied and Farhat, Shahnaz and Petrat, S{\"o}ren},
 title = {Expansion of the many-body quantum {Gibbs} state of the {Bose}-{Hubbard} model on a finite graph},
 fjournal = {Documenta Mathematica},
 journal = {Doc. Math.},
 issn = {1431-0635},
 volume = {30},
 number = {2},
 pages = {475--496},
 year = {2025},
 language = {English},
 doi = {10.4171/DM/1001},
 zbMATH = {8015944}
}

@misc{NZZ25,
 author = {Phan Th{\`a}nh Nam and Rongchan Zhu and Xiangchan Zhu},
 title = {$\Phi^4_3$ Theory from many-body quantum {Gibbs} states},
 year = {2025},
 howpublished = {Preprint, {arXiv}:2502.04884 [math-ph] (2025)},
 url = {https://arxiv.org/abs/2502.04884},
 arXiv = {arXiv:2502.04884}
}

@misc{LNZ26,
 author = {Lin L{\"u} and Phan Th{\`a}nh Nam and Rongchan Zhu},
 title = {Derivation of the focusing {$\Phi^6_1$} measure in the optimal mass regime from many-body quantum {Gibbs} states},
 year = {2026},
 howpublished = {Preprint, {arXiv}:2605.25755 [math-ph] (2026)},
 url = {https://arxiv.org/abs/2605.25755},
 arXiv = {arXiv:2605.25755}
}

@article{FKSS22,
 author = {Fr{\"o}hlich, J{\"u}rg and Knowles, Antti and Schlein, Benjamin and Sohinger, Vedran},
 title = {The mean-field limit of quantum {Bose} gases at positive temperature},
 fjournal = {Journal of the American Mathematical Society},
 journal = {J. Am. Math. Soc.},
 issn = {0894-0347},
 volume = {35},
 number = {4},
 pages = {955--1030},
 year = {2022},
 language = {English},
 doi = {10.1090/jams/987},
 zbMATH = {7571574},
 Zbl = {1504.35477}
}

@article{FKSS22b,
 author = {Fr{\"o}hlich, J{\"u}rg and Knowles, Antti and Schlein, Benjamin and Sohinger, Vedran},
 title = {The {Euclidean} {{\(\phi_2^4\)}} theory as a limit of an interacting {Bose} gas},
 fjournal = {Journal of the European Mathematical Society (JEMS)},
 journal = {J. Eur. Math. Soc. (JEMS)},
 issn = {1435-9855},
 volume = {27},
 number = {11},
 pages = {4399--4468},
 year = {2025},
 language = {English},
 doi = {10.4171/JEMS/1454},
 zbMATH = {8101462}
}

@article{FKSS18,
 author = {Fr{\"o}hlich, J{\"u}rg and Knowles, Antti and Schlein, Benjamin and Sohinger, Vedran},
 title = {A microscopic derivation of time-dependent correlation functions of the {{\(1D\)}} cubic nonlinear {Schr{\"o}dinger} equation},
 fjournal = {Advances in Mathematics},
 journal = {Adv. Math.},
 issn = {0001-8708},
 volume = {353},
 pages = {67--115},
 year = {2019},
 language = {English},
 doi = {10.1016/j.aim.2019.06.029},
 zbMATH = {7096519},
 Zbl = {1421.82022}
}

@article{OST22,
 author = {Oh, Tadahiro and Sosoe, Philippe and Tolomeo, Leonardo},
 title = {Optimal integrability threshold for {Gibbs} measures associated with focusing {NLS} on the torus},
 fjournal = {Inventiones Mathematicae},
 journal = {Invent. Math.},
 issn = {0020-9910},
 volume = {227},
 number = {3},
 pages = {1323--1429},
 year = {2022},
 language = {English},
 doi = {10.1007/s00222-021-01080-y},
 zbMATH = {7495371},
 Zbl = {1501.35375}
}

@article{Bou93,
 author = {Bourgain, J.},
 title = {Fourier transform restriction phenomena for certain lattice subsets and applications to nonlinear evolution equations. {I}: {Schr{\"o}dinger} equations},
 fjournal = {Geometric and Functional Analysis. GAFA},
 journal = {Geom. Funct. Anal.},
 issn = {1016-443X},
 volume = {3},
 number = {2},
 pages = {107--156},
 year = {1993},
 language = {English},
 doi = {10.1007/BF01896020},
 url = {https://eudml.org/doc/58115},
 zbMATH = {404289},
 Zbl = {0787.35097}
}

@article{FKSS20,
 author = {Fr{\"o}hlich, J{\"u}rg and Knowles, Antti and Schlein, Benjamin and Sohinger, Vedran},
 title = {A path-integral analysis of interacting {Bose} gases and loop gases},
 fjournal = {Journal of Statistical Physics},
 journal = {J. Stat. Phys.},
 issn = {0022-4715},
 volume = {180},
 number = {1-6},
 pages = {810--831},
 year = {2020},
 language = {English},
 doi = {10.1007/s10955-020-02543-x},
 zbMATH = {7239489},
 Zbl = {1446.35203}
}

@book{GJ81,
 author = {Glimm, James and Jaffe, Arthur},
 title = {Quantum physics. {A} functional integral point of view},
 year = {1981},
 language = {English},
 howpublished = {New {York} - {Heidelberg} - {Berlin}: {Springer}-{Verlag}. {XX}, 417 p., 43 ill. \$ 26.40 (1981).},
 zbMATH = {3721546},
 Zbl = {0461.46051}
}

@book{Sim74,
 author = {Simon, Barry},
 title = {The {{\({P}({{\phi}} )_{2}\)}} {Euclidean} (quantum) field theory.},
 fseries = {Princeton Series in Physics},
 series = {Princeton Ser. Phys.},
 year = {1974},
 publisher = {Princeton, NJ: Princeton University Press},
 language = {English},
 zbMATH = {5629250},
 Zbl = {1175.81146}
}

@article{NS19,
 author = {Nahmod, Andrea R. and Staffilani, Gigliola},
 title = {Randomness and nonlinear evolution equations},
 fjournal = {Acta Mathematica Sinica. English Series},
 journal = {Acta Math. Sin., Engl. Ser.},
 issn = {1439-8516},
 volume = {35},
 number = {6},
 pages = {903--932},
 year = {2019},
 language = {English},
 doi = {10.1007/s10114-019-8297-5},
 url = {hdl.handle.net/1721.1/125076},
 zbMATH = {7076577},
 Zbl = {1419.35255}
}

@article{OT18,
 author = {Oh, Tadahiro and Thomann, Laurent},
 title = {A pedestrian approach to the invariant {Gibbs} measures for the 2-{{\(d\)}} defocusing nonlinear {Schr{\"o}dinger} equations},
 fjournal = {Stochastic and Partial Differential Equations. Analysis and Computations},
 journal = {Stoch. Partial Differ. Equ., Anal. Comput.},
 issn = {2194-0401},
 volume = {6},
 number = {3},
 pages = {397--445},
 year = {2018},
 language = {English},
 doi = {10.1007/s40072-018-0112-2},
 zbMATH = {7087785},
 Zbl = {1421.35340}
}

@article{BTT18,
 author = {Burq, Nicolas and Thomann, Laurent and Tzvetkov, Nikolay},
 title = {Remarks on the {Gibbs} measures for nonlinear dispersive equations},
 fjournal = {Annales de la Facult{\'e} des Sciences de Toulouse. Math{\'e}matiques. S{\'e}rie VI},
 journal = {Ann. Fac. Sci. Toulouse, Math. (6)},
 issn = {0240-2963},
 volume = {27},
 number = {3},
 pages = {527--597},
 year = {2018},
 language = {English},
 doi = {10.5802/afst.1578},
 zbMATH = {6979711},
 Zbl = {1405.35193}
}

@article{Bou94,
 author = {Bourgain, J.},
 title = {Periodic nonlinear {Schr{\"o}dinger} equation and invariant measures},
 fjournal = {Communications in Mathematical Physics},
 journal = {Commun. Math. Phys.},
 issn = {0010-3616},
 volume = {166},
 number = {1},
 pages = {1--26},
 year = {1994},
 language = {English},
 doi = {10.1007/BF02099299},
 zbMATH = {722078},
 Zbl = {0822.35126}
}

@article{Bou96,
 author = {Bourgain, Jean},
 title = {Invariant measures for the 2D-defocusing nonlinear {Schr{\"o}dinger} equation},
 fjournal = {Communications in Mathematical Physics},
 journal = {Commun. Math. Phys.},
 issn = {0010-3616},
 volume = {176},
 number = {2},
 pages = {421--445},
 year = {1996},
 language = {English},
 doi = {10.1007/BF02099556},
 zbMATH = {906546},
 Zbl = {0852.35131}
}

@article{Bou97,
 author = {Bourgain, Jean},
 title = {Invariant measures for the {Gross}-{Pitaevskii} equation},
 fjournal = {Journal de Math{\'e}matiques Pures et Appliqu{\'e}es. Neuvi{\`e}me S{\'e}rie},
 journal = {J. Math. Pures Appl. (9)},
 issn = {0021-7824},
 volume = {76},
 number = {8},
 pages = {649--702},
 year = {1997},
 language = {English},
 doi = {10.1016/S0021-7824(97)89965-5},
 zbMATH = {1146373},
 Zbl = {0906.35095}
}

@article{LRS,
 author = {Lebowitz, Joel and Rose, Harvey A. and Speer, Eugene},
 title = {Statistical mechanics of the nonlinear {Schr{\"o}dinger} equation},
 fjournal = {Journal of Statistical Physics},
 journal = {J. Stat. Phys.},
 issn = {0022-4715},
 volume = {50},
 number = {3-4},
 pages = {657--687},
 year = {1988},
 language = {English},
 doi = {10.1007/BF01026495},
 zbMATH = {1226422},
 Zbl = {0925.35142}
}

@article{Zhi91,
 author = {Zhidkov, P. E.},
 title = {On an invariant measure for a nonlinear {Schr{\"o}dinger} equation},
 fjournal = {Soviet Mathematics. Doklady},
 journal = {Sov. Math., Dokl.},
 issn = {0197-6788},
 volume = {43},
 number = {2},
 pages = {431--434},
 year = {1991},
 language = {English},
 zbMATH = {56718},
 Zbl = {0806.35168}
}

@misc{DR23,
 author = {Dinh, Van Duong and Rougerie, Nicolas},
 title = {Invariant {Gibbs} measures for 1D {NLS} in a trap},
 year = {2023},
 howpublished = {Preprint, {arXiv}:2301.02544 [math.{AP}] (2023)},
 url = {https://arxiv.org/abs/2301.02544},
 arXiv = {arXiv:2301.02544}
}

@misc{DRTW23,
 author = {Dinh, Van Duong and Rougerie, Nicolas and Tolomeo, Leonardo and Wang, Yuzhao},
 title = {Statistical mechanics of the radial focusing nonlinear {Schr{\"o}dinger} equation in general traps},
 year = {2023},
 howpublished = {Preprint, {arXiv}:2312.06232 [math.{AP}] (2023)},
 url = {https://arxiv.org/abs/2312.06232},
 arXiv = {arXiv:2312.06232}
}

@article{BS96,
 author = {Brydges, David C. and Slade, Gordon},
 title = {Statistical mechanics of the 2-dimensional focusing nonlinear {Schr{\"o}dinger} equation},
 fjournal = {Communications in Mathematical Physics},
 journal = {Commun. Math. Phys.},
 issn = {0010-3616},
 volume = {182},
 number = {2},
 pages = {485--504},
 year = {1996},
 language = {English},
 doi = {10.1007/BF02517899},
 zbMATH = {1002482},
 Zbl = {0867.35090}
}

@article{LW22,
 author = {Liang, Rui and Wang, Yuzhao},
 title = {Gibbs measure for the focusing fractional {NLS} on the torus},
 fjournal = {SIAM Journal on Mathematical Analysis},
 journal = {SIAM J. Math. Anal.},
 issn = {0036-1410},
 volume = {54},
 number = {6},
 pages = {6096--6118},
 year = {2022},
 language = {English},
 doi = {10.1137/21M1445946},
 zbMATH = {7620347},
 Zbl = {1502.35154}
}

@article{Nel73,
 author = {Nelson, Edward},
 title = {The free {Markoff} field},
 fjournal = {Journal of Functional Analysis},
 journal = {J. Funct. Anal.},
 issn = {0022-1236},
 volume = {12},
 pages = {211--227},
 year = {1973},
 language = {English},
 doi = {10.1016/0022-1236(73)90025-6},
 zbMATH = {3429933},
 Zbl = {0273.60079}
}

@misc{Nel73b,
 author = {Nelson, Edward},
 title = {Probability theory and {Euclidean} field theory},
 year = {1973},
 language = {English},
 howpublished = {Constr. {Quant}. {Field} {Theory}, {Lect}. {Notes} {Phys}. 25, 94-124 (1973).},
 zbMATH = {3573034},
 Zbl = {0367.60108}
}

@article{AFS24,
 author = {Ammari, Zied and Farhat, Shahnaz and Sohinger, Vedran},
 title = {Almost sure existence of global solutions for general initial value problems},
 fjournal = {Advances in Mathematics},
 journal = {Adv. Math.},
 issn = {0001-8708},
 volume = {453},
 pages = {61},
 note = {Id/No 109805},
 year = {2024},
 language = {English},
 doi = {10.1016/j.aim.2024.109805},
 zbMATH = {7915891},
 Zbl = {1548.35016}
}

@article{CFL16,
 author = {Carlen, Eric A. and Fr{\"o}hlich, J{\"u}rg and Lebowitz, Joel},
 title = {Exponential relaxation to equilibrium for a one-dimensional focusing non-linear {Schr{\"o}dinger} equation with noise},
 fjournal = {Communications in Mathematical Physics},
 journal = {Commun. Math. Phys.},
 issn = {0010-3616},
 volume = {342},
 number = {1},
 pages = {303--332},
 year = {2016},
 language = {English},
 doi = {10.1007/s00220-015-2511-9},
 zbMATH = {6545571},
 Zbl = {1341.35148}
}

@article{BDNY24,
 author = {Bringmann, Bjoern and Deng, Yu and Nahmod, Andrea R. and Yue, Haitian},
 title = {Invariant {Gibbs} measures for the three dimensional cubic nonlinear wave equation},
 fjournal = {Inventiones Mathematicae},
 journal = {Invent. Math.},
 issn = {0020-9910},
 volume = {236},
 number = {3},
 pages = {1133--1411},
 year = {2024},
 language = {English},
 doi = {10.1007/s00222-024-01254-4},
 zbMATH = {7846295},
 Zbl = {1541.35502}
}

@article{DNY24,
 author = {Deng, Yu and Nahmod, Andrea R. and Yue, Haitian},
 title = {Invariant {Gibbs} measures and global strong solutions for nonlinear {Schr{\"o}dinger} equations in dimension two},
 fjournal = {Annals of Mathematics. Second Series},
 journal = {Ann. Math. (2)},
 issn = {0003-486X},
 volume = {200},
 number = {2},
 pages = {399--486},
 year = {2024},
 language = {English},
 doi = {10.4007/annals.2024.200.2.1},
 zbMATH = {7995674},
 Zbl = {1559.35324}
}

@article{Bri24,
 author = {Bringmann, Bjoern},
 title = {Invariant {Gibbs} measures for the three-dimensional wave equation with a {Hartree} nonlinearity. {II}: {Dynamics}},
 fjournal = {Journal of the European Mathematical Society (JEMS)},
 journal = {J. Eur. Math. Soc. (JEMS)},
 issn = {1435-9855},
 volume = {26},
 number = {6},
 pages = {1933--2089},
 year = {2024},
 language = {English},
 doi = {10.4171/JEMS/1317},
 zbMATH = {7855540},
 Zbl = {1541.60042}
}

@article{Bri22,
 author = {Bringmann, Bjoern},
 title = {Invariant {Gibbs} measures for the three-dimensional wave equation with a {Hartree} nonlinearity. {I}: measures},
 fjournal = {Stochastics and Partial Differential Equations. Analysis and Computations},
 journal = {Stoch. Partial Differ. Equ., Anal. Comput.},
 issn = {2194-0401},
 volume = {10},
 number = {1},
 pages = {1--89},
 year = {2022},
 language = {English},
 doi = {10.1007/s40072-021-00193-y},
 zbMATH = {7507356},
 Zbl = {1491.60095}
}

@article{DNY22,
 author = {Deng, Yu and Nahmod, Andrea R. and Yue, Haitian},
 title = {Random tensors, propagation of randomness, and nonlinear dispersive equations},
 fjournal = {Inventiones Mathematicae},
 journal = {Invent. Math.},
 issn = {0020-9910},
 volume = {228},
 number = {2},
 pages = {539--686},
 year = {2022},
 language = {English},
 doi = {10.1007/s00222-021-01084-8},
 zbMATH = {7514024},
 Zbl = {1506.35208}
}

@book{OTT24,
 author = {Oh, Tadahiro and Okamoto, Mamoru and Tolomeo, Leonardo},
 title = {Focusing {{\(\Phi^4_3\)}}-model with a {Hartree}-type nonlinearity},
 fseries = {Memoirs of the American Mathematical Society},
 series = {Mem. Am. Math. Soc.},
 issn = {0065-9266},
 volume = {1529},
 isbn = {978-1-4704-7193-4; 978-1-4704-8008-0},
 year = {2024},
 publisher = {Providence, RI: American Mathematical Society (AMS)},
 language = {English},
 doi = {10.1090/memo/1529},
 zbMATH = {7975522},
 Zbl = {1558.35006}
}

@article{GH21,
 author = {Gubinelli, Massimiliano and Hofmanov{\'a}, Martina},
 title = {A {PDE} construction of the {Euclidean} {{\(\Phi^4_3\)}} quantum field theory},
 fjournal = {Communications in Mathematical Physics},
 journal = {Commun. Math. Phys.},
 issn = {0010-3616},
 volume = {384},
 number = {1},
 pages = {1--75},
 year = {2021},
 language = {English},
 doi = {10.1007/s00220-021-04022-0},
 zbMATH = {7348133},
 Zbl = {1514.81190}
}

@article{BG20,
 author = {Barashkov, N. and Gubinelli, M.},
 title = {A variational method for {{\(\Phi^4_3\)}}},
 fjournal = {Duke Mathematical Journal},
 journal = {Duke Math. J.},
 issn = {0012-7094},
 volume = {169},
 number = {17},
 pages = {3339--3415},
 year = {2020},
 language = {English},
 doi = {10.1215/00127094-2020-0029},
 zbMATH = {7292332},
 Zbl = {1508.81928}
}

@misc{ARS24,
 author = {Ammari, Zied and Rout, Andrew and Sohinger, Vedran},
 title = {Gibbs measures as local equilibrium {KMS} states for focusing nonlinear {Schr{\"o}dinger} equations},
 year = {2024},
 howpublished = {Preprint, {arXiv}:2412.05354 [math.{AP}] (2024)},
 url = {https://arxiv.org/abs/2412.05354},
 arXiv = {arXiv:2412.05354}
}

@article{AS23,
 author = {Ammari, Zied and Sohinger, Vedran},
 title = {Gibbs measures as unique {KMS} equilibrium states of nonlinear {Hamiltonian} {PDEs}},
 fjournal = {Revista Matem{\'a}tica Iberoamericana},
 journal = {Rev. Mat. Iberoam.},
 issn = {0213-2230},
 volume = {39},
 number = {1},
 pages = {29--90},
 year = {2023},
 language = {English},
 doi = {10.4171/RMI/1366},
 zbMATH = {7686640},
 Zbl = {1514.35403}
}

@article{RSTW25,
 author = {Robert, Tristan and Seong, Kihoon and Tolomeo, Leonardo and Wang, Yuzhao},
 title = {Focusing {Gibbs} measures with harmonic potential},
 fjournal = {Annales de l'Institut Henri Poincar{\'e}. Probabilit{\'e}s et Statistiques},
 journal = {Ann. Inst. Henri Poincar{\'e}, Probab. Stat.},
 issn = {0246-0203},
 volume = {61},
 number = {1},
 pages = {571--598},
 year = {2025},
 language = {English},
 doi = {10.1214/23-AIHP1435},
 zbMATH = {8013071},
 Zbl = {1560.60110}
}

@article{OQ13,
 author = {Oh, Tadahiro and Quastel, Jeremy},
 title = {On invariant {Gibbs} measures conditioned on mass and momentum},
 fjournal = {Journal of the Mathematical Society of Japan},
 journal = {J. Math. Soc. Japan},
 issn = {0025-5645},
 volume = {65},
 number = {1},
 pages = {13--35},
 year = {2013},
 language = {English},
 doi = {10.2969/jmsj/06510013},
 zbMATH = {6152426},
 Zbl = {1274.60214}
}

@article{BTT13,
 author = {Burq, Nicolas and Thomann, Laurent and Tzvetkov, Nikolay},
 title = {Long time dynamics for the one dimensional non linear {Schr{\"o}dinger} equation},
 fjournal = {Annales de l'Institut Fourier},
 journal = {Ann. Inst. Fourier},
 issn = {0373-0956},
 volume = {63},
 number = {6},
 pages = {2137--2198},
 year = {2013},
 language = {English},
 doi = {10.5802/aif.2825},
 zbMATH = {6325429},
 Zbl = {1317.35226}
}

@misc{JR25,
 author = {Jougla, Lucas and Rougerie, Nicolas},
 title = {$\Phi^4_2$ theory limit of a many-body bosonic free energy},
 year = {2025},
 howpublished = {Preprint, {arXiv}:2512.10704 [math.{AP}] (2025)},
 url = {https://arxiv.org/abs/2512.10704},
 arXiv = {arXiv:2512.10704}
}

@article{FKSS23,
 author = {Fr{\"o}hlich, J{\"u}rg and Knowles, Antti and Schlein, Benjamin and Sohinger, Vedran},
 title = {Interacting loop ensembles and {Bose} gases},
 fjournal = {Annales Henri Poincar{\'e}},
 journal = {Ann. Henri Poincar{\'e}},
 issn = {1424-0637},
 volume = {24},
 number = {5},
 pages = {1439--1503},
 year = {2023},
 language = {English},
 doi = {10.1007/s00023-022-01238-1},
 url = {wrap.warwick.ac.uk/168693/1/WRAP-Interacting-loop-ensembles-Bose-gases-23.pdf},
 zbMATH = {7681332}
}

@misc{Nik25,
 author = {Niko Nikov},
 title = {Phase transitions for fractional {$\Phi^3_d$} on the torus},
 year = {2025},
 howpublished = {Preprint, {arXiv}:2501.17669 [math.{PR}] (2025)},
 url = {https://arxiv.org/abs/2501.17669},
 arXiv = {arXiv:2501.17669}
}

@misc{TV26,
 author = {Arnaud Triay and Fran{\c{c}}ois L. A. Visconti},
 title = {Derivation of the {3D} quintic {Gross}--{Pitaevskii} equation},
 year = {2026},
 howpublished = {Preprint, {arXiv}:2602.06481 [math-ph] (2026)},
 url = {https://arxiv.org/abs/2602.06481},
 arXiv = {arXiv:2602.06481}
}

@misc{CKRTG26,
 author = {Cristina Caraci and Antti Knowles and Alessio Ranallo and Pedro Torres Giesteira},
 title = {The {Euclidean} {$\varphi^4_2$} theory as a limit of an inhomogeneous {Bose} gas},
 year = {2026},
 howpublished = {Preprint, {arXiv}:2603.12241 [math-ph] (2026)},
 url = {https://arxiv.org/abs/2603.12241},
 arXiv = {arXiv:2603.12241}
}

@misc{GSS26,
 author = {Spyros Garouniatis and Grega Saksida and Vedran Sohinger},
 title = {The large-mass limit of interacting quantum gases in the continuum},
 year = {2026},
 howpublished = {Preprint, {arXiv}:2604.23020 [math-ph] (2026)},
 url = {https://arxiv.org/abs/2604.23020},
 arXiv = {arXiv:2604.23020}
}

@misc{NZZ26,
 author = {Phan Th{\`a}nh Nam and Rongchan Zhu and Xiangchan Zhu},
 title = {Derivation of {Gibbs} measure from {Gibbs} state with the fractional {Bessel} interaction in {Two} {Dimensions}},
 year = {2026},
 howpublished = {Preprint, {arXiv}:2604.21583 [math-ph] (2026)},
 url = {https://arxiv.org/abs/2604.21583},
 arXiv = {arXiv:2604.21583}
}

@book{BR81,
 author = {Bratteli, Ola and Robinson, Derek W.},
 title = {Operator algebras and quantum statistical mechanics. {II}: {Equilibrium} states. {Models} in quantum statistical mechanics},
 fseries = {Texts and Monographs in Physics},
 series = {Texts Monogr. Phys.},
 issn = {0172-5998},
 year = {1981},
 publisher = {Springer, New York, NY},
 language = {English},
 zbMATH = {3725098},
 Zbl = {0463.46052}
}

\end{document}